
\documentclass[11pt,onecolumn,oneside]{IEEEtran}

\IEEEoverridecommandlockouts                              



\usepackage{graphics} 
\usepackage{color}
\usepackage{bm}
\usepackage{epstopdf}
\usepackage[geometry]{ifsym}
\usepackage{ifsym}
\usepackage{bbm}
\usepackage{amscd}
\usepackage{dsfont}
\usepackage{amsmath}
\usepackage{epsfig}
\usepackage{graphics}
\usepackage{psfrag}
\usepackage{rotating}
\usepackage{url}
\usepackage{lscape}
\usepackage{color}
\usepackage[caption=false,font=footnotesize]{subfig}
\usepackage{epstopdf}
\usepackage{amsthm}
\usepackage{pgfplots}
\usepackage{tikz}
\usepackage{bm}
\usepackage{array}
\usetikzlibrary{arrows,shapes,decorations,backgrounds}
\usepackage[margin=1in,footskip=0.25in]{geometry}
\usepackage{hhline}
\newtheorem{theorem}{Theorem}
\newtheorem{lemma}[theorem]{Lemma}

\newtheorem{definition}[theorem]{Definition}
\newtheorem{proposition}[theorem]{Proposition}




\usepackage{amsfonts}
\usepackage{amsmath}
\usepackage{algorithm}
\usepackage{algorithmicx}
\usepackage{algorithm,algpseudocode}
\usepackage{verbatim}
\usepackage{geometry}
\usepackage{float}
\usepackage{amssymb}
\usepackage{graphicx}
\usepackage{epstopdf}
\usepackage{fancyhdr}
\usepackage{blindtext}
\usepackage{chngcntr}
\usepackage{afterpage}
\usepackage{pifont}
\usepackage{etoolbox}
\usepackage{pdfpages}
\usepackage{mathtools, cuted}
\usepackage{lipsum, color}
\usepackage[justification=centering]{caption}
\usepackage{hyperref}

\ifCLASSINFOpdf
\else
\fi

\begin{document}
\pdfoutput=1
\title{\LARGE \bf
Stability Analysis of TDD Networks Revisited: \\ A trade-off between Complexity and Precision
}

\bibliographystyle{IEEEtran}
\author{\IEEEauthorblockN{ 
 Rita Ibrahim$^{+*}$, Mohamad Assaad$^*$, Berna Sayrac$^+$, and Anthony Ephremides$^\times$\\
\small{$^+$Orange Labs, France.}\\
\small{$^*$TCL chair on 5G, Laboratoire des Signaux et Systemes, CentraleSupelec, France.}\\
\small{$^\times$ Institute for Systems Research, University of Maryland, USA. }\\
\small{E-mails: rita.ibrahim, berna.sayrac@orange.com, Mohamad.Assaad@centralesupelec.fr, etony@umd.edu}
}}

\maketitle

\thispagestyle{empty}
\pagestyle{empty}

\bibliographystyle{IEEEtran}

\begin{abstract} 
In this paper, we revisit the stability region of a cellular time division duplex (TDD) network. We characterize the queuing stability region of a network model that consists of two types of communications: (i) users communicating with the base station  and (ii) users communicating with each other by passing through the base station. When a communication passes through the base station (BS) then a packet cannot be delivered to the destination UE until it is first received by the BS queue from the source UE. Due to the relaying functionality at the BS level, a coupling is created between the queues of the source users and the BS queues. In addition, contrarily to the majority of the existing works where an ON/OFF model of transmission is considered, we assume a link adaptation model (i.e. multiple rate model) where the bit rate of a link depends on its radio conditions. The coupling between the queues as well as the multiple rate model are the main challenges that highly increase the complexity of the stability region characterization. In this paper, we propose a simple approach that permits to overcome these challenges and to provide a full characterization of the exact stability region as a convex polytope with a finite number of vertices. An approximated model is proposed for reducing the computational complexity of the exact stability region. For the multi-user scenario, a trade-off is established between the complexity and the preciseness of the approximated stability region compared to the exact one. Furthermore, numerical results are presented to corroborate our claims.

\end{abstract}
\begin{IEEEkeywords}
Stability analysis, Relays, Queuing Theory, TDD cellular networks.
\end{IEEEkeywords}

\section{Introduction} 
Time division duplex (TDD) systems use a single frequency band for both uplink (UL) and downlink (DL) traffic which offers the flexibility to adjust the UL and DL channels based on their respective traffic demand. TDD provides dynamic UL and DL bandwidth allocation which allows the network to combine spectrum bands and achieve greater spectral efficiency when customers need it most. With the growth of various new applications with high traffic and data rate demand, new techniques have been investigated to fulfill the requirements of future cellular networks. The severity of this situation will increase with fifth generation (5G) cellular networks that demand the support of higher data rates and lower latency. By enabling a seamlessly adaptation of the spectrum bands and an efficient UL/DL load asymmetry, TDD can be used for improving the capacity of dense area with high mobile data demand. Besides, motivated by the use of Massive MIMO, TDD is mainly adopted in 5G systems. 

The physical layer study of cellular networks provides the information-theoretic capacity of these scenarios while assuming that the queues are saturated (not empty). In this case, the rate region is defined as the set of the achievable bit rates of the system of saturated queues. However, integrating the network level (e.g. \cite{Rongstability} and \cite{SimeoneStableThroughput}) to the study has demonstrated important gains in terms of throughput and delay. Under bursty traffic arrivals, the stability region becomes a relevant measure of the queues' bit rates. The stability region is defined as the set of arrival rates that can be supported by the network under the stability constraint of the queues (i.e. as defined in section \ref{sec:System-Model}, a queue is considered stable if its mean arrival rate is lower than its mean departure rate). The stability region has several definitions as one can see in \cite{szpankowski1992stabilty}. The difference between capacity region and stability region has been the subject of interests of several studies (e.g. \cite{GeorgiadisStabilityCapacity} and \cite{LuoThroughputCapacity}).

Queuing stability region is the performance metric used in this work. The consideration of bursty traffic and queuing analysis is motivated by the following scenario. In TDD cellular networks, any pair of users communicate with each other by passing through the base station (BS) where the UL and DL parts of this communication compete for the same spectrum. Therefore, the BS plays the role of a relay that receives packets from the first device (source) on the Uplink (UL) stacks them in a buffer and then transmits them to the second device (destination) on the Downlink (DL) using an opportunistic scheduling algorithm. In a TDD system, the UL and DL transmissions are performed on different timeslots, hence a packet cannot be transmitted on the downlink (BS-to-destination) until it is first received by the BS on the uplink (source-to-BS). During some time-slots, even when the channel state of the BS-destination link (DL) is favorable, it may happen that the buffer at the BS is empty (i.e no need to schedule this DL). This coupling cannot be captured by a simple performance analysis at the physical layer. Hence, a traffic pattern must be included in the analysis in order to provide a more realistic evaluation of the network capacity. The particularity of this study is the simultaneous consideration of UL and DL communications which implies a relaying operation at the BS level and a coupling between the system of queues (i.e. contrarily to the majority of the existing works where either UL or DL communications are studied).

Due to the relaying aspects at the BS level, it is interesting to examine the existing works regarding cooperative and relaying networks. The approach of cooperative relaying and multi-hop wireless networks has been studied at: the physical layer (e.g. \cite{Sendonaris2003} and \cite{Mohajer2009}) and the traffic layer (e.g. \cite{Sadek2007} and \cite{Rongstability}).  However, these works consider only simple scenarios (i.e. three-node network or one relay and one destination scenario). Indeed, the majority of the works in the area of relaying networks consider an ALOHA random access system: (i) \cite{Rongstability} and \cite{Pappas2016} use stochastic dominance technique, (ii) \cite{Kashef2016} considers energy-harvesting capabilities and computes the relaying parameter that maximizes the stable throughput rate of the source and (iii)  \cite{hong2007stability} proposes two queuing strategies for cooperative networks. Furthermore, wireless multiple-access system with probabilistic channel receptions was studied in \cite{Rong2012Cooperative} for a network of $N$ sources communicating with one destination; the impact of a protocol-level cooperation is investigated in such wireless network. Authors in \cite{Sadek2007} have introduced a new cognitive multiple-access protocol in a relay assisted network. Both \cite{Sadek2007} and \cite{Rong2012Cooperative} take into account the time division multiple access (TDMA) as a scheduling policy. Furthermore, the majority of these studies consider simple networks (i.e. consisting of one source to destination communication aided by a relay node) due to the difficulty that one can face while characterizing the stability region for multi-user cooperative networks.

Both the bursty traffic and the relaying role of the BS lead to a coupling between the queues in the system such that the service rate of each queue will depend on the state of the other queues. The stability region characterization of the system of interacting queues has been a challenging problem and has received the attention of researchers (i.e. especially for the case of multiple-access channel networks). For the slotted  ALOHA system, several approximated models were proposed (e.g. \cite{SaadawiAnalysis}, \cite{Sidi1984} and \cite{Bordenave2012}) and some bounds for the ergodicity region have been obtained (e.g. \cite{szpankowski1984ergodicity} and \cite{Szpankowski1988_}). Several tools were proposed in the literature to overcome the challenges of the system of interacting queues. The stochastic dominance technique was introduced in \cite{Rao2006}. It consists of elaborating a simple way for capturing the interaction between the queues by studying simple auxiliary systems of queues that dominate the system of interest. This technique is especially applied to simple scenarios in order to: (i) obtain the stability region (e.g. \cite{Jeon2011}, \cite{Kompella2011} and  \cite{Pappas2013}) and (ii) characterize the stability bounds for the system of interacting queues (e.g. \cite{Pappas2012}). The stability of a coupled system of queues where the service rate of each queue depends on the number of customers in all the other queues was extended in \cite{Borst2008}. Their stability approach consists of deriving marginal drift criteria for multi-class birth and death processes. This analysis was adopted in \cite{vitale2015Perf} for proposing a coupled processors model that was applied for the study of underlay device-to-device communications in \cite{Vitale2015Modeling}.  Furthermore, to deal with the coupling between the queues, authors in \cite{CIARDO19933} and \cite{Mainkar1995} propose a  decomposition model and an iterative approach in Stochastic Petri Nets.

The majority of the previous works consider ALOHA random access (i.e. without a centralized scheduling) contrariwise to our scheme where a TDD cellular system with a centralized TDMA scheduling is investigated. Preceding efforts were limited to simple scenarios (i.e. three-node scenario or single destination scenario) with an elementary ON/OFF model of transmission. However, in this work we analytically characterize the stability region of TDD cellular systems (i.e. both UL and DL communications) with a TDMA scheduling and where both multi-user and multiple rate model are taken into account.

In this paper, we consider the case of a discrete-time slotted TDD system with $2K+U$ users (UE), each of which has a buffer of infinite capacity. $U$ users communicate with the BS (denoted by \textit{UE2BS communications} in the sequel) and $K$ pairs of users communicate with each other by passing through the BS (denoted by \textit{UE2UE communications} in the sequel). Under a TDMA policy, we characterize the stability region of the network. As we mentioned before, what makes the problem challenging is the BS functionality as a relay between the UL and the DL of the \textit{UE2UE communications} (i.e. a packet cannot be transmitted on the DL if it is not received by the BS on the UL). Hence, the performance of the queues will depend on the state of the queues (empty or not) at the BS level.

Most of the works concerning the analysis of stability region in cellular network are focused on the strict consideration of either uplink or downlink communications. However, in our scenario we consider \textit{UE2UE communications} that take place on both uplink and downlink in such a way that the BS plays the role of a relay between the source UE and the destination UE. We revisit the queuing analysis of such TDD scenario in order to characterize its stability region as a convex polytope with a limited number of vertices and to derive the existing threshold between the complexity and the precision of this characterization. The key particularities of this work are summarized as follows:
\begin{itemize}
\item Motivated by capturing the relaying effect at the BS level, a bursty traffic is considered and the performance of the network is evaluated in terms of stability region based on a queuing theory approach. As shown afterward in this paper, the queuing analysis approach brings important additional outcomes (i.e. in terms of network capacity) compared to the physical layer only approach. We illustrate the difference in terms of performance evaluation between these two approaches: (i) approach that takes into account the impact of coupling between the queues (the service rate of a queue will depend not only on the distance, fading, bit error rate and transmission power but also on the state of the other queues) and (ii) approach that does not take into account the coupling between the queues (i.e. assuming full buffer queues where the bit rate is considered at the physical layer without any bursty traffic). 

\item The relaying approach in this scenario differs from the preceding works regarding the cooperative relaying networks as it follows: (i) the presence of the BS (i.e. as a relay) is mandatory for enabling the communication between two users of the network contrarily to previous works where the relay is added to the network in order to use the spatial diversity and to enhance the performance of the network by enabling cooperative communication, (ii) the majority of the previous works considers simple scenario (i.e. single relay and single destination scenario or three-node models) whereas in this work multi-user scenario is considered where the centralized entity (BS), that has the global state information of the network, plays the role of relay in the network, (iii) a coupling exists between the queues such that the service rate of a queue depend on the state of the other queues (i.e. especially on the queues at the BS level and not on the sources' queues) and (iv) the TDD cellular scenario with a centralized TDMA scheduling differs from the existing studies focused on ALOHA random access system.

\item We assume a link adaptation model rather than single rate model (ON/OFF model). It corresponds to the matching of the bit rate to the radio conditions (i.e. SNR) of the link. This realistic assumption makes the analysis more complicated as one can see in the sequel.

\item In the literature, the stability region analysis of interacting queues with relaying functionality was mainly analyzed either: (i) based on stochastic dominance technique for finding the stability region (mainly for aloha systems with three-user scenario and ON/OFF transmission model) or at least describing the necessary and sufficient conditions for queuing stability or (ii) via numerical and simulation results. However, in this paper we approach the problem in a different way and we transform the multidimensional Markov Chain that models the network to multiple one dimensional (1D) Markov Chain models (as detailed in the sequel). After evaluating the different 1D Markov Chains, we characterize the exact stability region as a convex polytope with a limited number of vertices. This exact stability region turns to have a high complexity. Therefore, we propose approximated models characterized by having an explicit analytic form and a low complexity while a high precision is guaranteed (i.e compared to the exact stability region). We start by considering the simple scenario that consists of one \textit{UE2UE} and one \textit{UE2BS communication}. In theorem \ref{th_ss_cellular_real}, we characterize the exact stability region of this simple scenario which turns to have a non explicit complex form. Hence, we propose a closed-form tight upper bound of the exact stability region (see theorem \ref{th3_gen}). Moreover, for the general scenario of multiple \textit{UE2UE} and \textit{UE2BS communications},  we evaluate the exact stability region and we discuss its computational complexity (see theorem \ref{th1_oneRate_MU}). Thus, we propose two techniques to respectively reduce the complexity of the exact stability region (see theorems \ref{th2_oneRate_MU} and \ref{th3_oneRate_MU}). We deduce the trade-off between the complexity and the precision of the stability region computation.
  
\end{itemize}
  
The remainder of this paper is organized as follows. Section \ref{sec:System-Model}
describes the system model. The stability region analysis for a three-UEs simple case is provided in Section \ref{sec:3UE_scenario} in order to have a clear perception of the advantages of overlay D2D compared to cellular communications. The stability region of the multi-UE general case is presented in Section \ref{sec:MUE_scenario} as a simple convex polytope. Numerical results are presented in Section \ref{sec:Numerical}. Section  \ref{sec:Conclusion} concludes the paper whereas the proofs are provided in the appendices. \footnote{Part of the simple scenario case has been presented in the paper: "Overlay D2D vs. Cellular communications: a stability region analysis" published in ISWCS 2017} \\

\section{System Model\label{sec:System-Model}}

In this study, the considered scenario consists of two sets of communications: (i) set of \textit{UE2BS communications} (where users are transmitting packets to the BS e.g. to contact users in other cells or to have internet connections, etc.) and  (ii) set of \textit{UE2UE communications} (between pair of UEs).  \textit{UE2UE communications} are performed by passing through the BS.

We consider a single cell scenario with $K$ \textit{UE2UE communications} and $U$ \textit{UE2BS communications}. In other terms, we suppose $K+U$ communications and $2K+U$ users in the cell. We denote by UE$_{i,s}$ and UE$_{i,d}$ the pair source and destination users corresponding to the  $i^{th}$ \textit{UE2UE communication} (for all $1\leq i\leq K$) and  by UE$_{j,u}$ the user corresponding to the $j^{th}$ uplink \textit{UE2BS communication} (for all $1\leq j\leq U$). Let us describe the cellular scenario illustrated in Figure \ref{fig.SystemModel_MU}.

In this scenario, if two devices want to communicate with each other, they  must exchange their packets through the BS. The communication between UE$_{i,s}$ and UE$_{i,d}$ is performed through the BS (for all $1\leq i\leq K$) such that the BS transmits packets to the destination user UE$_{i,d}$ only if it receives them from the source user UE$_{i,s}$. Therefore, for each \textit{ UE2UE communication} corresponds a buffer at the base station that could be empty during some time slots.  Hence, a coupling exists between the queues such that the service rate of the users' queues depend on the state being empty or not of the BS which makes the queuing stability analysis challenging. In this scenario the following links exist: link$_{i,s}$: UE$_{i,s}$ - BS; link$_{i,d}$: BS$\,$-$\,$UE$_{i,d}$ and link$_{j,u}$: UE$_{j,u}\,$-$\,$BS (with $1\leq i\leq K$ and $1\leq j\leq U$).\\

The considered network consists of two type of communications \textit{UE2UE communications} and \textit{UE2BS communications}. In practice, these two types of communications coexist with downlink communications (i.e. between the BS and the users) that aim to deliver downlink traffic to mobile users  (e.g. internet connections or receive calls from users in other cells etc.). The extension of this work to the scenario where this downlink traffic is taken into account is straightforward. 
\subsection{Priority Policies} 
We call \textit{priority policy} the sorting of the communications' priorities according to which the users are chosen for transmission. In other terms, among the users that are able to transmit (which means have some packets to transmit and have the required radio conditions to do it), the UE that is chosen to transmit is the one that has the highest priority according to the considered priority policy. Hence, a user is scheduled only when all the more prioritized users are not able to transmit. We denote by $\Omega_\Gamma$ the set of all the possible priority policies. $\Gamma \in \Omega_\Gamma$ denotes a priority policy according to which the users are chosen for transmission. One can see that for $N$ communications the number of possible priority policies is given by the number of existing permutations: $N!$.

Note that any other scheduling policy is nothing but a convex combination of these priority policies. For this reason, our work is based on studying these priority policies that characterize the corner points of the stability region. Any other scheduling corresponds to an interior point of the characterized stability region.

These priority policies allows us to avoid the multidimensional Markov Chain modeling of the interacting queues. Thus, for a given priority policy, each queue can be modeled by a one dimensional (1D) Markov chain. Our approach transforms the multidimensional Markov Chain, that captures the dependency between the queues, to a 1D Markov Chain model for a given priority policy. However, the modeling that we propose remains challenging due to the coupling between the queues. As one can see later, an additional analysis is required for capturing the interaction between the queues.

\begin{centering}
\begin{figure} 
\centering
\captionsetup{justification=centering}
\includegraphics[width=0.4\textwidth]{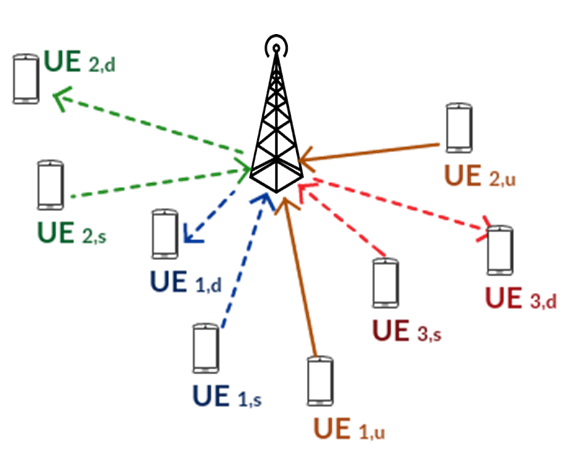}
\caption{General Cellular Scenario}
\label{fig.SystemModel_MU} 
\end{figure}
\end{centering}

\subsection{System of queues}
We consider a system of queues to describe the studied scenario. UE$_{j,u}$ (for all $1\leq j\leq U$) communicate with the BS through an uplink cellular communication and the queue of user UE$_{j,u}$ is represented by Q$_{j,u}$. The communication between UE$_{i,s}$ and UE$_{i,d}$ (for all $1\leq i\leq K$) is represented by two consecutive queues: the uplink queue Q$_{i,s}$ of UE$_{i,s}$ and the download queue Q$_{i,BS}$ (see figure (\ref{fig.Queues_MU})). The BS does not transmit to UE$_{i,d}$ unless it has received at least one packet from UE$_{i,s}$. This coupling between the queues induces that the service rates of all the queues Q$_{i,s}$ and Q$_{i,u}$ depend on the state (empty / not empty) of each Q$_{i,BS}$ which makes the queuing stability analysis challenging.  The users' queues are assumed saturated. The traffic arriving to the queues Q$_{i,s}$ and Q$_{j,u}$ is time varying, i.i.d. over time and with rate respectively equal to $\lambda_{i,s}$ and $\lambda_{j,u}$ for $1\leq i\leq K$ and $1\leq j\leq U$. The traffic arriving to Q$_{i,BS}$ is nothing but the departure from Q$_{i,s}$.

The traffic departure from the users' queues is also time varying and depends on the queues' states (empty or not), the scheduling allocation decision and the time varying channel conditions. For a given priority policy $\Gamma$, the average service rates of the queues Q$_{i,s}$, Q$_{i,BS}$ and Q$_{j,u}$ are respectively denoted by $\mu_{i,s}\left( \Gamma\right)$ and $\mu_{i,d}\left( \Gamma\right)$ and $\mu_{j,u}\left( \Gamma\right)$. The vector that describes the service rate of the users' queues for a given policy $\Gamma$ is the following: 
\[
\bm{\mu}\left(\Gamma\right)=\left[ \mu_{1,s}\left(\Gamma\right) \,\,\, \mu_{2,s}\left(\Gamma\right) \,\,\,...\,\,\, \mu_{K,s}\left(\Gamma\right) \,\,\, \mu_{1,u}\left(\Gamma\right) \,\,\, \mu_{2,u}\left(\Gamma\right)\,\,\,  ... \,\,\, \mu_{U,u}\left(\Gamma\right) \right]
\]

The vector that describe the arrival rates of the users' queues is given by:
\[
\bm{\lambda}=\left[ \lambda_{1,s} \,\,\, \lambda_{2,s}\,\,\,...\,\,\, \lambda_{K,s} \,\,\, \lambda_{1,u}\,\,\, \lambda_{2,u}\,\,\,  ... \,\,\, \lambda_{U,u} \right]
\]

\begin{figure} [H]
\vspace{-10pt}
\begin{centering} 
\includegraphics[width=0.3\textwidth]{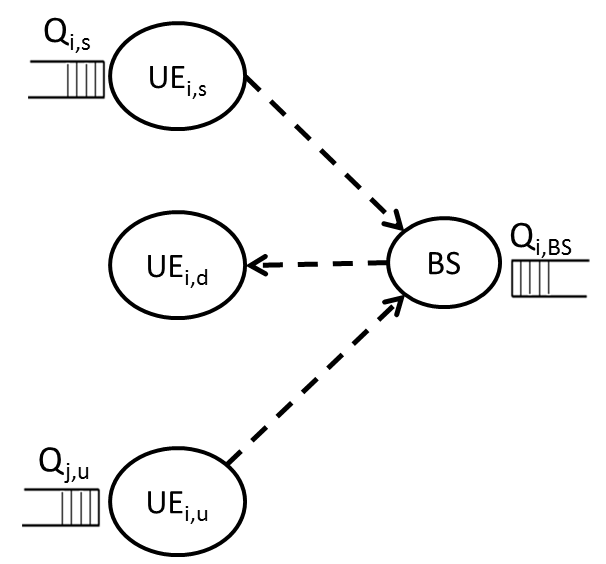}
\captionsetup{justification=centering}
\caption{The queues model of the cellular scenario}
\label{fig.Queues_MU} 
\end{centering}
\vspace{-10pt}
\end{figure}

\subsection{Set of bit rates}

 We consider an adaptive modulation scheme such that the transmission rates is improved by exploiting the channel state at the transmitter. The SNR values are divided into a finite set of intervals $\left[S_{1},...,S_{M}\right]$ where the $j^{th}$ interval $S_j=\left[\gamma_{j},\gamma_{j-1}\right]$  (called hereinafter $j^{th}$ state) is characterized by two SNR thresholds $\gamma_{j-1}$ and $\gamma_{j}$ such that $\gamma_{j-1} > \gamma_{j}$. Here, the adaptive modulation consists on considering a finite set $\left[r_{1},...,r_{M}\right]$ of bit rates as a mapping of the SNR intervals. Therefore, if a link has a SNR within the state $S_j$ than the bit rate if this transmission is $r_j$. It is worth mentioning that to transmit at a rate $r_j$ in the downlink or the uplink, the SNR states and thresholds may be different. In order to deal with that we use $\gamma_{j}^{UL}$ and $\gamma_{j}^{DL}$ to describe the SNR thresholds for respectively the uplinks and downlinks. 

\subsection{Probabilities for channel quality}
The channel between any two nodes in the network is modeled as a Rayleigh fading channel that remains constant during one time slot and changes
independently from one time slot to another based on a complex Gaussian distribution with zero mean and unit variance. The received SNR for
a link $i$ is given by: \[SNR_{i}=\frac{|h_{i}|^{2}d_{i}^{-\beta}P_{i}}{N_{0}}\] where $h_{i}$ is the fading coefficient, $d_{i}$ is the distance between source and destination, $P{}_{i}$ is the transmission power, $\beta$ is the path loss exponent and $N_{0}$ is the noise. We denote by $p^n_{i}$ the probability that the SNR of the link $i$ is within $n^{th}$ interval (i.e. in state $S_n$):

\begin{equation}
p^n_{i}=\mathbf{\mathbb{P}}\left[ SNR_{i}\in S_n\right]=\mathbf{\mathbb{P}}\left[ \gamma_{n}\leq SNR_{i}\leq \gamma_{n-1}\right]=\mathbf{\mathbb{P}}\left[SNR_{i} \geq \gamma_{n} \right]-\mathbf{\mathbb{P}}\left[ SNR_{i} \geq \gamma_{n-1} \right]
\label{pi_n}
\end{equation}

Given a complex Gaussian distribution of the channel $h_i$ then: 
\[
{\mathbb{P}}\left[SNR_{i}\geq\gamma_{n} \right] = \mathbf{\mathbb{P}}\left[|h{}_{i}|^{2}\geq\frac{\gamma_{n}N_0}{d_{i}^{-\beta}P_{i}}\right]=\exp\left(-\frac{\gamma_{n}N_0}{d_{i}^{-\beta}P_{i}}\right)
\]

Hence
\begin{equation}
p_i^{n}=\exp\left(-\frac{\gamma_{n}N_0}{d_{i}^{-\beta}P_{i}}\right)-\exp\left(-\frac{\gamma_{n-1}N_0}{d_{i}^{-\beta}P_{i}}\right)
\label{pi_n_Gauss}
\end{equation}

Note that $\mathbf{\mathbb{P}}\left[ SNR_{i}\geq +\infty\right]=0$ and $\mathbf{\mathbb{P}}\left[ SNR_{i}\geq -\infty\right]=1$. Let us consider the following notation $\bar{p}^n_{i}=1-p^n_{i}$.\\

\subsection{Stability region definition}

Based on the definition in \cite{szpankowski1992stabilty}, the $i^{th}$ queue is stable if its length Q$_i\left(t\right)$ satisfies:
\[
\lim_{q\rightarrow\infty}\left[\lim_{t\rightarrow\infty}\mathbb{P}\left(Q_i\left(t\right)<q\right)\right]=1
\] 

It follows from Loyne's theorem \cite{loynes_1962} that if the arrival and service process of the $i^{th}$ queue are strictly jointly stationary then this queue is stable if $\lambda_i < \mu_i$ where  $\lambda_i$ and $\mu_i$ denote the mean arrival and service rate of the queue Q$_i$. Hence for a given scheduling policy $\xi$, the stability region $\mathcal{R}\left( \xi \right)$ is given as the closure set of arrival rate vectors for which all the queues are stables under the scheduling policy $\xi$.
\[
\mathcal{R}\left( \xi \right):=\left\lbrace \bm{\lambda}  \in \mathbb{R}_{+}^{K+U} |  \bm{\lambda}  \prec   \bm{\mu}\left(\xi\right) \right\rbrace
\]
where $\prec$ is the component-wise inequality, $\bm{\lambda}$ is the vector of the queues' average arrival rates  and $\bm{\mu}\left(\xi\right)$ is the vector of the queues' average service rates under the scheduling policy $\xi$.

The stability region $\mathcal{R}$ is defined as the union of the stability region for all the feasible scheduling policies (denoted by $\Omega_{\xi}$).
\[
\mathcal{R}:=\bigcup\limits_{\xi \in \Omega_{\xi}} \mathcal{R}\left( \xi \right)
\]

\subsection{Our approach for computing the stability region}

We based our approach on the fact that all the existing scheduling policies can be written as a convex combination of the priority policies. This gives the idea to study the stability region based on the defined priority policies. Thus, the stability region $\mathcal{R}$ is given by the union of the $\mathcal{R}\left( \Gamma \right)$ over all the feasible priority policies (denoted by $\Omega_{\Gamma}$). Note that any region $\mathcal{R}\left( \xi \right)$ for any other scheduling $\xi \not\in  \Omega_\Gamma$ is a subset of the region $\mathcal{R}\left( \Gamma \right)$ due to the fact that this scheduling policy $\xi$ can be written as a convex combination of the set of the priority policies $\Omega_\Gamma$. 
\[
\mathcal{R}:=\bigcup\limits_{\Gamma \in \Omega_{\Gamma}} \mathcal{R}\left( \Gamma \right)
\] 

 Hence, for characterizing the stability region we have to increase as much as possible the service rates of the users' queues. For the $i^{th}$ \textit{UE2UE communications}, this means increasing the service rate of the UL side (UE$_{i,s}$-BS) which implies the increase of the arrival rate at the DL side (BS-UE$_{i,d}$) and risks the loss of the stability of the BS queue Q${_{i,BS}}$. Recall that the service rates of both UL and DL sides are coupled through the scheduling.  Instability of  Q${_{i,BS}}$ means that many packets will not be delivered to UE$_{i,d}$ and the network becomes unstable (it means that average delay is not finite). 
 
 Here, the {\textbf{challenge}} is to characterize the stability region by finding the priority policies that maximize the service rates of the UL side ( hence achieve the border points of the stability region) while guaranteeing the Q${_{BS}}$ stability. This coupling between the stability region and Q${_{BS}}$  is very challenging in general in queuing theory (due to scheduling) and especially in the context of time varying wireless channels (i.e. fast fading) where the allocated bit rate changes according to the time varying channel state.

If we are able to find the policies that achieve the corner points of the stability region and we denote these policies by $\Omega_{\Gamma}^*$ then this set of policies is sufficient to describe the stability region of the system. Therefore, for each priority policy $\Gamma \in \Omega_{\Gamma}^*$, we find the region that capture the coupling between the queues and guarantee their stability. Hence, based on the definition (\ref{Def.Conv}) of the operator $co$, the stability region can be simply characterized by these corner points and given by:
\[
\mathcal{R}=co\left(\bigcup\limits_{\Gamma \in \Omega_{\Gamma^*}} \mathcal{R}\left( \Gamma \right)\right)
\]

\begin{definition}
\label{Def.Conv}
Consider a  finite point set $S\subset\mathbb{R}_+^n$ such that ${S}=\bigcup\limits_{i=1}^{|S|}\lbrace{x_i\rbrace}$, we define $co \left(S\right)$ as it follows:
\[
co\left( {S}\right)=\left\{\vphantom{ \sum\limits_{i=1}^{|{S}|}} x\in \mathbb{R}_+^n \mid \forall x_i \in S: \,\,x \prec x_i \right\} \bigcup \left\{ \sum\limits_{i=1}^{|{S}|}\gamma_i  x_i \mid \left(\forall i:\gamma_i\geq 0\right) \wedge \sum\limits_{i=1}^{|{S}|}\gamma_i\leq 1 \right\}
\]
\end{definition}

\subsection{Organization}
This work is divided into two sections. In the first section, we consider a simple scenario that consists of three users and contains one \textit{UE2UE communication} and one \textit{UE2BS communication}. For this three-UEs scenario, we consider a set of three bit rates $\left[r_1,r_2,r_3\right]$ such that the transmission rate of a link takes a value within this set depending on its channel state. We start by providing the exact stability region of this scenario. This stability region does not have an explicit form and turns be computationally complex (especially due to the consideration of a set of three bit-rates and not a ON/OFF model). In order to reduce complexity, we find an upper bound of this stability region and prove analytically that it is a very close approximation of the stability region. This upper bound that has a simple and explicit form reduces the complexity of the stability region computation. This simple scenario brings several interesting insights on the stability region of these cellular scenarios as well as the complexity of such study. This motivates us to elaborate a simple, explicit and good approximation of this stability region.

 In the second section, we consider the general scenario with $K$ \textit{UE2UE communications} and $U$ \textit{UE2BS communications}. For this general scenario, a set of two bit rates $\left[r_1,r_2\right]$ is considered due to the computation complexity of considering a set of three bit rates for this general case. Similarly to the simple case, we provide the exact stability region of the general scenario and we show the complexity of such computation. Hence, we propose a tight and explicit approximation that decreases the complexity of the stability region. The main novelty of this work is to capture, for such cellular scenarios, the trade-off between the complexity and the precision of the stability region calculation.

\section{Three-UEs scenario\label{sec:3UE_scenario}}

\subsection{3-UEs scenario description}
The simple scenario (SS) consists of considering a single cell wireless network with three users in the cell  UE$_{1,s}$, UE$_{1,d}$, UE$_{1,u}$ respectively denoted by UE$_{s}$, UE$_{d}$, UE$_{u}$ . We assume that user UE$_{s}$ wants to communicate with UE$_{d}$, while UE$_{u}$ is transmitting packets to the BS. The communication between UE$_{s}$ and UE$_{d}$ is performed through the BS. Hence, we consider the following three links: link$_{s}$ UE{}$_{s}$ - BS, link$_{d}$: BS - UE$_{d}$, link$_{u}$: UE$_{u}$-BS. The 3-UEs scenario is illustrated in Figure \ref{fig.SystemModel_TwoUE}. 

\begin{figure} [ptb]
\begin{centering}
\includegraphics[width=0.6\textwidth]{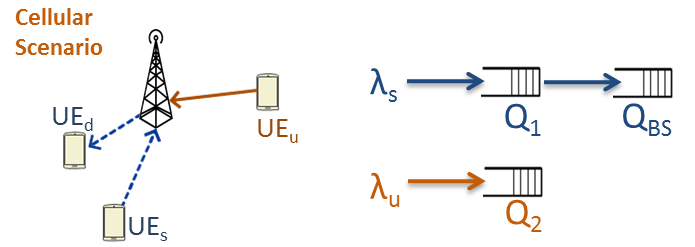}
\caption{System Model of the Three-UEs scenario}
\label{fig.SystemModel_TwoUE} 
\par\end{centering}
\end{figure}

UE$_{u}$ communicates with the BS through an uplink cellular communication and the queue of user UE$_{u}$ is represented
by Q$_{u}$. The traffic arriving to the queues of UE$_s$ and UE$_{u}$ is time varying, i.i.d. over time and with rate respectively equal to $\lambda_{s}$ and $\lambda_{u}$. The communication between UE$_{s}$ and UE$_{d}$ is represented by the cascade of the uplink queue Q$_{s}$ of UE$_{s}$ and the download queue Q$_{BS}$. The BS does not transmit to UE$_d$ unless it has received at least one packet from UE$_{s}$. This coupling between the queues induces that the service rates of Q$_{s}$ and Q$_{u}$ depend on the state (empty / not empty) of Q$_{BS}$ which makes the queuing stability analysis challenging. The traffic arriving to Q$_{BS}$ is nothing but the departure from Q$_{s}$. The traffic departure is also time varying and depends on the scheduling allocation decision and the time varying channel conditions. The departure average rates from queues Q$_s$ and Q$_u$ are respectively denoted by  $\mu_{s}$ and $\mu_{d}$. \\

For this simple scenario, a set of three bit rates is considered $\left[r_{1},r_2,r_{3}\right]$ with $r_1=k r_2$ and $r_3=0$ (with $k \in \mathbb{N}^+$). For this set of bit rate corresponds a set of SNR intervals $\left[S_{1},S_2,S_{3}\right]$. As we mentioned before in equation (\ref{pi_n}), we use the notation $p^n_{i}$ to describe the probability that the SNR of the link $i$ is within the $n^{th}$ SNR interval (i.e. in state $S_n$). We suppose a complex Gaussian distribution of the channel $h_i$; hence these probabilities can be easly derived from equation (\ref{pi_n_Gauss}).\\
For the downlink BS-UE$_d$, these probabilities are the following: 
\[
p^1_{d}=\exp\left(-\frac{\gamma_{1}^{DL}N_0}{d_{d}^{-\beta}P_{DL}}\right)
\]
\[
p^2_{d}=\exp\left(-\frac{\gamma_{2}^{DL}N_0}{d_{d}^{-\beta}P_{DL}}\right)-\exp\left(-\frac{\gamma_{1}^{DL}N_0}{d_{d}^{-\beta}P_{DL}}\right)
\] 
\[
p^3_{d}=1-p^1_{d}-p^2_{d}
\]
For both uplinks UE$_s$-BS and UE$_u$-BS, these probabilities are the following:
\[
p^1_{i}=\exp\left(-\frac{\gamma_{1}^{UL}N_0}{d_{i}^{-\beta}P_{UL}}\right)
\]
\[
p^2_{i}=\exp\left(-\frac{\gamma_{2}^{UL}N_0}{d_{i}^{-\beta}P_{UL}}\right)-\exp\left(-\frac{\gamma_{1}^{UL}N_0}{d_{i}^{-\beta}P_{UL}}\right)
\] 
\[
p^3_{i}=1-p^1_{i}-p^2_{i}
\]
with $i=s$ for the  UE$_s$-BS link and  $i=u$ for the  UE$_u$-BS link.\\

In the aforementioned expressions,  $d_{s}$ is the distance between $UE_{s}$ and the BS,  $d_{u}$ is the distance between $UE_{u}$ and the BS, $d_{d}$ is the distance between $UE_{d}$ and the BS. In the sequel, we refer to the three-UEs scenarios with 3 bit rates ($r_1$, $r_2$ and $r_3$) by the \textit{simple scenario} SS .

\subsection{Organization}
For the 3-UEs scenario study, the organization is as follows: section \ref{subsec:CellSimpleSc} provides a theoretical analysis of the exact stability region (SR) of the system. We note that deriving this SR is computationally complex. Therefore, we propose in section \ref{subsec:CellSimpleSc_Approx} an explicit and simple expression of an upper bound (UB) for this SR. Hence, in section \ref{subsec:SimpleSc_Comp_Approx_Real} we compare the SR to its UB by expressing the maximum difference between both of them. Thus, this UB turns to be a simple and very close approximation of the SR.
\vspace{-10pt}
\subsection{Exact stability region for the 3-UEs scenario\label{subsec:CellSimpleSc}}

In this section, we characterize the exact stability region for the 3-UEs cellular scenario. Assuming a user scheduling such that only one communication is possible in each timeslot then for a given timeslot, only one of the following communications is scheduled: UE$_u$-BS or UE$_{s}$-UE$_{d}$. Recall that $\Gamma$ is the priority policy according to which the links are sorted. We know that in order to characterize the stability region it is sufficient to consider its corner points that correspond to the "extreme policies" where the priority is always given to the same communication or when the priority is always given to the communication that has the better channel state. In this paper we denote by $\Omega_{\Gamma}^{ss}$ the set of the priority policies corresponding to the corner points of the stability region. Here, we only indicate that this set contains 6 policies $\lbrace\Gamma_1,\Gamma_2,\Gamma_3,\Gamma_4,\Gamma_5, \Gamma_6 \rbrace$ and that finding the values for this set of policies leads to the characterization of the stability region. We exhaustively describe these policies in the appendix \ref{stab_real_twoRate}. However, the problem is not simply solved by finding the subset $\Omega_{\Gamma}^{ss}$. Actually, for each priority policy $\Gamma \in \Omega_{\Gamma}^{ss}$, the challenge remains in capturing the coupling between the queues due to the relaying functionality of the BS between the UL and DL traffic of the UE2UE communications.

Indeed, the priority policies' approach transforms the multidimensional Markov Chain, that models the interacting queues, to a one dimensional (1D) Markov chain for each priority policy. In order to characterize the stability region, all the priority policies should be considered or at least the priority policies that achieve the corner points of the stability region. Therefore, the queues can be modeled by a 1D Markov Chain per priority policy. Nonetheless, the proposed modeling (1D Markov Chain), for each priority policy, remains challenging since the service rates of the queues still depend on the state of the BS queue due to the coupling between the UL and DL traffic of the \textit{UE2UE communications}. Thus, the following additional analysis is required for capturing the dependency between the queues.

For a given priority policy $\Gamma$, the service rate of the queues depend not only on the SNR states of the links but also on the state (empty or not) of the BS. For this reason, we consider a queuing theory approach in order to capture this coupling between the service rate of the queues (Q$_s$ and Q$_u$) and the state of Q$_{BS}$. In section \ref{sec:Numerical}, we expose examples that show the impact of queuing approach on the performance analysis of the network.

Q$_{BS}$ might be empty at some timeslot, therefore when UE$_s$-UE$_d$ communication is scheduled then the choice between UL  (UE$_s$-BS) or DL (BS-UE$_d$) depends not only on the SNR states of these two links but also on the state (empty or not) of Q$_{BS}$. In order to take that into account, we introduce a new parameter $\alpha$ called fraction vector. It is used to divide the time where the UE$_1$-to-UE$_3$ communication is scheduled between its UL and DL such that the stability of the queue Q$_{BS}$ remains satisfied.

The UE$_s$ to UE$_d$ communication, modeled as a chain of queues $Q_s$ - $Q_{BS}$, might be in 4 different SNR states: $\left( S_{1},S_{1}\right),\left( S_{1},S_{2}\right),\left( S_{2},S_{1}\right),\left( S_{2},S_{2}\right)$ where the SNR state $ S_i$ is defined in section \ref{sec:System-Model}. For each couple of SNR states for the links (UE$_s$-BS, BS-UE$_d$), we define a parameter $\alpha_i$ (with $0\leq\alpha_{i}\leq1$) that corresponds to the fraction of time that the resources are allocated to the UL (UE$_s$ to BS) whereas $1-\alpha_i$ corresponds to the fraction of time that the resources are allocated to the DL (BS to UE$_d$). It gives that $\alpha=\left(\alpha_1, \alpha_2,\alpha_3,\alpha_4\right)$. The fraction vector $\alpha$ is considered only for these 4 couples of SNR states where both UL and DL of the UE$_s$-to-UE$_d$ communication have a SNR state different than $S_3$ thus a non zero rate. Only for these cases, the concept of fraction parameter makes a sense for dividing resources between both UL and DL that are able to transmit. However, for the other combinations of couple SNR states (where at least one SNR state is equal to $S_3$), this concept does not hold because only the link with a positive rate is able to transmit.

For different priority policies of UE$_u$-BS and UE$_{s}$-UE$_{d}$ communications, different fraction of resources will be allocated to the UL and DL of the chain  $Q_s$ - $Q_{BS}$ which corresponds to different values of $\alpha \in \left[0,1\right]^4$. For each priority policy, we find the optimal fraction vector  $\alpha^*=\left( \alpha^*_{1},\alpha^*_{2},\alpha^*_{3},\alpha^*_{4}\right)$ that achieves one corner point of the stability region. Finding $\alpha^*$ allows us to avoid the need to vary $\alpha \in \left[0,1\right]^4$ for each priority policy in order to obtain the corresponding corner point. To make expression simpler, we use the following notation for a given priority policy $\Gamma$:
\begin{itemize}
\item  $U\left(\Gamma\right)$ and $V\left(\Gamma\right)$ as the probabilities that UE$_s$ transmits respectively at rate $r_1$ and $r_2$ when $Q_{BS}$ is empty.\\♠
\item $N\left(\Gamma\right)=p_{s}^1U\left(\Gamma\right)+\bar{p}_s^{1}V\left(\Gamma\right)$ and $M\left(\Gamma\right)=\left(k p_d^{1}+p_d^{2}N\left(\Gamma\right)\right)-\left(k p_s^{1}U\left(\Gamma\right)+p_s^{2}V\left(\Gamma\right)\right)p_d^{3}$.\\
\item  $W\left(\Gamma\right)$ and $X\left(\Gamma\right)$ as the probabilities that UE$_u$ transmits respectively at rate $r_1$ and $r_2$ when $Q_{BS}$ is empty.\\
\item $Y\left(\Gamma\right)$ and $Z\left(\Gamma\right)$ the probabilities that UE$_u$ transmits respectively at rate $r_1$ and $r_2$ when $Q_{BS}$ is not empty.
\end{itemize}

\begin{theorem}
\label{th_ss_cellular_real}
The stability region for the 3-UEs cellular scenario is given by $ \left(\lambda_s,\lambda_u\right)  \in \mathcal{R}_c^{ss}$ s.t.:
\[
\mathcal{R}_{c}^{ss}=co\left(\bigcup\limits_{\Gamma \in \Omega^{ss}_{\Gamma}}\bigcup\limits_{\alpha \in \left[0,1 \right]^4}\left \lbrace  \mu_s\left(\alpha,\Gamma\right) , \mu_u\left(\alpha,\Gamma\right)  \right\rbrace\right)
\]
where  $\Omega^{ss}_{\Gamma}$ is the set of the priority policies for the simple scenario that achieve the corner points of the stability region (with $|\Omega^{ss}_{\Gamma}|=6$). The queues' service rates $\mu_s \left(\alpha,\Gamma\right) $ and $\mu_u\left(\alpha,\Gamma\right) $ are respectively given by (\ref{eqn_mu1c_real}) and (\ref{eqn_mu2c_real}) for all the priority policies $\Gamma \in \Omega^{ss}_{\Gamma}$.\\

\[\mu_{s}:=\mu_{s}\left(\alpha,\Gamma \right)=\left[r_1p_s^{1}U\left(\Gamma\right) + r_2p_s^{2}V \left(\Gamma\right) \right]\times\varPi_{BS}^0\left(\Gamma\right)\]
\begin{equation}
\label{eqn_mu1c_real}
\resizebox{0.9\hsize}{!}{$
 + \left[r_1p_s^1U\left(\Gamma\right)\left[ \alpha_{1}p_{d}^1+\alpha_{2}p_{d}^2 +p_{d}^3 \right] + r_2p_s^{2}\left[ \alpha_{3}p_d^{1}U\left(\Gamma\right) +\alpha_{4}p_d^{2}V\left(\Gamma\right) +p_d^{3}V\left(\Gamma\right)\right) \right]\times \left[1-\varPi^0_{BS}\left(\Gamma\right)\right]$}
\end{equation}
\vspace{-10pt}
\begin{equation}
\label{eqn_mu2c_real}
\resizebox{0.9\hsize}{!}{$\mu_{u}:=\mu_{u}\left(\alpha,\Gamma \right)=\left[r_1p_u^{1}W\left(\Gamma\right)  + r_2p_u^{2}X \left(\Gamma\right) \right]\times \varPi_{BS}^0\left(\Gamma\right)+ \left[r_1p_u^{1}Y\left(\Gamma\right)  + r_2p_u^{2}Z\left(\Gamma\right) \right]\times \left[1-\varPi_{BS}^0\left(\Gamma\right)\right]$}
\end{equation}
With $\varPi_{BS}^0\left(\Gamma\right)$ the probability that the queue Q$_{BS}$ is empty under the priority policy $\Gamma$ and the six values of the parameters $U,\,V,\,W,\,X,\,Y,\,Z$  corresponding to the six border priority policies $\Gamma \in \Omega^{ss}_\Gamma$ are given in table \ref{table.Policy-Cell}. 
\end{theorem}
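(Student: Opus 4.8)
The plan is to turn the three coupled queues into a family of one-dimensional Markov chains, one per priority policy, and then reassemble the region by time-sharing. As argued in Section~\ref{sec:System-Model}, every feasible scheduling policy is a convex combination of priority policies, so $\mathcal{R}_c^{ss}=co\!\left(\bigcup_{\Gamma}\mathcal{R}_c^{ss}(\Gamma)\right)$, and only the orderings that put the UE$_u$-BS and the UE$_s$-UE$_d$ communications at the two ends of the priority list, or that rank them by instantaneous rate, can contribute corner points; a direct enumeration (carried out in Appendix~\ref{stab_real_twoRate}) shows these reduce to exactly six policies $\Omega_\Gamma^{ss}=\{\Gamma_1,\dots,\Gamma_6\}$, every other ordering giving a region contained in the hull of those six. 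It therefore suffices to describe $\mathcal{R}_c^{ss}(\Gamma)$ for a fixed $\Gamma\in\Omega_\Gamma^{ss}$.

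\textbf{Reduction to 1D chains and the relay queue.} Fix $\Gamma$ and a fraction vector $\alpha\in[0,1]^4$. Pass to the dominant system in which the source queues Q$_s$ and Q$_u$ are kept backlogged (they transmit dummy traffic when they would be empty); by the usual pathwise sandwich this system has the same stability boundary as the original one, since the two coincide as long as the sources are non-empty. In the dominant system the link SNRs are i.i.d.\ across slots with the probabilities of \eqref{pi_n_Gauss}, and for fixed $(\Gamma,\alpha)$ the link served in a given slot is a deterministic function of the current SNR vector and of the single bit ``Q$_{BS}$ empty / non-empty''. Hence Q$_{BS}$ becomes a one-dimensional chain with batch transitions of sizes $k$ and $1$ (from $r_1=kr_2$ and $r_2$): when Q$_{BS}$ is empty every slot allocated to the UE$_s$-UE$_d$ communication is forced onto the uplink UE$_s$-BS, so packets enter Q$_{BS}$ at the SNR-averaged rate built from $U(\Gamma),V(\Gamma)$, whereas when Q$_{BS}$ is non-empty those slots are split between uplink and downlink through $\alpha$ in the four configurations $(S_1,S_1),(S_1,S_2),(S_2,S_1),(S_2,S_2)$ and given to the single active direction otherwise. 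Writing the balance equations of this chain --- equivalently, equating the long-run numbers of packets pushed into and pulled out of Q$_{BS}$ with the batch weights $k,1$ --- yields a rational expression for the empty-state probability $\varPi_{BS}^0(\Gamma)$; positivity of the net drift out of Q$_{BS}$, which is exactly what $\alpha$ is tuned to enforce, is the condition that this chain be positive recurrent, and a corner point of $\mathcal{R}_c^{ss}(\Gamma)$ is attained at the optimal $\alpha^\ast$. The auxiliary quantities $N(\Gamma)$ and $M(\Gamma)$ introduced before the statement are precisely the bookkeeping of this drift computation.

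\textbf{Service rates and assembly.} Conditioning on whether Q$_{BS}$ is empty (probability $\varPi_{BS}^0(\Gamma)$) and averaging the per-slot number of delivered packets over the i.i.d.\ SNRs produces the service rates in closed form: the $\varPi_{BS}^0(\Gamma)$-weighted bracket of \eqref{eqn_mu1c_real} is the throughput of Q$_s$ in the ``Q$_{BS}$ empty'' regime (all UE$_s$-UE$_d$ slots to the uplink), the $[1-\varPi_{BS}^0(\Gamma)]$-weighted bracket, with the $\alpha_i$ multiplying $p_d^i$, is its throughput in the ``Q$_{BS}$ non-empty'' regime where an $\alpha$-fraction of those slots is diverted to the downlink, and \eqref{eqn_mu2c_real} follows in the same way from the two regimes with the service probabilities $W(\Gamma),X(\Gamma)$ (Q$_{BS}$ empty) and $Y(\Gamma),Z(\Gamma)$ (Q$_{BS}$ non-empty); the six numerical instantiations of $U,V,W,X,Y,Z$ come from specialising the ``who wins the slot, and in which SNR state'' event to each of the six orderings of $\Omega_\Gamma^{ss}$, and are recorded in Table~\ref{table.Policy-Cell}. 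By Loynes' theorem~\cite{loynes_1962}, Q$_s$ and Q$_u$ are stable in the dominant system iff $\lambda_s<\mu_s(\alpha,\Gamma)$ and $\lambda_u<\mu_u(\alpha,\Gamma)$; combined with the stability of Q$_{BS}$ (encoded in the admissible $\alpha$) and the dominance sandwich, this identifies $\mathcal{R}_c^{ss}(\Gamma)$ with the closure of $\bigcup_{\alpha\in[0,1]^4}\{(\mu_s(\alpha,\Gamma),\mu_u(\alpha,\Gamma))\}$, and applying the operator $co$ of Definition~\ref{Def.Conv} --- legitimate because a convex combination of priority policies is realised by time-sharing and $co$ is exactly the downward-closed convex hull --- over the six policies yields $\mathcal{R}_c^{ss}$.

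\textbf{Main obstacle.} The delicate part is the exact solution of the Q$_{BS}$ chain: its arrival and service mechanisms are state-dependent (the empty state behaves qualitatively differently from the non-empty ones) and its transitions are batches of sizes $k$ and $1$ coming from $r_1=kr_2$, so getting a closed form for $\varPi_{BS}^0(\Gamma)$ and verifying that the resulting $\mu_s,\mu_u$ collapse to the compact expressions \eqref{eqn_mu1c_real}--\eqref{eqn_mu2c_real} requires careful drift accounting, which is where $N(\Gamma)$ and $M(\Gamma)$ enter. A secondary point that needs care is the combinatorial claim that precisely the six policies of $\Omega_\Gamma^{ss}$ generate all corner points and that every other ordering is dominated, which is handled by the exhaustive policy enumeration of Appendix~\ref{stab_real_twoRate}.
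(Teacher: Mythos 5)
Your overall architecture matches the paper's: fix a priority policy $\Gamma$ from the six-element set $\Omega_\Gamma^{ss}$, reduce the coupled system to a one-dimensional batch-transition chain for Q$_{BS}$, condition the service rates of Q$_s$ and Q$_u$ on the event $\{Q_{BS}=0\}$, invoke Loynes' criterion, and take the operator $co$ over policies and fraction vectors. The enumeration of the six policies, the role of $\alpha$ in splitting UE$_s$-UE$_d$ slots between uplink and downlink only in the four non-degenerate SNR configurations, and the interpretation of the two brackets in \eqref{eqn_mu1c_real}--\eqref{eqn_mu2c_real} are all as in Appendix~\ref{stab_real_twoRate}. (Your dominant-system framing of the saturated sources is a mild repackaging: the paper simply assumes the source queues saturated and proves the equivalence ``$\bm{\lambda}\in\mathcal{R}_c^{ss}\Leftrightarrow$ stability'' by a law-of-large-numbers time-average argument rather than a pathwise sandwich, but either route is acceptable here.)

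There is, however, one step that does not go through as you state it. You claim that ``writing the balance equations of this chain --- equivalently, equating the long-run numbers of packets pushed into and pulled out of Q$_{BS}$ with the batch weights $k,1$ --- yields a rational expression for the empty-state probability $\varPi_{BS}^0(\Gamma)$,'' and later that the corner point of $\mathcal{R}_c^{ss}(\Gamma)$ is ``attained at the optimal $\alpha^\ast$.'' For $k\geq 2$ the exact chain has batch arrivals and departures of sizes $1$ and $k$ with state-dependent arrival probabilities, and the global drift identity determines $\varPi_{BS}^0$ only up to the unknown boundary probabilities $\varPi_1,\dots,\varPi_{k-1}$: the generating-function computation in the paper gives $\varPi_0$ as the birth-death expression \emph{minus} a correction term proportional to $\sum_{n=1}^{k-1}\varPi_n$, and obtaining those requires finding the roots of a degree-$2k$ polynomial and solving a linear system $AX=B$. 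The rational closed form you describe is exactly what holds for the \emph{approximated} birth-death chain of Lemma~\ref{th2_gen}, not for the exact one; indeed the absence of a closed form for $\varPi_{BS}^0$ --- and the consequent inability to identify $\alpha^\ast$ analytically, which is why Theorem~\ref{th_ss_cellular_real} must keep the union over all $\alpha\in[0,1]^4$ rather than a single optimizer --- is the entire motivation for Theorems~\ref{th2_gen} and~\ref{th3_gen}. The theorem itself survives because it only needs $\varPi_{BS}^0(\Gamma)$ to exist as the stationary empty-state probability, but your proof should replace the drift shortcut with the root-finding/linear-system construction (or at least acknowledge that $\varPi_{BS}^0$ is defined implicitly) and drop the claim that a single optimal $\alpha^\ast$ realizes the corner point of the exact region.
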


\begin{proof}
See Appendix-\ref{stab_real_twoRate}.
\end{proof}

\begin{table}[H]
\centering
  \begin{tabular}{|c|c|c|c|c|c|c|c|c|c|}
\hline 
Corner Point $\Gamma$ & $U$ & $V$ & $W$ & $X$ & $Y$ & $Z$\tabularnewline
\hline
\hline 
$1$ & $1$ & $1$ & $p{}_s^{3}$ & $p{}_s^{3}$ & $p{}_s^{3}p{}_d^{3}$ & $p{}_s^{3}p{}_d^{3}$\tabularnewline
\hline 
$2$ & $p{}_u^{3}$ & $p{}_u^{3}$ & $1$ & $1$ & $1$ & $1$\tabularnewline
\hline 
$3$ & $1$ & $p{}_u^{3}$ & $\bar{p}_s^{1}$ & $\bar{p}_s^{1}$ & $\bar{p}_s^{1}\bar{p}_d^{1}$ & $\bar{p}_s^{1}\bar{p}_d^{1}$\tabularnewline
\hline 
$4$ & $\bar{p}_u^{1}$ & $\bar{p}_u^{1}$ & $1$ & $p_s^{3}$ & $1$ & $p{}_s^{3}p{}_d^{3}$  \tabularnewline
\hline 
$5$ & $1$ & $\bar{p}_u^{1}$ & $\bar{p}_s^{1}$ & $p_s^{3}$ & $\bar{p}_d^{1} \bar{p}_d^{1}$ & $p{}_s^{3}p{}_d^{3}$ \tabularnewline
\hline 
$6$ & $\bar{p}_u^{1}$ & $p{}_u^{3}$ & $1$ & $\bar{p}_s^{1}$ & $1$ & $\bar{p}_s^{1}\bar{p}_d^{1}$\tabularnewline
\hline 
\end{tabular}\caption{Service rate parameters}
\label{table.Policy-Cell}
\end{table}

We can resume the procedure of finding the stability region by the algorithm below. The complexity of the following computation as well as the non existence of an explicit form of the exact stability region comes from the consideration of a set of three bit rates which makes complicated the Markov Chain model of the queue Q$_{BS}$. Thus, finding the probability that this queue is empty $\varPi_0$ should be deduced from the solution of a system of linear equation. Due to that, we do not have an explicit form of the service rates of the queues. Hence, we are not able to find the optimum fraction vector $\alpha^*$ for each priority policy that achieves the corner point corresponding to this policy and avoids us to consider all the values $\alpha \in \left[ 0,1\right]^4$.

\begin{algorithm*}
\caption{Procedure of the exact stability region}\label{Proc_Real_MC}
\begin{algorithmic}[1]
\For {all $\Gamma \in \Omega_{\Gamma}^{ss}$}
\For {all $\alpha \in \left[0,1 \right]^4 $}
\State Find the probability $\varPi_0$ by solving the system of linear equations
\State Deduce the service rates of the queues
 \EndFor
 \EndFor
\end{algorithmic}
\end{algorithm*}

The computational complexity of the above form of the stability region comes from two factors: (i) the need of varying the fraction vector $\alpha$ within $\left[0,1 \right]^4$ in order to find the stability region and (ii) the calculation of the probability $\varPi_{BS}^0$ that the queue Q$_{BS}$ is empty for each value of $\alpha$. In fact, considering a set of three bit rates makes the Markov Chain modeling of the BS queue Q$_{BS}$ complicated as shown in the proof of the above theorem. It yields to a complicated computation of the  probability $\varPi_{BS}^0$ which is deduced by solving a linear system of equation for each value of the fraction vector $\alpha$. On the aim of reducing the complexity and having an explicit form, we propose a simple analytic expression of an approximation of this stability region.

\subsection{Approximated stability region \label{subsec:CellSimpleSc_Approx}}
As we mentioned before, finding the exact stability region needs a lot of computation (for finding $\Pi^0_{BS}$) and an exhaustive variation of $\alpha \in \left[ 0,1 \right]^4$. These are the main two causes of the computational complexity of the exact stability region. Actually, the exact stability region has not an explicit analytic form due to the fact that it is deduced from the numerical solution of a system of equations (for finding $\Pi^0_{BS}$) for each value of $\alpha \in \left[ 0,1 \right]^4$. Therefore, it is important to propose an epsilon-close upper bound ($\epsilon$-UB) that has an explicit form and that is simple to compute. An epsilon-close upper bound presents an upper bound of the exact stability region that has epsilon as the maximum distance between the exact region and its upper bound.

Therefore, we start by simplifying the Markov Chain modeling of the queue Q$_{BS}$ in order to derive an approximation of the stability region that has a close-form analytic expression and that is simple to compute.  After computing the small gap between the exact and approximated stability region, the latter one turns to be a close upper bound of the exact stability region.

Considering three different bit rates generates a complicated Markov Chain for modeling the queue Q$_{BS}$ (Upper side of Fig. \ref{fig.QBS_real_vs_approx_gen}). As one can see in the proof of the theorem \ref{th_ss_cellular_real}, finding the probability that Q$_{BS}$ has no packet to transmit ($\Pi^0_{BS}$ which is mandatory for finding the stability region) is challenging for this Markov Chain model and has not an explicit form. The reader is invited to refer to the proof of theorem \ref{th_ss_cellular_real} for more details on this Markov Chain. 

Therefore, we propose a simple approximated Markov Chain for modeling the queue Q$_{BS}$ (Lower side of Fig. \ref{fig.QBS_real_vs_approx_gen}). This approximated model is a simple birth and death Markov Chain where passing from state $x_i$ to $x_{i+1}$ (receiving a packet) corresponds to the average probability of receiving a packet at both rates $r_1$ and $r_2$. Moreover, passing from state $x_i$ to $x_{i-1}$ (transmitting a packet) corresponds to the average probability of transmitting a packet at both rates $r_1$ and $r_2$. We recall that the multiple rate model is still taken into consideration at the transition probabilities level of the approximated Markov Chain. This approximated model has mainly two advantages: (i) an explicit form of the probability of being at state 0 and (ii) an explicit form of the fraction optimal fraction vector $\alpha$ that achieves the corner point of the stability region. Hence, compared to the exact stability region, this approximation permits to avoid the two main reasons of complexity as it follows: computation of $\Pi^0_{BS}$ by applying an explicit formula and not by solving a system of equations and the lack of the need of varying $\alpha$ within $\left[ 0,1 \right]^4$ due to the explicit expression of the optimal fraction vector that achieves the corner point of this approximated stability region. The details of the explicit expression of the approximated model are given in the proofs of lemma \ref{th2_gen} and theorem \ref{th3_gen}.

\begin{figure}[H]
\centering
\captionsetup{justification=centering}
\includegraphics[width=1\textwidth]{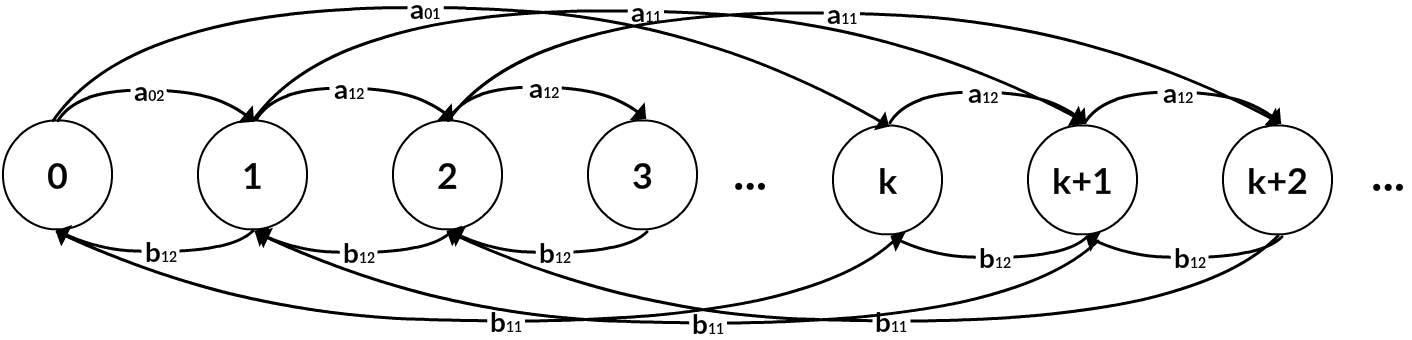}
\includegraphics[width=0.7\textwidth]{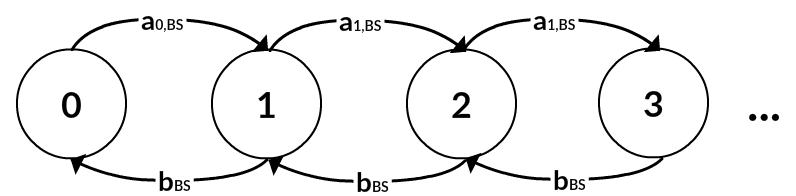}
\par
\caption{The upper Markov Chain is the exact modeling of the queue Q$_{BS}$ and the lower Markov Chain is the approximated modeling of the queue Q$_{BS}$}
\label{fig.QBS_real_vs_approx_gen}
\end{figure}

In the sequel we characterize the approximated stability region by proceeding in two steps: (i) we derive in lemma (\ref{th2_gen}) the condition that $\alpha$ should satisfy for each priority policy then (ii) we provide in theorem (\ref{th3_gen}) the corner points that correspond to the priority policies $\Gamma \in \Omega_{\Gamma}^{ss}$ and that describe this approximated stability region.\\

We distinguish the results of the approximation from that of the exact stability region by adding the following indication $\tilde{}$ over all the used notation. Recall that the set of bit rates is given by $\left[r_1=kr_2,r_2,r_3=0\right]$.\\

\begin{lemma}
\label{th2_gen}
For a given priority policy $\Gamma$, the fraction vector $\left( \tilde{\alpha}_{1}, \tilde{\alpha}_{2}, \tilde{\alpha}_{3}, \tilde{\alpha}_{4} \right)$ should verify:
\begin{equation}
\label{eq:QBSstabConstraint_gen}
2k\tilde{\alpha}_{1}p_s^{1}p_d^{1}U+\left(k+1\right)\tilde{\alpha}_{2}p_s^{1}p_d^{2}U+\left(k+1\right)\tilde{\alpha}_{3}p_s^{2}p_d^{1}U+2\tilde{\alpha}_{4}p_s^{2}p_d^{2}V
\leq k p_d^{1}+p_d^{2}N-\left(k p_s^{1}U+p_s^{2}V\right)p_d^{3}
\end{equation}

with $U$, $V$, $W$, $X$, $Y$, $Z$ depend on the priority policy $\Gamma$ and their values for the border priority policies $\Gamma \in \Omega^{ss}_\Gamma$ are given in table \ref{table.Policy-Cell}. Recall that $N=p_s^{1}U+\bar{p}_s^{1}V$.
\end{lemma}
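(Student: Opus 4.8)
The plan is to read off (\ref{eq:QBSstabConstraint_gen}) as the stability condition of the \emph{approximated} birth--death chain for $Q_{BS}$ drawn in the lower part of Figure~\ref{fig.QBS_real_vs_approx_gen}, by balancing the mean in--flow and out--flow of that chain. Since $Q_s$ is saturated and the approximated model is a one--dimensional nearest--neighbour walk on $\{0,1,2,\dots\}$ reflected at $0$, Loynes' theorem \cite{loynes_1962} (equivalently, the standard positive--recurrence test for such a walk) says that $Q_{BS}$ is stable as soon as the per--slot ``birth'' probability does not exceed the per--slot ``death'' probability at an occupied state, and passing to the closure turns the strict inequality into ``$\leq$''. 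Hence it suffices to fix a priority policy $\Gamma$, compute the two scalars $\lambda_{BS}(\alpha,\Gamma)$ (expected number of $r_2$--sized units injected into $Q_{BS}$ by $Q_s$ per slot) and $\mu_{BS}(\alpha,\Gamma)$ (expected number drained toward $\mathrm{UE}_d$ per slot, evaluated at an occupied state), and then rewrite $\lambda_{BS}\leq\mu_{BS}$.

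For $\lambda_{BS}$ I would condition on the joint SNR state of the pair $(\mathrm{link}_s,\mathrm{link}_d)$. In the four states $(S_1,S_1),(S_1,S_2),(S_2,S_1),(S_2,S_2)$ — the only ones for which the fraction vector is defined — a fraction $\tilde{\alpha}_i$ of the slots allotted to the $\mathrm{UE}_s$--$\mathrm{UE}_d$ communication goes to the uplink, so $Q_s$ feeds $k$ units when $\mathrm{link}_s$ is in $S_1$ and $1$ unit when it is in $S_2$, with SNR probabilities $p_s^1p_d^1,\,p_s^1p_d^2,\,p_s^2p_d^1,\,p_s^2p_d^2$ and with scheduling weight given by the policy--dependent factor $U$ when this communication can offer rate $r_1$ (the first three states) and $V$ when it can only offer $r_2$ (the fourth state), the values being those of Table~\ref{table.Policy-Cell}; the states in which $\mathrm{link}_d\in S_3$ but $\mathrm{link}_s\notin S_3$ contribute the $\alpha$--free term $k\,p_s^1p_d^3U+p_s^2p_d^3V$. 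Symmetrically, $\mu_{BS}$ picks up, from the same four states, the complementary fractions $(1-\tilde{\alpha}_i)$ times the downlink unit count ($k$ if $\mathrm{link}_d\in S_1$, $1$ if $\mathrm{link}_d\in S_2$), plus an $\alpha$--free term coming from the states in which $\mathrm{link}_s\in S_3$ but $\mathrm{link}_d\notin S_3$.

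Then I would subtract. The coefficient of $\tilde{\alpha}_i$ in $\lambda_{BS}-\mu_{BS}$ is $(\text{uplink units})+(\text{downlink units})$ in state $i$, because the uplink contributes $+\tilde{\alpha}_i(\text{UL units})$ to $\lambda_{BS}$ while the downlink contributes $+\tilde{\alpha}_i(\text{DL units})$ to $-\mu_{BS}$; this yields $2k,\,k+1,\,k+1,\,2$, and multiplying by the SNR probability and the scheduling factor reproduces exactly the left--hand side of (\ref{eq:QBSstabConstraint_gen}). What remains is the $\alpha$--independent part $\mu_{BS}\big|_{\tilde{\alpha}=0}-\lambda_{BS}\big|_{\tilde{\alpha}=0}$: collecting the downlink terms over the four ``both active'' states together with the ``$\mathrm{link}_s\in S_3$'' states and using $p_s^1+p_s^2+p_s^3=1$ and $\bar{p}_s^1=p_s^2+p_s^3$ telescopes the downlink part into $k\,p_d^1+p_d^2N$ with $N=p_s^1U+\bar{p}_s^1V$, while the $\alpha$--free uplink term becomes $-(k\,p_s^1U+p_s^2V)p_d^3$; their sum is precisely the right--hand side $M(\Gamma)$. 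Hence $\lambda_{BS}\leq\mu_{BS}$ is exactly (\ref{eq:QBSstabConstraint_gen}).

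The step I expect to be the real work — and the only delicate one — is the case bookkeeping of the two middle paragraphs: keeping track of which scheduling factor governs each SNR state (in particular at occupied versus empty states of $Q_{BS}$, and when one of the two links is silenced in $S_3$, where the fraction parameter does not apply), and then checking that the $\alpha$--independent remainder genuinely collapses to the compact form $M(\Gamma)$ rather than leaving stray scheduling factors; the identity $\bar{p}_s^1=p_s^2+p_s^3$ is the one that makes the downlink sum close up. Everything else — the invocation of Loynes' theorem and the birth--death recurrence criterion, and the final rearrangement — is routine.
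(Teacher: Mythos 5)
Your proposal is correct and follows essentially the same route as the paper: the paper likewise models $\tilde{Q}_{BS}$ as the birth--death chain with aggregated transition probabilities $a_{BS}^1=a_{12}+ka_{11}$ and $b_{BS}=b_{12}+kb_{11}$, invokes Loynes' criterion, reduces it to $a_{BS}^1<b_{BS}$ (the drift condition at occupied states), and expands the same policy- and SNR-state-conditioned expressions to obtain the coefficients $2k,\,k+1,\,k+1,\,2$ and the constant term. The only cosmetic difference is that the paper first computes $\tilde{\varPi}_0$ explicitly and shows $a_{BS}<b_{BS}\Leftrightarrow a_{BS}^1<b_{BS}$, whereas you state the occupied-state drift condition directly, which is equivalent.
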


\begin{proof}
See Appendix-\ref{A_gen}.
\end{proof}
\newpage
\begin{theorem}
\label{th3_gen}
The approximated stability region for the 3-UEs cellular scenario is the set of $ \left(\lambda_s,\lambda_u\right)  \in \tilde{\mathcal{R}}_c^{ss}$ such that:

\[
\tilde{\mathcal{R}}_{c}^{ss}=co\left(\bigcup\limits_{\Gamma \in \Omega^{ss}_{\Gamma}}\bigcup\limits_{\tilde{\alpha} \in \mathbb{S}_{\alpha} }\lbrace  \tilde{\mu}_s\left(\tilde{\alpha},\Gamma\right) , \tilde{\mu}_u\left(\tilde{\alpha},\Gamma\right)  \rbrace \right)
\]

where  the queues' service rates $\tilde{\mu}_s \left(\Gamma\right) $  and $\tilde{\mu}_u\left(\Gamma\right) $ are respectively given by (\ref{eqn_mu1c_gen}) and (\ref{eqn_mu2c_gen}), $ \mathbb{S}_{\alpha}$ is a limited subset that is simply computed by (\ref{Set_alpha_gen}). $\Gamma \in \Omega_{\Gamma}^{ss}$ where $\Omega_{\Gamma}^{ss}$ is the set of priority policies that achieve the border of the stability region (with $|\Omega_{\Gamma}^{ss}|=6$). The values of the parameters  $U\left(\Gamma\right)$, $V\left(\Gamma\right)$, $W\left(\Gamma\right)$, $X\left(\Gamma\right)$, $Y\left(\Gamma\right)$ and $Z\left(\Gamma\right)$ for the six priority policies $\Gamma \in \Omega_{\Gamma}^{ss}$ are given in table \ref{table.Policy-Cell}. Recall that $A=\left(k p_d^{1}+p_d^{2}N\right)-\left(k p_s^{1}U+p_s^{2}V\right)p_d^{3}$.

\[\tilde{\mu}_s:=\tilde{\mu}_s\left(\Gamma\right)=\frac{1}{2}\left(r_{1}p_s^{1}U+r_{2}p_s^{2}V\right)\times\]
\begin{equation}
\label{eqn_mu1c_gen}
\resizebox{0.9\hsize}{!}{$\left(1+\frac{\left(1-k\right)\tilde{\alpha}_{2}^{*}p_s^{1}p_d^{2}U+\left(k-1\right)\tilde{\alpha}_{3}^{*}p_s^{2}p_d^{1}U-\left(k p_d^{1}+p_d^{2}N\right)+\left(k p_s^{1}U+p_s^{2}V\right)\bar{p}_d^{3}}{2k\tilde{\alpha}_{1}^{*}p_s^{1}p_d^{1}W+\left(k+1\right)\tilde{\alpha}_{2}^{*}p_s^{1}p_d^{2}U+\left(k+1\right)\tilde{\alpha}_{3}^{*}p_s^{2}p_d^{1}U+2\tilde{\alpha}_4^{*}p_s^{2}p_d^{2}U-\left(k p_d^{1}+p_d^{2}N\right)-\left(k p_s^{1}U+p_s^{2}V\right)\bar{p}_d^{3}}\right)$}
\end{equation}
\[\tilde{\mu}_d:=\tilde{\mu}_d\left(\Gamma\right)=r_{1}p_u^{1}W+r_{2}p_u^{2}X\]
\begin{equation}
\label{eqn_mu2c_gen}
\resizebox{0.9\hsize}{!}{$+\frac{\left(r_{1}p_u^{1}\left(W-Y\right)+r_{2}p_u^{2}\left(X-Z\right)\right)\left(k p_s^{1}U+p_s^{2}V\right)}{2k\tilde{\alpha}_{1}^{*}p_s^{1}p_d^{1}U+\left(k+1\right)\tilde{\alpha}_{2}^{*}p_s^{1}p_d^{2}U+\left(k+1\right)\tilde{\alpha}_{3}^{*}p_s^{2}p_d^{1}U+2\tilde{\alpha}_{4}^{*}p_s^{2}p_d^{2}V-\left(k p_d^{1}+p_d^{2}N\right)-\left(k p_s^{1}U+p_s^{2}V\right)\bar{p}_d^{3}}$}
\end{equation}

\begin{equation}
\begin{split}
\resizebox{1\hsize}{!}{$\mathbb{S}_{\alpha}=
\left(\begin{array}{ccc}
{\tilde{\alpha}_2,\tilde{\alpha}_3}\epsilon\left\{ 0,1\right\}^{2}  & s.t. & \left(k+1\right)\left({\tilde{\alpha}}_{2}p_s^1p_d^2+{\tilde{\alpha}}_{3}p_s^{2}p_d^{1}\right)U\leq M\,\,, \tilde{\alpha}_1=\tilde{\alpha}_4=0\,\,\,\,\,\,\,\,\,\,\,\,\,\,\,\,\,\,\,\,\,\,\,\,\,\,\,\,\,\,\,\,\,\,\,\,\,\,\,\,\,\,\,\,\,\,\,\,\,\,\,\,\,\,\,\,\,\,\,\,\,\,\,\,\,\,\,\,\,\,\,\,\,\,\,\,\,\,\,\,\,\,\,\,\,\,\,\,\,\,\,\,\,\,\,\,\,\,\,\,\,\,\,\,\,\,\,\\
& &\\
{\tilde{\alpha}}_{2},{\tilde{\alpha}}_{3}\epsilon\left\{ 0,1\right\} ^{2} & \,s.t.  & {\tilde{\alpha}}_{1},{\tilde{\alpha}}_{4}\epsilon\left[0,1\right]^{2} \text{ verify } 2k{\tilde{\alpha}}_{1}p_s^{1}p_d^{1}U+2{\tilde{\alpha}}_{4}p_s^{2}p_d^{2}V=M-\left(k+1\right)\left({\tilde{\alpha}}_{2}p_s^{1}p_d^{2}U+{\tilde{\alpha}}_{3}p_s^{2}p_d^{1}U\right) \,\,\,\,\,\,\,\,\,\,\,\,\,\,\,\,\, \\
& &\\
{\tilde{\alpha}}_{2}\epsilon\left\{ 0,1\right\}  & s.t. & {\tilde{\alpha}}_{3}=\left[{M-\left(k+1\right){\tilde{\alpha}}_{2}p_s^{1}p_d^{2}U}\right]/\left[{\left(k+1\right)p_s^{2}p_d^{1}U}\right], {\tilde{\alpha}}_{1}={\tilde{\alpha}}_{4}=0\,\,\,\,\,\,\,\,\,\,\,\,\,\,\,\,\,\,\,\,\,\,\,\,\,\,\,\,\,\,\,\,\,\,\,\,\,\,\,\,\,\,\,\,\,\,\,\,\,\,\,\,\,\,\,\,\,\,\,\,\,\,\,\,\,\,\,\,\,\,\,\,\,\,\,\\
& &\\
{\tilde{\alpha}}_{2}\epsilon\left\{ 0,1\right\}  & s.t. & {\tilde{\alpha}}_{3}=\left[{M-2k p_s^{1}p_d^{1}U-2p_s^{2}p_d^{2}V-\left(k+1\right){\tilde{\alpha}}_{2}p_s{1}p_d^{2}U}\right]/\left[{\left(k+1\right)p_s^{2}p_d^{1}U}\right],  {\tilde{\alpha}}_{1}={\tilde{\alpha}}_{4}=1\,\,\,\,\,\,\,\,\,\,\,\,\,\,\,\,\,\\
& &\\
{\tilde{\alpha}}_{3}\epsilon\left\{ 0,1\right\}  & s.t. & {\tilde{\alpha}}_{2}=\left[{M-\left(k+1\right){\tilde{\alpha}}_{3}p_s^{2}p_d^{1}U}\right]/\left[{\left(k+1\right)p_s^{1}p_d^{2}U}\right], {\tilde{\alpha}}_{1}={\tilde{\alpha}}_{4}=0\,\,\,\,\,\,\,\,\,\,\,\,\,\,\,\,\,\,\,\,\,\,\,\,\,\,\,\,\,\,\,\,\,\,\,\,\,\,\,\,\,\,\,\,\,\,\,\,\,\,\,\,\,\,\,\,\,\,\,\,\,\,\,\,\,\,\,\,\,\,\,\,\,\,\\
& &\\
{\tilde{\alpha}}_{3}\epsilon\left\{ 0,1\right\}  & s.t. & {\tilde{\alpha}}_{2}=\left[{M-2k p_s^{1}p_d^{1}U-2p_s^{2}p_d^{2}V-\left(k+1\right){\tilde{\alpha}}_{3}p_s^{2}p_d^{1}U}\right]/\left[{\left(k+1\right)p_s^{1}p_d^{2}U}\right], {\tilde{\alpha}}_{1}={\tilde{\alpha}}_{4}=1\,\,\,\,\,\,\,\,\,\,\,\,\,\,\,\,\,\,\,\\
\end{array}\right)$}\\
\end{split}
\label{Set_alpha_gen}
\end{equation} 
\hrulefill
\end{theorem}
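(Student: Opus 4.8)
The plan is to carry everything through the approximated birth--death Markov chain for $Q_{BS}$ drawn in the lower half of Figure~\ref{fig.QBS_real_vs_approx_gen}, whose unit up/down transition probabilities are chosen equal to the per-slot expected numbers of packets entering and leaving $Q_{BS}$, so that the multi-rate model survives only inside the transition probabilities. For such a chain the stationary probability $\tilde{\varPi}^0_{BS}$ of the empty state is an elementary rational expression in the effective birth rate $b$ (which takes a different value in the empty state, where the whole \textit{UE2UE} budget must be spent on the uplink) and the effective death rate $\delta$, and is well-defined exactly when $b<\delta$. First I would write $b$ and $\delta$ explicitly in terms of the channel probabilities $p_s^i,p_d^i$, the policy-dependent parameters $U,V,W,X,Y,Z$ of Table~\ref{table.Policy-Cell}, and the fractions $\tilde\alpha_1,\dots,\tilde\alpha_4$ attached to the four SNR couples $(S_1,S_1),(S_1,S_2),(S_2,S_1),(S_2,S_2)$: the uplink fractions $\tilde\alpha_i$ feed $b$, the downlink fractions $1-\tilde\alpha_i$ feed $\delta$. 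Solving the chain and reading off $\tilde{\varPi}^0_{BS}$ makes the stability condition $b<\delta$ explicit, and it is precisely the inequality~\eqref{eq:QBSstabConstraint_gen}; hence $Q_{BS}$ is stable iff $\tilde\alpha$ obeys Lemma~\ref{th2_gen}, which I may invoke directly.

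Substituting this closed form for $\tilde{\varPi}^0_{BS}$ into the service rates of $Q_s$ and $Q_u$ --- the approximated-chain counterparts of~\eqref{eqn_mu1c_real}--\eqref{eqn_mu2c_real}, in which $\tilde\mu_s$ is the ``$Q_{BS}$ empty'' uplink rate $r_1p_s^1U+r_2p_s^2V$ weighted by $\tilde{\varPi}^0_{BS}$ plus the $\tilde\alpha$-modulated ``$Q_{BS}$ non-empty'' uplink rate weighted by $1-\tilde{\varPi}^0_{BS}$, and $\tilde\mu_u$ is $(r_1p_u^1W+r_2p_u^2X)\tilde{\varPi}^0_{BS}+(r_1p_u^1Y+r_2p_u^2Z)(1-\tilde{\varPi}^0_{BS})$ --- and simplifying produces exactly~\eqref{eqn_mu1c_gen} and~\eqref{eqn_mu2c_gen}. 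This is routine algebra once the first step is done; the denominators that appear are, up to the scalings forced by $r_1=kr_2$, the two sides of~\eqref{eq:QBSstabConstraint_gen} and the quantity $A$.

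It remains to pin down, for each priority policy $\Gamma\in\Omega^{ss}_\Gamma$, the corner point of the achievable region. Letting $\tilde\alpha$ range over the polytope $P_\Gamma$ of vectors in $[0,1]^4$ that satisfy~\eqref{eq:QBSstabConstraint_gen}, the achievable pairs $(\tilde\mu_s(\tilde\alpha,\Gamma),\tilde\mu_u(\tilde\alpha,\Gamma))$ form a region in the $(\tilde\mu_s,\tilde\mu_u)$-plane; I would argue that its Pareto frontier is attained on the boundary of $P_\Gamma$, namely on the facet where~\eqref{eq:QBSstabConstraint_gen} is tight together with the box faces $\tilde\alpha_i\in\{0,1\}$, since moving more uplink service onto $Q_{BS}$ increases $\tilde\mu_s$ and changes $\tilde\mu_u$ monotonically (with direction fixed by the signs of $W-Y$ and $X-Z$ in Table~\ref{table.Policy-Cell}), so slack is never useful. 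On that boundary the coefficients of $\tilde\alpha_1,\tilde\alpha_4$ in~\eqref{eq:QBSstabConstraint_gen} (the ``$2k$'' and ``$2$'' terms) are structurally different from those of $\tilde\alpha_2,\tilde\alpha_3$ (the ``$k+1$'' terms), a consequence of $r_1=kr_2$; this forces the extreme points to occur with $\tilde\alpha_2,\tilde\alpha_3$ driven to $\{0,1\}$ and with $\tilde\alpha_1,\tilde\alpha_4$ either both $0$, both $1$, or running along the equality edge. Enumerating these configurations gives exactly the six alternatives of~\eqref{Set_alpha_gen}, i.e. the finite set $\mathbb{S}_\alpha$ that is guaranteed to contain the optimal $\tilde\alpha^*$. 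Since any other scheduling is a convex combination of priority policies and any interior $\tilde\alpha$ yields a dominated point, Definition~\ref{Def.Conv} then gives $\tilde{\mathcal{R}}^{ss}_c$ as the $co$-hull of the union of these corner points over $\Gamma\in\Omega^{ss}_\Gamma$ and $\tilde\alpha\in\mathbb{S}_\alpha$, which is the claimed expression.

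\textbf{Main obstacle.} The delicate step is the last one: proving that the optimal $\tilde\alpha^*$ always lies in the small set $\mathbb{S}_\alpha$. This requires (i) a monotonicity argument on the two ratio-of-affine functions $\tilde\mu_s,\tilde\mu_u$ showing the Pareto frontier sits on the tight facet of $P_\Gamma$ --- awkward precisely because $\tilde\mu_s$ and $\tilde\mu_u$ are ratios of affine functions of $\tilde\alpha$, so one must track the sign of their gradients on that facet rather than rely on plain linearity --- and (ii) a vertex-and-edge enumeration of that facet inside $[0,1]^4$ in which the asymmetry coming from $r_1=kr_2$ is exactly what forces $\tilde\alpha_2$ and $\tilde\alpha_3$ to their endpoints. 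Solving the birth--death chain, performing the algebraic substitution, and the final convex-hull argument are comparatively mechanical.
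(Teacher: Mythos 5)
Your proposal follows essentially the same route as the paper's own proof: it builds the same approximated birth--death chain for $Q_{BS}$, obtains the closed form of $\tilde{\varPi}^0_{BS}$ and the stability condition of Lemma~\ref{th2_gen}, substitutes into the conditional service rates to get (\ref{eqn_mu1c_gen})--(\ref{eqn_mu2c_gen}), and then locates the optimal $\tilde{\alpha}^*$ by an active-constraint/vertex enumeration of the feasible polytope (the paper phrases this as a linear-fractional program whose optimum sits where the stability or box constraints are tight, noting separately that $\tilde{\mu}_u$ is maximized at $\tilde{\alpha}=0$ while $\tilde{\mu}_s$ is pushed to the boundary). The obstacle you flag is exactly the step the paper resolves by its case analysis yielding $\mathbb{S}_\alpha$, so the proposal is correct and matches the paper's argument.
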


\begin{proof}
See Appendix-\ref{B_gen}.
\end{proof}


\subsection{Comparison real and approximation \label{subsec:SimpleSc_Comp_Approx_Real}}
The exact stability region has not an explicit form because of the complicated Markov Chain of the queue Q$_{BS}$ that has not an explicit expression of $\varpi^0_{Bs}$ (but a solution of a system of equations). Moreover, this complexity is accentuated by the need of varying $\alpha$ within all the interval $\left[0,1 \right]^4$ in order to elaborate the exact stability region. However, the approximated stability region has an explicit expression of  both parameters: (i) probability that the BS queue is empty and (ii) the optimal fraction vector that achieves the corner point of this region. Hence, the importance of this approximated model lies on the proposition of a simple analytic form of the stability region. Furthermore, we verify that the approximated stability region is an epsilon-close upper bound of the exact stability region. To do so, we demonstrate analytically that the relative error between these two regions is positive and bounded.\\

\begin{theorem}
\label{th_comparison_gen}
For the 3-Users cellular scenario, the approximated stability region $\tilde{\mathcal{R}}_{c}^{ss}$ is a close upper bound of the exact stability region ${\mathcal{R}}_{c}^{ss}$ with a maximum relative error $\epsilon^*_R$. Therefore, ${\mathcal{R}}_{c}^{ss}$ is bounded as it follows:
\[
 \left(1-\epsilon_R^*\right)\tilde{\mathcal{R}}_c^{ss} \subseteq  {\mathcal{R}}_c^{ss} \subseteq \tilde{\mathcal{R}}_c^{ss}
\]
with 
\[
\epsilon^*_R= \max \limits_{\Gamma \in  \Omega^{ss}_{\Gamma}} \frac{kb_{11}\left(\tilde{\alpha}^*\right)\sum\limits_{i=1}^{k-1}\varPi_i\left(\tilde{\alpha}^*\right)\left(a_{02}+ka_{01}-a_{12}\left(\tilde{\alpha}^*\right)-ka_{11}\left(\tilde{\alpha}^*\right)\right)}{\left(a_{02}+ka_{01} \right)\left(b_{12}\left(\tilde{\alpha}^*\right)+kb_{11}\left(\tilde{\alpha}^*\right)\right)}
\]
and the following parameters depend on the priority policy $\Gamma$:
\begin{itemize}
\item $\tilde{\alpha}^*$ the optimum fraction vector computed analytically in $\mathbb{S}_\alpha$ (given by (\ref{Set_alpha_gen})) and that achieves the corner point of the approximated stability region.
\item $a_{01}$ and $a_{02}$ are the arrival probabilities at Q$_{BS}$ when this queue is empty with a respective rates $r_1$ and $r_2$  (see equations (\ref{QBS_a01_gen}) and (\ref{QBS_a02_gen})).
\item $a_{11}\left(\tilde{\alpha}^*\right)$ and $a_{12}\left(\tilde{\alpha}^*\right)$ are the arrival probabilities at Q$_{BS}$ when this queue is not empty with a respective rates $r_1$ and $r_2$ (see equations (\ref{QBS_a11_gen}) and (\ref{QBS_a12_gen})).
\item $b_{11}\left(\tilde{\alpha}^*\right)$ and $b_{12}\left(\tilde{\alpha}^*\right)$ are the departure probabilities from Q$_{BS}$ at rate $r_1$ and $r_2$ respectively. They depend on the fraction vector $\alpha$ (see equations (\ref{QBS_b1_gen}) and (\ref{QBS_b2_gen})).
\end{itemize}
\end{theorem}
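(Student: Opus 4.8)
The plan is to exploit the fact that the exact and approximated stability regions differ \emph{only} through the Markov chain modeling of the queue $Q_{BS}$, and that for a fixed priority policy $\Gamma$ and a fixed fraction vector the two chains share the same per-state arrival and departure probabilities at rates $r_1,r_2$. First I would fix $\Gamma\in\Omega_\Gamma^{ss}$ and work with the exact birth–death-type chain of $Q_{BS}$ (the upper chain of Figure \ref{fig.QBS_real_vs_approx_gen}), whose states near $0$ carry the fine structure induced by the ratio $k=r_1/r_2$, and with the approximated chain (the lower chain), a plain birth–death chain with averaged transition probabilities. The key observation is that the service rate $\mu_u$ (equivalently $\mu_d$) is an affine, monotone function of $\Pi^0_{BS}$, the probability that $Q_{BS}$ is empty, so controlling the gap between the two stability regions reduces to controlling the gap between the exact $\varpi^0_{BS}$ and the approximated $\tilde\varpi^0_{BS}$, and likewise for the associated optimal fraction vectors $\alpha^*$ versus $\tilde\alpha^*$. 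I would evaluate both chains at the common fraction vector $\tilde\alpha^*$ (the one available in closed form via $\mathbb{S}_\alpha$), which makes the comparison apples-to-apples and isolates the discrepancy entirely in the internal transition structure of the exact chain.

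Next I would write the balance equations for the exact chain explicitly in terms of the parameters $a_{01},a_{02}$ (arrival probabilities at rates $r_1,r_2$ when $Q_{BS}$ is empty), $a_{11}(\tilde\alpha^*),a_{12}(\tilde\alpha^*)$ (arrival probabilities when non-empty), and $b_{11}(\tilde\alpha^*),b_{12}(\tilde\alpha^*)$ (departure probabilities), and solve for the stationary probabilities $\varPi_0,\varPi_1,\dots,\varPi_{k-1}$ of the "boundary" states $0,1,\dots,k-1$ — these are precisely the states the approximated chain collapses. Summing the exact global-balance relation and comparing with the approximated one, the leading-order mismatch in the departure flow is $k\,b_{11}(\tilde\alpha^*)\sum_{i=1}^{k-1}\varPi_i(\tilde\alpha^*)$ weighted by the arrival-rate defect $a_{02}+k a_{01}-a_{12}(\tilde\alpha^*)-k a_{11}(\tilde\alpha^*)$, which is exactly the numerator of $\epsilon^*(\Gamma)$; the denominator $(a_{02}+k a_{01})(b_{12}(\tilde\alpha^*)+k b_{11}(\tilde\alpha^*))$ is the corresponding normalizing throughput term, so the relative error on the service rate for policy $\Gamma$ is bounded by this ratio. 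Since the approximated chain averages the true per-state rates, one also checks the sign: the approximation overcounts the emptiness probability in a controlled way, giving $\tilde\mu\ge\mu$ componentwise, hence $\mathcal{R}_c^{ss}\subseteq\tilde{\mathcal{R}}_c^{ss}$, and the relative-error bound upgrades to the scaled-inclusion $(1-\epsilon_R^*)\tilde{\mathcal{R}}_c^{ss}\subseteq\mathcal{R}_c^{ss}$ after maximizing $\epsilon^*(\Gamma)$ over the six corner policies and invoking the $co(\cdot)$ construction of Definition \ref{Def.Conv} (the convex hull preserves a uniform relative shrink of the vertices).

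I would then assemble the pieces: (i) monotonicity and affinity of $\mu_s,\mu_u$ in $\Pi^0_{BS}$ from (\ref{eqn_mu1c_real})–(\ref{eqn_mu2c_real}); (ii) the explicit solution of the exact boundary balance equations giving $\varPi_i(\tilde\alpha^*)$; (iii) the sign argument for the one-sided inclusion; (iv) the per-policy relative bound $\epsilon^*(\Gamma)$ and its maximum over $\Omega_\Gamma^{ss}$; (v) stability of the bound under $co(\cdot)$. The main obstacle I expect is step (ii): the exact chain is not a pure birth–death chain near the boundary because a single transmission at rate $r_1$ removes $k$ packets, so the states $0,\dots,k-1$ are coupled in a band-structured (not tridiagonal) way, and solving for $\sum_{i=1}^{k-1}\varPi_i$ in closed form — or at least bounding it tightly enough that the stated $\epsilon^*_R$ comes out exactly rather than as a loose estimate — requires careful bookkeeping of the $k$-step down-transitions. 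A secondary subtlety is verifying that $\tilde\alpha^*$, optimal for the approximated region, is "good enough" when plugged into the exact chain that no additional error term beyond $\epsilon^*(\Gamma)$ appears; I would handle this by showing the exact service rate is non-decreasing along the same direction in which $\tilde\alpha^*$ was chosen to push the boundary, so using $\tilde\alpha^*$ in the exact model can only lose throughput by an amount already absorbed into the numerator of $\epsilon^*(\Gamma)$.
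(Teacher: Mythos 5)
Your proposal follows essentially the same route as the paper's proof: the paper likewise reduces the comparison to the gap $\varPi_0-\tilde{\varPi}_0$, extracts the exact identity $\varPi_{0}=\tilde{\varPi}_{0}-\frac{kb_{11}\sum_{n=1}^{k-1}\varPi_{n}}{b_{12}+kb_{11}-a_{12}-ka_{11}+a_{02}+ka_{01}}$ by summing the balance equations (formalized there via the generating function $G(z)$ evaluated at $z=1$, which is the same computation you describe), uses $\mu_s^1\leq\mu_s^0$ for the sign/one-sided inclusion, forms the per-policy relative error $\epsilon^*(\Gamma)$, maximizes over the six corner policies, and transfers the vertex bounds to the regions. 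The one obstacle you flag — obtaining $\sum_{i=1}^{k-1}\varPi_i$ in closed form — is not actually needed, since that sum appears as a parameter inside the stated $\epsilon_R^*$ (it is still computed from the linear system of Theorem \ref{th_ss_cellular_real}); otherwise your outline matches the paper's Steps 1--4.
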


\begin{proof}
See Appendix-\ref{th_comp_gen}.
\end{proof}

Hence, for each priority policy we find $\epsilon\left(\Gamma\right)$ which correspond to the deviation between both the exact corresponding corner point and its approximation. We verified that $\epsilon\left(\Gamma\right) \geq 0$ for all  $\Gamma \in  \Omega^{ss}_{\Gamma}$, hence $\tilde{\mathcal{R}}_c^{ss}$ presents an upper bound of the exact stability region $\mathcal{R}_c^{ss}$. As one can see in the numerical section, the epsilon difference between the exact stability region and its upper bound is small. Hence, we highly reduce the complexity by finding an explicit and close upper bound of the exact stability region. \\
\section{Multi-UE scenario \label{sec:MUE_scenario}}

For the general scenario, we consider $K$ \textit{UE2UE communications} between pair of UEs (UE$_{i,s}$ and UE$_{i,d}$) and $U$ \textit{UE2BS communications} between BS and UE$_{i,u}$. That means that in total $2K+U$ users are considered in the cell. For the multi-user case, the study is applied for a set of two bit rates $[r_1,r_2]$ with $r_2=0$. Unfortunately, considering a set of 3 bit rates as in section \ref{sec:3UE_scenario} is complex and hard to compute and to simplify. Thus, we study the case of two bit rates. Even for this case, the characterization of the exact stability region, given by theorem \ref{th1_oneRate_MU}, remains computationally complex for the multi-user case. Therefore, we limit the complexity of the exact stability region in theorem \ref{th2_oneRate_MU}). However, the complexity remains important. Hence, we propose in theorem \ref{th3_oneRate_MU} a simple approximation of the exact stability region that is characterized by the following: (i) highly reducing the complexity of the exact stability region and (ii) being an $\epsilon$-close approximation of the exact stability region (it means that the maximum distance between the approximated and exact stability regions is equal to a small number $\epsilon$). The importance of this result lies on shifting a very complex problem to a simple approximated one with a low complexity and a high preciseness. A trade-off exists between the precision of the stability region and the complexity of characterizing this region. \\

In order to to simplify the presentation of the results in the sequel, we use the following notation of the transmission probabilities at rate $r_1$: $p_{i,s}=p^1_{i,s}$, $\bar{p}_{i,s}=1-p^1_{i,s}$, $p_{i,d}=p^1_{i,d}$, $\bar{p}_{i,d}=1-p^1_{i,d}$, $p_{i,u}=p^1_{i,u}$, $\bar{p}_{i,u}=1-p^1_{i,u}$, $p_{i,sd}=p^1_{i,sd}$ and $\bar{p}_{i,sd}=1-p^1_{i,sd}$.\\

We denote by $\Gamma$ the priority policy under which the users are sorted. The scheduling policy consists on choosing a UE if and only if all the more prioritized UEs are not able to transmit. We denote by $\Omega_\Gamma$ the set of all the possible sorting of the users. the number of existing priority policies consists of the number of the possible permutation of $K+U$ communications. Hence, $|\Omega_\Gamma|=\left(K+U\right)!$. We define the following two sets for each communication $i$ for all $1 \leq i \leq K+U$: 
\begin{itemize}
\item $\mathbf{U}_{\Gamma}\lbrace i \rbrace = \lbrace$ \textit{UE2BS communications} more prioritized than the communication i under a priority sorting $\Gamma \rbrace$
\item $\mathbf{K}_{\Gamma}\lbrace i \rbrace = \lbrace$ \textit{UE2UE communications} more prioritized than the communication i under a priority sorting $\Gamma \rbrace$
\end{itemize}

\begin{figure}[ptb]
\begin{centering}
\includegraphics[scale=0.75]{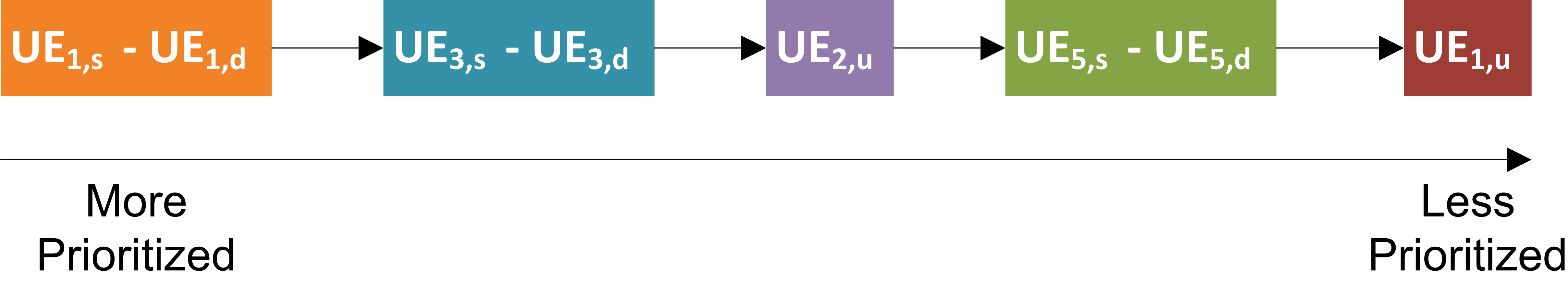}
\par\end{centering}
\caption{Example of one priority policy for $K=3$ and $U=2$. }
\label{fig.UE_sorting} 
\end{figure}

In Fig. \ref{fig.UE_sorting}, we present an example that illustrate one priority policy for a scenario of $K=3$ \textit{UE2UE communications} and $U=2$ \textit{UE2BS communications}. In this example, the considered priority policies gives the highest priority to the first \textit{UE2UE communication} ($UE_{1,s}-UE_{1,d}$)  and the lowest priority to the first \textit{UE2BS communication} ($UE_{1,u}-BS$). In this example, the sets $\mathbf{U}_{\Gamma}\lbrace 4 \rbrace$ and $\mathbf{K}_{\Gamma}\lbrace 4 \rbrace$ of the $UE_{1,u}$ are the following: $\mathbf{U}_{\Gamma}\lbrace 4 \rbrace=\lbrace 2 \rbrace$ and $\mathbf{K}_{\Gamma}\lbrace 4 \rbrace=\lbrace 1,3,5 \rbrace$. The sets $\mathbf{U}_{\Gamma}\lbrace 5 \rbrace$ and $\mathbf{K}_{\Gamma}\lbrace 5 \rbrace$ of the $UE_{5,s}-UE_{5,d}$ communication are the following: $\mathbf{U}_{\Gamma}\lbrace 2 \rbrace=\lbrace 2 \rbrace$ and $\mathbf{K}_{\Gamma}\lbrace 5 \rbrace=\lbrace 1,3 \rbrace$.

The vector that describe the arrival rates of the users' queues is given by:
\[
\bm{\lambda}=\left[ \lambda_{1,s} \,\,\, \lambda_{2,s}\,\,\,...\,\,\, \lambda_{K,s} \,\,\, \lambda_{1,u}\,\,\, \lambda_{2,u}\,\,\,  ... \,\,\, \lambda_{U,u} \right]
\]

 Here, we derive the stability region of the multi-UE cellular scenario where the devices that want to communicate with each other must exchange their packets through the BS.
 
We consider the following notation of the service rate:
\begin{itemize}
\item $\mu_{i,s}\left( \Gamma\right)$: service rate of the uplink of the $i^{th}$ \textit{UE2UE communication}
\item $\mu_{j,u}\left( \Gamma\right)$: service rate of the downlink of the $j^{th}$ \textit{UE2BS communication}\\
\end{itemize}

The traffic departure is time varying and depends on the scheduling allocation decision and the time varying channel conditions. The departure average rates from all the users queues for the cellular scenario is denoted by:
\[
\bm{\mu}\left(\Gamma\right)=\left[ \mu_{1,s}\left(\Gamma\right) \,\,\, \mu_{2,s}\left(\Gamma\right) \,\,\,...\,\,\, \mu_{K,s}\left(\Gamma\right) \,\,\, \mu_{1,u}\left(\Gamma\right) \,\,\, \mu_{2,u}\left(\Gamma\right)\,\,\,  ... \,\,\, \mu_{U,u}\left(\Gamma\right) \right]
\]

We assume a user scheduling such that only one communication is possible in each timeslot. Q$_{i,BS}$ might be empty at some timeslot, therefore when the UE$_{i,s}$-UE$_{i,d}$ communication is scheduled then the choice between uplink (UE$_{i,s}$ to BS) or downlink (BS to UE$_{i,d}$) depends not only on the SNR states of these two links but also on the state (empty or not) of the corresponding queue at BS Q$_{i,BS}$. In order to take that into account, we introduce in the analysis a new parameter $\alpha_i \in \left[0,1\right]$ for each \textit{UE2UE communication} ($1\leq i\leq K$) that describes the  fraction of time that the resources are respectively allocated to the uplink (UE$_{i,s}$ to BS). In other terms, $1-\alpha_i$ is the fraction of time that the resources are allocated to the downlink (BS to UE$_{i,d}$).  The fraction vector is mainly introduced for guaranteeing the stability region of the queues at the BS level. We denote by $\bm{\alpha}$ the fraction vector that considers all the \textit{UE2UE communications}.
\[
\bm{\alpha}=\left[\alpha_1, \alpha_2 , ..., \alpha_K \right]
\]
\subsection{Organization}
In the sequel, we start by characterizing the exact stability region of the multi-UE scenario in section \ref{ssec:MUE_exact}. This region is given by considering all the feasible priority policies $\Gamma \in \Omega_\Gamma$  and all the values of $\bm{\alpha}\in \left[0,1\right]^K$ are considered. Since computing this region is highly complex. Thus in section \ref{ssec:MUE_approx}, we start by reducing the complexity by limiting the number of fraction vector $\bm{\alpha}$ that should be considered for characterizing the stability region.  Furthermore, an epsilon-close approximation with a simple explicit form and a very low complexity is proposed. A trade-off is elaborated between the complexity and the precision of the stability region computation. 

\subsection{Exact stability region\label{ssec:MUE_exact}}
The theorem below gives the exact stability region of the multi-UE scenario. The complexity of the exact stability region is discussed directly below the theorem \ref{th1_oneRate_MU}.
\begin{theorem}
\label{th1_oneRate_MU}
For the multi-UE cellular scenario, the exact stability region is given by the set of $ \bm{\lambda} \in \mathcal{R}_c$ such that:
\[
\mathcal{R}_{c}=co\left(\bigcup\limits_{\Gamma \in \Omega_{\Gamma}}\bigcup\limits_{\alpha \in \left[\bm{0},\alpha^* \right]}\lbrace \bm{\mu}\left(\bm{\alpha},\Gamma \right) \rbrace\right)
\]
where $\bm{0}$ is the vector zero in $\mathbb{R}^{K}$, $\alpha^*=\left(\alpha_1^*, \alpha_2^*, ... ,\alpha_K^*\right)$ with $\forall 1\leq i \leq K: \alpha_i^*=\frac{p_{i,d}-p_{i,s}+p_{i,s}p_{i,d}}{2p_{i,s}p_{i,d}}$. $\Omega_{\Gamma}$ is the set of all the possible priority policies (with $|\Omega_\Gamma|=\left(K+U\right)!$). The elements of $\bm{\mu} \left( \Gamma \right)$ which are  $\mu_{i,s}\left( \Gamma\right)$ and $\mu_{j,u}\left( \Gamma\right)$ (for $1 \leq i\leq K$, $1 \leq j\leq U$) are respectively given by (\ref{eqn_mu1c_oneRate_MU}) and (\ref{eqn_mu2c_oneRate_MU}).

\[
\mu_{i,s}\left(\Gamma\right)=\frac{1}{2}r_1p_{i,s}\left(1+\frac{\bar{p}_{i,s}p_{i,d}}{-2\alpha_i p_{i,s}p_{i,d}+\left(1+p_{i,s}\right)p_{i,d}}\right)\prod_{m \in \mathbf{U}_{\Gamma}\lbrace i \rbrace}\bar{p}_{m,u}
\]
\begin{equation}
\label{eqn_mu1c_oneRate_MU}
\times \prod_{n \in \mathbf{K}_{\Gamma}\lbrace i \rbrace}\bar{p}_{n,s}\left[1+ \frac{{p}_{n,s}p_{n,d}}{2\alpha_n p_{n,s}p_{n,d}-\left(1+p_{n,s}\right)p_{n,d}} \right]
\end{equation}
\begin{equation}
\label{eqn_mu2c_oneRate_MU}
\mu_{j,u}\left(\Gamma\right)=r_1p_{j,u}\prod_{m \in \mathbf{U}_{\Gamma}\lbrace j+K \rbrace}\bar{p}_{m,u}\prod_{n \in \mathbf{K}_{\Gamma}\lbrace j+K\rbrace}\bar{p}_{n,s}\left[1+ \frac{{p}_{n,s}p_{n,d}}{2\alpha_n p_{n,s}p_{n,d}-\left(1+p_{n,s}\right)p_{n,d}} \right]
\end{equation}
\end{theorem}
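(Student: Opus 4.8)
The plan is to fix a priority policy $\Gamma\in\Omega_{\Gamma}$ together with a fraction vector $\bm{\alpha}\in[\bm{0},\alpha^{*}]$, show that under this policy each relay queue $Q_{i,BS}$ evolves as a one‑dimensional birth–death chain, solve it in closed form for its empty probability, read off the service rates, and finally take the convex hull over all admissible pairs $(\Gamma,\bm{\alpha})$ via Definition~\ref{Def.Conv}. The outer step uses the fact already recorded in the paper that every scheduling policy is a convex combination of priority policies: this gives the converse (no $\bm\lambda$ outside the hull is stable, by Loynes' theorem applied componentwise) while time‑sharing among priority policies gives achievability of every interior point, so $\mathcal{R}_{c}=co\bigl(\bigcup_{\Gamma}\bigcup_{\bm{\alpha}}\{\bm{\mu}(\bm{\alpha},\Gamma)\}\bigr)$.

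For a fixed $\Gamma$ I would process the communications in decreasing order of priority. Since all source queues are saturated, whether communication $i$ is able to transmit in a slot depends only on the i.i.d.\ SNR states of its links and — for a \textit{UE2UE communication} — on whether $Q_{i,BS}$ is empty; hence the event that $i$ is the scheduled communication in a slot (all more prioritized communications blocked) is independent of $i$'s own link states and queue content, with probability $\Theta_{i}=\prod_{m\in\mathbf{U}_{\Gamma}\lbrace i\rbrace}\bar{p}_{m,u}\prod_{n\in\mathbf{K}_{\Gamma}\lbrace i\rbrace}\bar{p}_{n,s}\bigl(\bar{p}_{n,d}+p_{n,d}\varPi^{0}_{n,BS}\bigr)$, where $\bar{p}_{m,u}$ is the probability a \textit{UE2BS} link has zero rate and $\bar{p}_{n,s}(\bar{p}_{n,d}+p_{n,d}\varPi^{0}_{n,BS})$ is the probability that a more prioritized \textit{UE2UE communication} $n$ is blocked (its uplink at zero rate, and either its downlink at zero rate or $Q_{n,BS}$ empty). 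The quantities $\varPi^{0}_{n,BS}$ for $n$ more prioritized than $i$ have already been determined at this stage, so the recursion down the priority order is well founded.

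Next I would analyze $Q_{i,BS}$ conditioned on communication $i$ being scheduled. From the empty state it gains a packet with probability $p_{i,s}$ (the uplink is served whenever it has positive rate, the downlink being idle); from a non‑empty state it gains one with probability $a_{i}=p_{i,s}(\alpha_{i}p_{i,d}+\bar{p}_{i,d})$ and loses one with probability $d_{i}=p_{i,d}(1-\alpha_{i}p_{i,s})$, obtained by enumerating the four joint uplink/downlink rate states and applying the $\alpha_{i}$‑biased split when both are positive. Because $r_{2}=0$ the jumps have size $\pm1$, so this is a genuine birth–death chain; its balance equations — in which the common scheduling factor $\Theta_{i}$ cancels from all ratios — give positive recurrence iff $a_{i}<d_{i}$, i.e.\ $\alpha_{i}<\alpha_{i}^{*}=\frac{p_{i,d}-p_{i,s}+p_{i,s}p_{i,d}}{2p_{i,s}p_{i,d}}$ (justifying the box $[\bm{0},\alpha^{*}]$), and yield $\varPi^{0}_{i,BS}=\frac{d_{i}-a_{i}}{d_{i}-a_{i}+p_{i,s}}$ with $d_{i}-a_{i}+p_{i,s}=(1+p_{i,s})p_{i,d}-2\alpha_{i}p_{i,s}p_{i,d}$; substituting gives $\bar{p}_{i,d}+p_{i,d}\varPi^{0}_{i,BS}=1+\frac{p_{i,s}p_{i,d}}{2\alpha_{i}p_{i,s}p_{i,d}-(1+p_{i,s})p_{i,d}}$, the factor appearing in the products of both formulas. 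The uplink rate of \textit{UE2UE communication} $i$ is then $\mu_{i,s}=r_{1}\Theta_{i}\bigl[\varPi^{0}_{i,BS}p_{i,s}+(1-\varPi^{0}_{i,BS})a_{i}\bigr]=r_{1}\Theta_{i}\frac{p_{i,s}d_{i}}{d_{i}-a_{i}+p_{i,s}}$, which using $2d_{i}=(d_{i}-a_{i}+p_{i,s})+\bar{p}_{i,s}p_{i,d}$ rearranges to the form in (\ref{eqn_mu1c_oneRate_MU}); and since the source of \textit{UE2BS communication} $j$ is saturated and its link serves at rate $r_{1}$ exactly when it is in the positive‑rate state and the communication is active, $\mu_{j,u}=r_{1}p_{j,u}\Theta_{j+K}$, i.e.\ (\ref{eqn_mu2c_oneRate_MU}). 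Collecting these into $\bm{\mu}(\bm{\alpha},\Gamma)$ and applying $co$ over all $\Gamma$ and all $\bm{\alpha}\in[\bm{0},\alpha^{*}]$ gives the claimed $\mathcal{R}_{c}$.

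I expect the main obstacle to be making the second and third steps rigorous: one must argue precisely that, under a fixed priority policy, conditioning on the (mutually independent across slots) states of higher‑priority communications truly decouples the dynamics so that each $Q_{i,BS}$ is Markov on its own; handle the nested dependence of $\Theta_{i}$ on the empty probabilities of higher‑priority relay queues as a well‑founded recursion along the priority order (a communication's own relay‑queue dynamics never feed back to more prioritized ones); and be careful that the factor $\Theta_{i}$ cancels in the stationary ratios (so $\varPi^{0}_{i,BS}$ depends only on $a_{i},d_{i}$) yet still multiplies the absolute rate $\mu_{i,s}$. The remaining algebra matching the closed forms to (\ref{eqn_mu1c_oneRate_MU})–(\ref{eqn_mu2c_oneRate_MU}) is routine.
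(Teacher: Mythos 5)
Your proposal is correct and follows essentially the same route as the paper's proof: modeling each $Q_{i,BS}$ as a birth--death chain whose arrival/departure probabilities carry the common scheduling factor (your $\Theta_i$, the paper's $X_i(\bm{\alpha},\Gamma)$), deriving the stability condition $\alpha_i\leq\alpha_i^*$ and $\varPi_0^i$ from Loynes' criterion and the balance equations, averaging the conditional service rates over the empty/non-empty states of $Q_{i,BS}$, and taking the convex hull over all priority policies and all $\bm{\alpha}\in[\bm{0},\bm{\alpha}^*]$. Your explicit remarks on the well-founded recursion down the priority order and on the cancellation of $\Theta_i$ in the stationary ratios make precise points the paper leaves implicit, but they do not change the argument.
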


\begin{proof}
See Appendix-\ref{proof_oneRate_MU}.\\
\end{proof}
The exact stability region is computed by following the instructions below:
\begin{itemize}
\item For all $\Gamma \in \Omega_{\Gamma}$ (with $|\Omega_\Gamma|=\left(K+U\right)!$)
\begin{itemize}
\item For all $\bm{\alpha} \in \left[0,\bm{\alpha}* \right] $
\begin{itemize}
\item Compute the service rates of the queues 
\item Deduce a point of the exact stability region
\end{itemize}
\end{itemize}
\item Exact stability region as the convex hull of all these points\\
\end{itemize}

Based on the steps above for the exact stability region computation, we deduced the high complexity of this approach. Indeed, the complexity of the exact stability region computation comes from two factors: the high number of priority policies (depending on the number of communications) as well as the large number of the fraction vector $\bm{\alpha}$ values that should be considered for each priority policy. Actually, the number of the considered  computing the exact stability region pushes us to consider all the the possible priority policies $\Gamma \in \Omega_{\Gamma}$. Having $K+U$ communications means that there exists  $\left(K+U\right)!$ possible policies to sort them. Further, for each priority $\Gamma$, we should vary $\bm{\alpha} \in \left[ \bm{0},\bm{\alpha}^*\right]$. Suppose that for each $\alpha_i$ (for $1\leq i \leq K$) we consider $L$ different values within $\left[0,\alpha_i^*\right]$. Thus, for each priority policy we consider $L^{K}$ values of the fraction vector $\bm{\alpha}$. It is clear that bigger is $L$ higher is the precision of the stability region (in numerical section $L\geq 10^3$). we deduce that even by considering only a set of two bit rates $\left[r_1,r_2=0 \right]$, the complexity of the stability region computation remains high: $L^{K}\left(K+U\right)!$. \\

Therefore, we start reducing the complexity of the exact stability region by verifying that only the border points of $\alpha_i \in \lbrace 0,\alpha_i^*\rbrace$ (for $1 \leq i \leq K$) are sufficient for characterizing the exact stability region. Furthermore, we propose an approximation of the exact stability region that highly decreases the complexity with a tight loss of precision. This study show a trade-off between the precision of the stability region computation and its complexity.

\subsection{Precision versus complexity \label{ssec:MUE_approx}}
The main challenge is to reduce the complexity of the stability region computation while guaranteeing its precision. As we have discussed before, the complexity of the exact stability region is mainly caused by the high number of the following two parameters that should considered for the characterization of this region: (i) priority policies and (ii) fraction vector $\bm{\alpha}$ values. These two factors of complexity are respectively studied in this section. In the first part, we reduce the complexity by limiting the values of the fraction vector $\bm{\alpha}$ that should be considered for each priority policy. We prove that a limited set of values of the fraction vector $\bm{\alpha}$ is sufficient for characterizing the exact stability region. However, the complexity remains high due to the large number of the priority policies that should be considered $\left(K+U\right)!$. Hence, in the second part we propose an epsilon-close approximation of the exact stability region of the symmetric case (see definition (\ref{def:symCase}). This approximation highly reduces the complexity by limiting the number of the considered priority policies while guaranteeing a high precision.

\begin{theorem}
\label{th2_oneRate_MU}
For the multi-UE cellular scenario, the exact stability region is the set of $\bm{\lambda} \in \mathcal{R}_c$ such that $\mathcal{R}_c$ can be simplified as it follows:
\[
\mathcal{R}_{c}=co\left(\bigcup\limits_{\Gamma \in \Omega_{\Gamma}}\bigcup\limits_{\alpha \in \mathbb{S}_{\alpha}}\lbrace \bm{\mu}\left(\bm{\alpha},\Gamma \right) \rbrace\right)
\]
with 
\[
\mathbb{S_{\alpha}}=\lbrace \bm{\alpha} \, | \, \alpha_i \in \lbrace 0, \alpha_i^*  \rbrace \,\,\,\forall\, 1 \leq i \leq K \rbrace
\]
where the elements of $\bm{\mu} \left( \Gamma \right)$ which are  $\mu_{i,s}\left( \Gamma\right)$ and $\mu_{j,u}\left( \Gamma\right)$ (for $1 \leq i\leq K$, $1 \leq j\leq U$) are respectively given by (\ref{eqn_mu1c_oneRate_MU}) and (\ref{eqn_mu2c_oneRate_MU}). Where the limited set $\mathbb{S_{\alpha}}$ of  $2^K$ elements reduce the complexity.
\end{theorem}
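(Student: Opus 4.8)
The plan is to prove the two inclusions between $\mathcal{R}_c$ as written in Theorem~\ref{th1_oneRate_MU} and the candidate region $\mathcal{R}_c^{\star}:=co\big(\bigcup_{\Gamma\in\Omega_\Gamma}\bigcup_{\bm{\alpha}\in\mathbb{S}_{\alpha}}\{\bm{\mu}(\bm{\alpha},\Gamma)\}\big)$. Since $\mathbb{S}_{\alpha}\subset[\bm{0},\alpha^*]$, the inclusion $\mathcal{R}_c^{\star}\subseteq\mathcal{R}_c$ is immediate, so the entire content is the reverse inclusion, and by Theorem~\ref{th1_oneRate_MU} it suffices to show that for every fixed priority policy $\Gamma$ and every $\bm{\alpha}\in[\bm{0},\alpha^*]$ the vector $\bm{\mu}(\bm{\alpha},\Gamma)$ lies in the convex hull of the $2^K$ corner vectors $\{\bm{\mu}(\bm{\alpha}',\Gamma):\alpha'_i\in\{0,\alpha_i^*\},\ 1\leq i\leq K\}$. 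The key observation is that, although the service rates (\ref{eqn_mu1c_oneRate_MU})--(\ref{eqn_mu2c_oneRate_MU}) depend non-linearly on each $\alpha_i$, this dependence enters, for each $i$, only through the scalar $D_i(\alpha_i):=(1+p_{i,s})p_{i,d}-2\alpha_i p_{i,s}p_{i,d}$, and it becomes \emph{affine} once expressed in the reciprocal $x_i:=1/D_i(\alpha_i)$.

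I would first record the elementary facts about the substitution. On $[0,\alpha_i^*]$ the affine map $\alpha_i\mapsto D_i(\alpha_i)$ is decreasing with $D_i(0)=(1+p_{i,s})p_{i,d}>0$ and, substituting $\alpha_i^*=\frac{p_{i,d}-p_{i,s}+p_{i,s}p_{i,d}}{2p_{i,s}p_{i,d}}$, with $D_i(\alpha_i^*)=p_{i,s}>0$; hence $D_i>0$ on the whole interval and $x_i=1/D_i(\alpha_i)$ is a continuous strictly monotone bijection of $[0,\alpha_i^*]$ onto the closed interval $I_i$ whose endpoints $1/((1+p_{i,s})p_{i,d})$ and $1/p_{i,s}$ are the images of $\alpha_i=0$ and $\alpha_i=\alpha_i^*$ respectively. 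In the rewritten variables, $1+\frac{\bar{p}_{i,s}p_{i,d}}{D_i(\alpha_i)}=1+\bar{p}_{i,s}p_{i,d}\,x_i$ and $1+\frac{p_{i,s}p_{i,d}}{2\alpha_i p_{i,s}p_{i,d}-(1+p_{i,s})p_{i,d}}=1-p_{i,s}p_{i,d}\,x_i$, so each $\mu_{i,s}$ and each $\mu_{j,u}$ becomes a product of affine factors containing at most one factor per variable $x_n$: in $\mu_{i,s}$ the variable $x_i$ enters only through its own leading factor $1+\bar{p}_{i,s}p_{i,d}x_i$ (recall $i\notin\mathbf{K}_{\Gamma}\{i\}$), and every $n\in\mathbf{K}_{\Gamma}\{i\}$ contributes the single factor $\bar{p}_{n,s}(1-p_{n,s}p_{n,d}x_n)$; the same holds for $\mu_{j,u}$. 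Thus, for fixed $\Gamma$, the map $\bm{x}\mapsto\bm{\mu}(\bm{x},\Gamma)$ is \emph{multilinear} (affine in each coordinate $x_n$ separately) on the box $\mathcal{B}=\prod_{i=1}^{K}I_i$, whose $2^K$ vertices are exactly the images of the points $\bm{\alpha}\in\mathbb{S}_{\alpha}$.

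It then remains to invoke the elementary fact that the image of a box under a multilinear map is contained in the convex hull of the images of the box vertices, which I would prove by induction on $K$: with $x_1,\dots,x_{K-1}$ fixed the map is affine in $x_K\in I_K$, so $\bm{\mu}(\bm{x}',x_K,\Gamma)$ lies on the segment joining $\bm{\mu}(\bm{x}',\underline{x}_K,\Gamma)$ and $\bm{\mu}(\bm{x}',\overline{x}_K,\Gamma)$; applying the induction hypothesis to the two multilinear maps $\bm{x}'\mapsto\bm{\mu}(\bm{x}',\underline{x}_K,\Gamma)$ and $\bm{x}'\mapsto\bm{\mu}(\bm{x}',\overline{x}_K,\Gamma)$ places both endpoints in the convex hull of the vertex images, and a segment between two points of a convex set stays in it. Applying this with the $\bm{x}$ corresponding to the given $\bm{\alpha}$ gives $\bm{\mu}(\bm{\alpha},\Gamma)\in\mathrm{conv}\{\bm{\mu}(\bm{\alpha}',\Gamma):\bm{\alpha}'\in\mathbb{S}_{\alpha}\}$, which by Definition~\ref{Def.Conv} is contained in $\mathcal{R}_c^{\star}$. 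Hence every generator of $\mathcal{R}_c$ in Theorem~\ref{th1_oneRate_MU} already belongs to $\mathcal{R}_c^{\star}$, so applying $co$ to both sides yields $\mathcal{R}_c\subseteq\mathcal{R}_c^{\star}$, and together with the trivial reverse inclusion this proves the theorem.

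The one genuinely delicate point, and the reason the reduction to $\alpha_i\in\{0,\alpha_i^*\}$ works at all, is the recognition of this linearizing substitution together with the structural check that it truly makes $\bm{\mu}$ multilinear: it relies on $\mu_{i,s}$ carrying the factor $1+\bar{p}_{i,s}p_{i,d}x_i$ but \emph{not} the competing factor $1-p_{i,s}p_{i,d}x_i$ (the latter appears only in the service rates of the communications of lower priority than $i$), so that no quadratic term in any single $x_i$ is ever created. The accompanying routine verification is that $D_i(\alpha_i)>0$ throughout $[0,\alpha_i^*]$, i.e.\ that $\alpha_i^*\leq(1+p_{i,s})/(2p_{i,s})$, which makes $x_i$ a well-defined monotone bijection of the interval onto $I_i$; the degenerate regime $\alpha_i^*<0$ (communication $i$'s base-station queue not stabilizable) is excluded by assumption or handled separately.
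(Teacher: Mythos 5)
Your proof is correct and rests on the same decomposition as the paper's: for a fixed priority policy $\Gamma$, reduce each coordinate $\alpha_i$ separately to its two endpoints $\{0,\alpha_i^*\}$ by showing that $\bm{\mu}(\bm{\alpha},\Gamma)$ at an interior $\alpha_i$ is a convex combination of the two endpoint vectors, then iterate over $i=1,\dots,K$. Where you differ is in how the per-coordinate convexity is established. The paper isolates the same two scalar forms $F_1(\alpha_i)$ and $F_2(\alpha_i)$ (the first appearing only in $\mu_{i,s}$, the second in the service rates of the lower-priority communications), writes $\alpha_i=\lambda\alpha_i^*$, exhibits the explicit coefficient $\gamma_i=\frac{(1+p_{i,s})p_{i,d}(1-\lambda)}{(1+p_{i,s})p_{i,d}-\lambda(p_{i,s}p_{i,d}+p_{i,d}-p_{i,s})}$, and verifies by direct algebra that $0\leq\gamma_i\leq1$ and that both $F_1$ and $F_2$ are simultaneously reproduced with this same $\gamma_i$. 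Your reciprocal substitution $x_i=1/D_i(\alpha_i)$ reaches the same conclusion without computing $\gamma_i$: once both forms are affine in $x_i$ and $x_i$ ranges monotonically over a closed interval, a common convex coefficient exists automatically (it is precisely the barycentric coordinate of $x_i$ in $I_i$, i.e.\ the paper's $\gamma_i$), and the box-vertex lemma for multilinear maps packages the iteration over $i$ that the paper handles by repeating the one-coordinate argument. The two proofs are logically equivalent, but yours makes explicit the structural reason why a \emph{single} coefficient works for every component of $\bm{\mu}$ at once, which in the paper reads as an algebraic coincidence; both hinge on the same check you correctly flag as the delicate point, namely that each component of $\bm{\mu}$ in (\ref{eqn_mu1c_oneRate_MU})--(\ref{eqn_mu2c_oneRate_MU}) carries at most one factor depending on any given $\alpha_i$.
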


\begin{proof}
See Appendix-\ref{proof_oneRate_MU_th2}.
\end{proof}

In theorem \ref{th2_oneRate_MU}, the complexity of the exact stability region is decreased due to the fact that for each priority policy $\Gamma \in \Omega_\Gamma$ , the border values of each $\alpha_i$ are sufficient for characterizing the corner point of the exact stability region that corresponds to $\Gamma$. Indeed, for a given priority policy $\Gamma \in \Omega_\Gamma$, the service rates vector $\bm{\mu\left(\bm{\alpha}, \Gamma\right)}$ for any $\bm{\alpha} \in \left[\bm{0},\bm{\alpha}^* \right]$ can be written as a convex combination of the service rates vectors $\bm{\mu\left(\bm{\alpha}, \Gamma\right)}$ with $\bm{\alpha} \in  \mathbb{S}_\alpha$. Actually, the $L$ values of each $\alpha_i$ that we should consider in theorem \ref{th1_oneRate_MU} for each priority policy $\Gamma$ is reduced to $2$ values in theorem \ref{th2_oneRate_MU} ($0$ and $\alpha_i^*$). The proof of the theorem \ref{th2_oneRate_MU} shows that varying $\bm{\alpha}$ within the finite set $\mathbb{S}_\alpha$ is sufficient for the characterization of the exact stability region.  Hence, the computational complexity is reduced from $L^{K}\left(K+U\right)!$ to $2^{K}\left(K+U\right)!$.\\

However, the complexity of the exact stability region remains high because of the large number of the priority policies that should be considered ($\left(K+U\right)!$). Hence, for the symmetric case (defined in (\ref{def:symCase})), we  reduce further the complexity by limiting the number of the considered priority policies for the characterization of the stability region. An $\epsilon$-approximation, such that the maximum distance between the approximated and exact stability region is limited to $\epsilon$, is proposed. The main importance of this approximation is that it highly reduces the complexity (in terms of number of the considered priority policies and number of the fraction vector values) while guaranteeing the precision of the result.

\begin{definition}
\label{def:symCase}
The symmetric case consists of considering all the UEs of the cell at the same distance d from the BS. Hence, this symmetric case is defined by the following values: $p_s=p_{i,s}=p_{j,u}$ and $p_d=p_{i,d}$ ($\forall  1 \leq i \leq K$ and $1\leq j \leq U$). 
\end{definition}

\begin{definition}
\label{def:epsilonApprox}
$\tilde{\mathcal{R}}$ is called an $\epsilon$-approximation of a stability region ${\mathcal{R}}$ (with $0\leq \epsilon \leq 1$) iff the following is verified:
\[
\tilde{\mathcal{R}} \subseteq  {\mathcal{R}} \subseteq \tilde{\mathcal{R}}
+\epsilon
\]
\end{definition}

The symmetric case is defined as the case where all the UEs are at the same distance $d$ from the base station. For this case, we prove in theorem  \ref{th3_oneRate_MU} that we can additionally reduce  the computation complexity by limiting the number of policies $\Gamma$ that should be considered in order characterize the stability region. Thus, we avoid the computation complexity that comes from the consideration of all the policies $\Gamma \in \Omega_{\Gamma}$ (i.e. for a network of $K+U$ communications, it exists$|\Omega_\Gamma|=\left(K+U\right)!$ priority policies overall). Therefore, we propose an $\epsilon$-approximation of the exact stability region that highly reduces the complexity while guaranteeing a high precision. 
\begin{theorem}
\label{th3_oneRate_MU}
For the symmetric case  of the multi-UE cellular scenario with a given couple $\lbrace p_s,p_d \rbrace$, an $\epsilon$-approximation  of the stability region (for a given $\epsilon$) is the set of $ \bm{\lambda} \in\tilde{{\mathcal{R}}}_c$ such that:
\[
\tilde{\mathcal{R}}_{c}=co\left(\bigcup\limits_{\Gamma \in \Omega_{\Gamma}^{K_0}}\bigcup\limits_{\alpha \in \mathbb{S}_{\alpha}^{K_0}}\lbrace \bm{\mu}\left(\bm{\alpha},\Gamma \right) \rbrace
\right) \text{  with  } K_0\left(\epsilon\right)=\left\lceil 1+\frac{\log\left(\frac{\epsilon}{r_1p_s}\right)}{\log\left(\bar{p}_s\right)}\right\rceil
\]
Hence, the exact stability region can be bounded as it follows:
\[
\tilde{\mathcal{R}}_{c} \subseteq  {\mathcal{R}}_{c} \subseteq \tilde{\mathcal{R}}_{c}+\epsilon
\]
where $\Omega_{\Gamma}^{K_0}$ is the set of the feasible priority policies of the subsets of $K_0$ communications among all the $K+U$ communications (with $|\Omega_{\Gamma}^{K_0}|=\frac{(K+U)!}{\left(K+U-K_0\right)!}$ ), $\mathbb{S}_{\alpha}^{K_0}$ is the set of the values $\alpha_i \in \lbrace 0,\alpha_i^* \rbrace$ where $i$ corresponds to the \textit{UE2UE communications} within these subsets of $K_0$ elements. The elements of $\bm{\mu} \left( \Gamma \right)$ which are  $\mu_{i,s}\left( \Gamma\right)$ and $\mu_{j,u}\left( \Gamma\right)$ (for $1 \leq i\leq K$, $1 \leq j\leq U$) are respectively given by (\ref{eqn_mu1c_oneRate_MU}) and (\ref{eqn_mu2c_oneRate_MU}). Note that the value of ${K_0}$ is limited to $\left(K+U\right)$ which corresponds to the total number of communications.
\end{theorem}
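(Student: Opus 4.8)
The plan is to verify the two inclusions required by Definition~\ref{def:epsilonApprox}. Throughout I may assume $0<\epsilon\le r_1p_s$, since otherwise $K_0\le 1$ and the statement is essentially vacuous (every service rate in (\ref{eqn_mu1c_oneRate_MU})--(\ref{eqn_mu2c_oneRate_MU}) is at most $r_1p_s\le\epsilon$), and I will use that in the symmetric case $\alpha_i^*=\frac{p_d-p_s+p_sp_d}{2p_sp_d}$ lies in $[0,1]$, which is precisely the admissibility condition for a fraction parameter. The inclusion $\tilde{\mathcal R}_c\subseteq\mathcal R_c$ is the easy one: any $\Gamma\in\Omega_\Gamma^{K_0}$ orders a $K_0$-element subset of communications, and completing it by appending the remaining ones in a fixed order gives a genuine full policy $\hat\Gamma\in\Omega_\Gamma$, while any $\bm\alpha\in\mathbb S_\alpha^{K_0}$ extends to some $\hat{\bm\alpha}\in\mathbb S_\alpha$; since the formulas (\ref{eqn_mu1c_oneRate_MU})--(\ref{eqn_mu2c_oneRate_MU}) only involve communications more prioritized than the one under study, $\bm\mu(\bm\alpha,\Gamma)$ equals $\bm\mu(\hat{\bm\alpha},\hat\Gamma)\in\mathcal R_c$ by Theorem~\ref{th2_oneRate_MU}. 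Thus the generating set of $\tilde{\mathcal R}_c$ is contained in that of $\mathcal R_c$, and monotonicity of the $co$ operator of Definition~\ref{Def.Conv} gives the claim.

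The heart of the argument is a uniform geometric decay estimate that I would establish as a lemma: in the symmetric case, for every full policy $\Gamma$, every $\bm\alpha\in\mathbb S_\alpha$, and every communication placed at priority position $r$ in $\Gamma$, the corresponding service rate is at most $r_1p_s\,\bar p_s^{\,r-1}$. This follows by reading (\ref{eqn_mu1c_oneRate_MU})--(\ref{eqn_mu2c_oneRate_MU}) as a leading prefactor times a product of exactly $r-1$ factors, one per more-prioritized communication. A more prioritized \textit{UE2BS} communication contributes the factor $\bar p_{m,u}=\bar p_s$; a more prioritized \textit{UE2UE} communication contributes $\bar p_s\big[1+\tfrac{p_sp_d}{2\alpha_np_sp_d-(1+p_s)p_d}\big]$, which at the two admissible values $\alpha_n\in\{0,\alpha_n^*\}$ evaluates to $\bar p_s/(1+p_s)$ and to $\bar p_s\bar p_d$ respectively, both $\le\bar p_s$. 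The prefactor equals $r_1p_s$ for a \textit{UE2BS} queue, and for a \textit{UE2UE} uplink queue it equals $r_1p_s/(1+p_s)$ when $\alpha_i=0$ and $\tfrac12 r_1(p_s+\bar p_sp_d)$ when $\alpha_i=\alpha_i^*$, the latter being $\le r_1p_s$ exactly because $\alpha_i^*\le1\Leftrightarrow\bar p_sp_d\le p_s$. Multiplying these bounds gives the estimate. Since $K_0$ is calibrated so that $r_1p_s\,\bar p_s^{\,K_0-1}\le\epsilon$ (solve $K_0\ge1+\log(\epsilon/(r_1p_s))/\log\bar p_s$), any communication ranked below position $K_0$ has service rate at most $r_1p_s\,\bar p_s^{\,K_0}\le\epsilon$.

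For the second inclusion $\mathcal R_c\subseteq\tilde{\mathcal R}_c+\epsilon$, by Theorem~\ref{th2_oneRate_MU} and Definition~\ref{Def.Conv} it suffices to dominate each generating corner point $\bm\mu(\bm\alpha,\Gamma)$, $\Gamma\in\Omega_\Gamma$, $\bm\alpha\in\mathbb S_\alpha$, by a point of $\tilde{\mathcal R}_c$ shifted by $\epsilon\mathbf 1$; the general case then follows by taking sub-convex combinations and using downward-closedness of $\tilde{\mathcal R}_c$. Given $\Gamma$, let $\Gamma'\in\Omega_\Gamma^{K_0}$ be its restriction to the $K_0$ top-priority communications, keeping their relative order, and let $\bm\alpha'\in\mathbb S_\alpha^{K_0}$ be the restriction of $\bm\alpha$. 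For a communication among those top $K_0$, the set of more-prioritized communications and all the relevant fractions coincide under $\Gamma$ and $\Gamma'$, so by (\ref{eqn_mu1c_oneRate_MU})--(\ref{eqn_mu2c_oneRate_MU}) the corresponding coordinate of $\bm\mu(\bm\alpha,\Gamma)$ equals that of the corner point $\bm\mu(\bm\alpha',\Gamma')\in\tilde{\mathcal R}_c$; for a communication ranked below position $K_0$ in $\Gamma$, the lemma gives that coordinate $\le\epsilon$, hence $\le$ the (nonnegative) corresponding coordinate of $\bm\mu(\bm\alpha',\Gamma')$ plus $\epsilon$. Therefore $\bm\mu(\bm\alpha,\Gamma)\preceq\bm\mu(\bm\alpha',\Gamma')+\epsilon\mathbf 1$ with $\bm\mu(\bm\alpha',\Gamma')\in\tilde{\mathcal R}_c$, which is exactly membership in $\tilde{\mathcal R}_c+\epsilon$. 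The cap $K_0\le K+U$ in the statement is harmless: when $K_0=K+U$ one has $\Omega_\Gamma^{K_0}=\Omega_\Gamma$, $\mathbb S_\alpha^{K_0}=\mathbb S_\alpha$ and $\tilde{\mathcal R}_c=\mathcal R_c$.

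The main obstacle is the lemma of the second paragraph, namely the claim that \emph{every} multiplicative factor contributed by a more-prioritized communication, and the leading prefactor of the queue under study, is bounded by $\bar p_s$ (respectively $r_1p_s$) uniformly over $\bm\alpha\in\mathbb S_\alpha$. This is what converts ``discarding low-priority communications'' into a controlled $\epsilon$-error, and it depends on the precise value of $\alpha_i^*$ together with its feasibility range $\alpha_i^*\in[0,1]$; once this uniform bound is secured, the geometric summation and the coordinate-matching argument are routine.
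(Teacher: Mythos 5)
Your proposal is correct and takes essentially the same route as the paper's own proof: both rest on the geometric bound $\mu^k \le r_1 p_s\,\bar p_s^{\,k-1}$ for the service rate of a communication at priority level $k$, calibrate $K_0$ from $r_1 p_s\,\bar p_s^{\,K_0-1}\le\epsilon$, and then truncate each full priority policy to its top $K_0$ communications while absorbing the discarded coordinates into an $\epsilon$ shift. Your explicit per-factor verification (each more-prioritized communication contributes at most $\bar p_s$ at both admissible values of $\alpha_n$, and the leading prefactor is at most $r_1 p_s$ precisely because $\alpha_i^*\le 1$) makes rigorous what the paper asserts via its ``most optimistic case'' computation, and your careful statement of the two inclusions fills in details the paper only sketches by reference to an earlier appendix.
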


\begin{proof}
See Appendix-\ref{proof_oneRate_MU_th3}.\\ 
\end{proof}

The proposed approximation in theorem \ref{th3_oneRate_MU} is defined by considering all the following cases: (i) the priority policies of all the subset of $K_0$ elements within the $K+U$ communications, (ii) for each priority policies, the \textit{UE2UE communications} within these $K_0$ elements have a fraction element that varies within its two border values ($0$ and $\alpha_i^*$ ). The value of $K_0$ depends on the precision $\epsilon$ of the approximation; it increases when $\epsilon$ decreases. Even for a tight value of $\epsilon$, $K_0$ stays too much lower than $K+U$ which reduces the computational complexity to:  $2^{K_0}\frac{(K+U)!}{\left(K+U-K_0\right)!}$. We can see that $K_0$ is constant for a given triplet $\lbrace p_s, p_d, \epsilon \rbrace$ and does not depend on the number of communications. Therefore, the importance of this approximation is to reduce the exponential complexity O$\left(2^{K} \right)$ for characterizing the exact stability region to a polynomial complexity O$\left(K^{K_0} \right)$.

\section{Numerical Results \label{sec:Numerical}}

Simulations are performed in order to validate our theoretical results and numerically validate the trade-off between the precision and the complexity of the stability region computation. We consider a cell of radius $R=500\,m$. We consider a $10\,$MHz LTE-like TDD system. When a user is scheduled, it transmits over $50$ subcarriers called Resource Blocks (RB). The considered bit rates per RB are $\left\{ r_{1}=400,\,r_{2}=200,\,r_{3}=0\right\} \,kbps/RB$. The unit of the shown results is the arrival rate per RB [kbps/RB] from which the total bit rate can be deduced by multiplying by the number of allocated RB. The total transmission power (i.e. over all RBs) of the BS is $P_{DL}=40\,W$ and that of the mobiles is $P_{UL}=0.25\,W$ (see \cite{3GPP_TR36814}). The noise power is equal to $-195\,dB/Hz$ for the downlink $DL$ and $-199\,dB/Hz$ for the uplink $UL$. . We consider the pathloss model specified in \cite{3GPP_TR36814}. The SNR thresholds are respectively $\gamma_{1}^{UL}=9.5\,dB$, $\gamma_{2}^{UL}=2.5\,dB$, $\gamma_{1}^{DL}=7.5\,dB$ and $\gamma_{2}^{DL}=1.5\,dB$. These values are practical values based on a throughput-SNR mapping for a $10\,$MHz E-UTRA TDD network (see \cite{3GPP_TR36942}). 

\subsection{Three-UEs scenario}
We start by presenting the results of three-UEs network in order to validate the theoretical results corresponding to this scenario. For clarity of the presentation, we consider that the three UEs UE$_s$, UE$_u$ and UE$_d$ are at the same distance $d$ from the BS. This assumption is just for simplifying the illustration of the results however the theoretical results can be applied for any distribution of the users in the cell.

\subsubsection{Approximated stability region} 
We illustrate the approximated stability region of  the three-UEs scenario for different distances $d$ between the users and the BS. This study is done for two purposes: (i) validating that the computed set of $\alpha$ in equation (\ref{Set_alpha_gen}) as well as the considered set of priority policies $\Omega_\Gamma^{ss}$ are sufficient for characterizing the approximated stability region and no need to consider all the values $\alpha \in \left[0,1 \right]$ and all the priority policies $\Gamma \in \Omega_\Gamma$ and (ii) elaborating the effect of the distance $d$ on the performance of the users.

In Fig. \ref{fig.NumStability}, for different distances $d$ between UEs and BS $d=\left\{100, 350,\,500\right\}$, we plot in Red the  stability region for the approximated three-UEs scenario obtained by exhaustive search (all $\Gamma \in \Omega_{\Gamma}$ and all $\alpha \in \left[0,1\right]^4$) and we compare it to the one plotted in Green and obtained from Theorem \ref{th3_gen}. We find that both curves coincides which verifies that the latter theorem presents the stability region for the approximated three-UEs scenario. The importance of this theorem is the fact of providing a simple and close form expression of this stability region which is a close approximation of the computational complex exact stability region (as proved in theorem \ref{th_comparison_gen}and numerically showed in the next subsection \ref{ssNum.ApproVsEx}). Fig. \ref{fig.NumStability} gives also insights on the effect of the distance $d$ on the three-UEs stability region and shows how the performance of the system degrades when the distance $d$ increases. \\

\begin{figure}[ptb]
\begin{centering}
\includegraphics[scale=0.15]{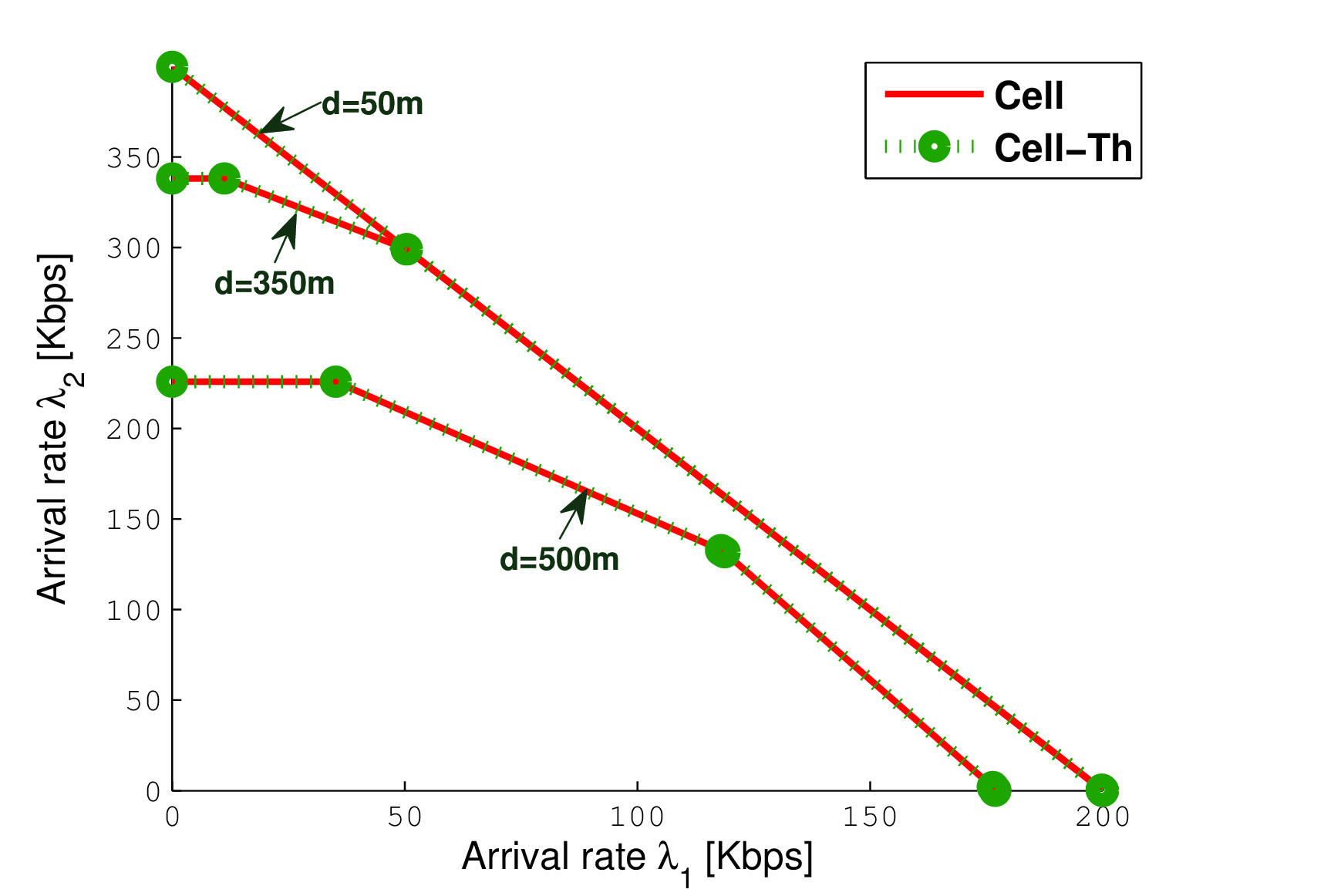}
\par\end{centering}

\caption{Stability regions $\tilde{\mathbb{R}}_c^{ss}$ for the approximated model of the three-UEs scenario where the distance UE to BS distance varies within $d\,\epsilon\,\left\{ 100,\,350,\,500\right\}$m}.
\label{fig.NumStability} 
\end{figure}

\subsubsection{Exact vs. approximated stability region \label{ssNum.ApproVsEx}} In this section we numerically validate that $\tilde{\mathcal{R}}_c^{ss}$ (given by (\ref{th3_gen})) is a close approximation of the stability region for the three-UEs scenario ${\mathcal{R}}_c^{ss}$ (given by (\ref{th_ss_cellular_real}). We present in Fig.\ref{fig.ApproxReal} that the stability region and its approximation coincide for different distances $d$ between UEs and BS. Simulations considering different users positions verify that the approximated stability region is a good approximation for the exact stability for the three-UEs scenario. \\

\begin{figure}[ptb]
\vspace{-10pt}
\begin{centering}
\includegraphics[scale=0.15]{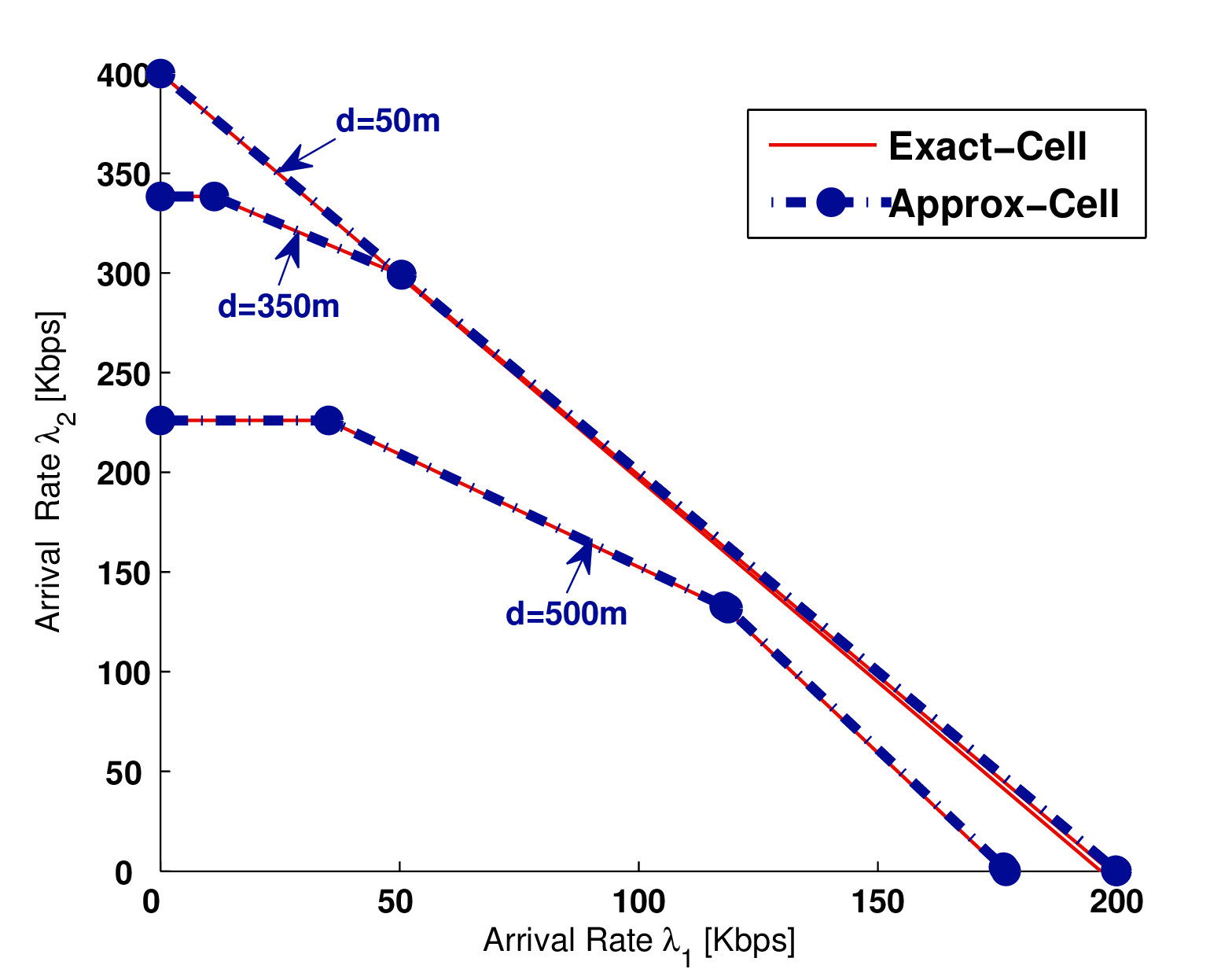}
\caption{Comparison between the stability region for the cellular scenario ${\mathcal{R}}_c^{ss}$ and its approximation $\tilde{\mathcal{R}}_c^{ss}$ for three different UE to BS distance $d\,\epsilon\,\left\{ 100,\,350,\,500\right\}$m.}
\label{fig.ApproxReal}
\par\end{centering}
\vspace{-10pt}
\end{figure}

\subsubsection{Queuing impact} We start by showing the advantages of the queuing theory approach in our analysis. For this purpose, we consider a random positioning of 3 UEs in the cell and we apply the two performance evaluation approaches on the cellular scenario: (i) taking into account the queuing aspects and the coupling between the queues (ii) without considering any coupling between the queues. For the first approach, we compute the stability region as the set of the arrival rate vectors to the sources that can be stably supported by the network considering all the possible policies, hence we consider the coupling between the queues (w/ coupling). For the second approach, we assume that the BS queues have a full buffer and we compute the rate region that describes the achievable data rates depending on the the channel states of the links and without any bursty traffic neither coupling between the queues (w/o coupling).  Comparing these two results in Fig. \ref{fig.Causality} verifies that introducing the traffic pattern and the queues' coupling have an effect on the performance of the cellular scenario in terms of stability region.

Considering that the queues have a full buffer, the DL (BS to UE$_3$) has always packets to transmit and it will always be considered by the scheduler. This causes less scheduling for the UL thus a lower bit rate over the UL. However, when coupling between the queues is considered, the DL can be scheduled only if at least one packet is received by the BS from UE$_1$. Due to that, when no packet are buffered at the DL, this latter will not be scheduled and the UL will be able to transmit more packets and will improve its bit rate. This explains the gain that the queuing approach offers to the performance evaluation of the scenario.\\

\begin{figure}[H]
\vspace{-20pt}
\begin{centering}
\includegraphics[scale=0.1]{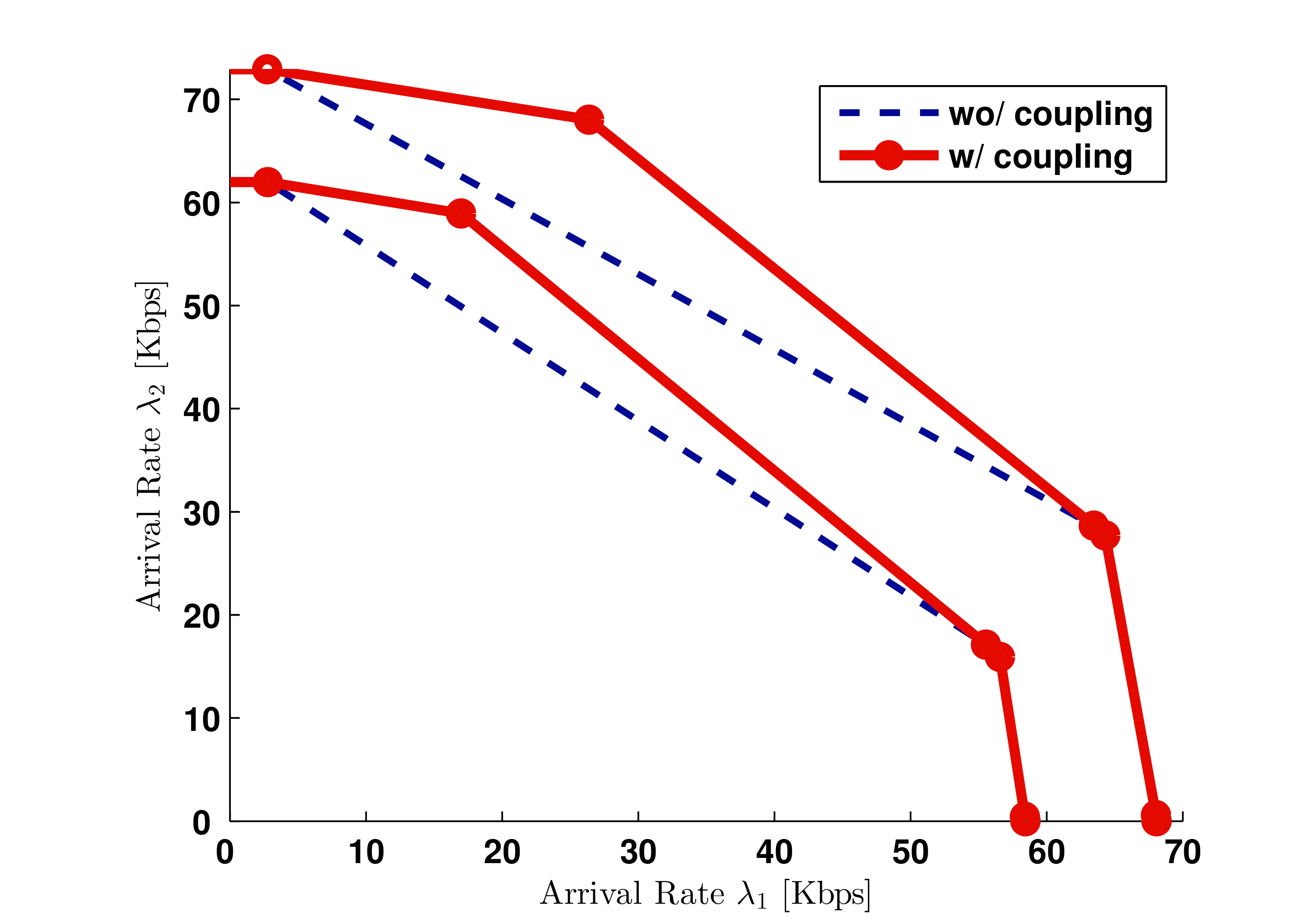}
\par\end{centering}
\caption{Stability region of the cellular scenario (with coupling between the users' queues and Q$_{BS}$ is not saturated) vs. rate region (without coupling hence Q$_{BS}$ is  never empty)}
\label{fig.Causality} 
\end{figure}

\subsection{Multi-UE scenario}
For the multi-UE scenario, without loss of generality and for clarity reasons, we illustrate the results for the symmetric case where all the users are at the same distance $d$ from the BS. \\

\subsubsection{Exact vs approximated stability region} 
For the multi-UE scenario, we consider a cell containing 50 \textit{UE2UE communications} with a uniform random drop in a cell of radius $R=500 m$ such that all the UEs are at distance $d=350m$. The performance metric that we use to compare between the stability regions of the exact symmetric case (expressed in theorem \ref{th2_oneRate_MU}) and the approximated symmetric case  (expressed in theorem \ref{th3_oneRate_MU}) is the average service rate per user. It is equal to the sum of the service rates of all the users divided by the number of users. For $d=350$m and $\epsilon=0.1$ we find $K_0=3$. Fig. \ref{fig.AvgMu_WWO_D2D} illustrates that the variation of this performance metric as function of the number $K$ of \textit{UE2UE communications} in the network. It shows that the $\epsilon$-approximated stability region is a very tight approximation of the stability region. We deduce that the complexity is reduced from $2^{50}\times 50! \simeq 3\times 10^{79} $ for the exact stability region (from theorem \ref{th2_oneRate_MU}) to $2^{3} \times\frac{50!}{47!}=9 \times 10^5$ for the approximated stability region (from theorem \ref{th3_oneRate_MU}). This illustrates how the exponential complexity O$\left(2^{50} \right)$ is reduced to a polynomial one O$\left({50}^3 \right)$.\\

\subsubsection{Trade-off complexity vs precision} 
For the symmetric case, where all the users are considered at the same distance $d$ from the BS, we show in theorem \ref{th3_oneRate_MU} that the value of $K_0$ characterizes the number of priority policies that is sufficient to be studied in order to describe the stability region, hence the complexity of the approximated stability region. For three different $\epsilon\in \lbrace 10^{-3},10^{-2},10^{-1}\rbrace$, we show in Fig.\ref{fig.K0_func_d} the variation of $K_0$ as function of the distance $d$ between the users and the BS. We can see that for $\epsilon=10^{-2}$, the maximum value of $K_0$ (achieved at the edge of the cell) is small and equal to $12$. Thus, the complexity of the approximated stability region can be deduced by applying the following:
\[
2^{K_0}\frac{\left(K+U\right)!}{\left(K+U-K_0\right)!}
\]
\begin{figure}[H]
\vspace{-40pt}
\centering
\includegraphics[scale=0.085]{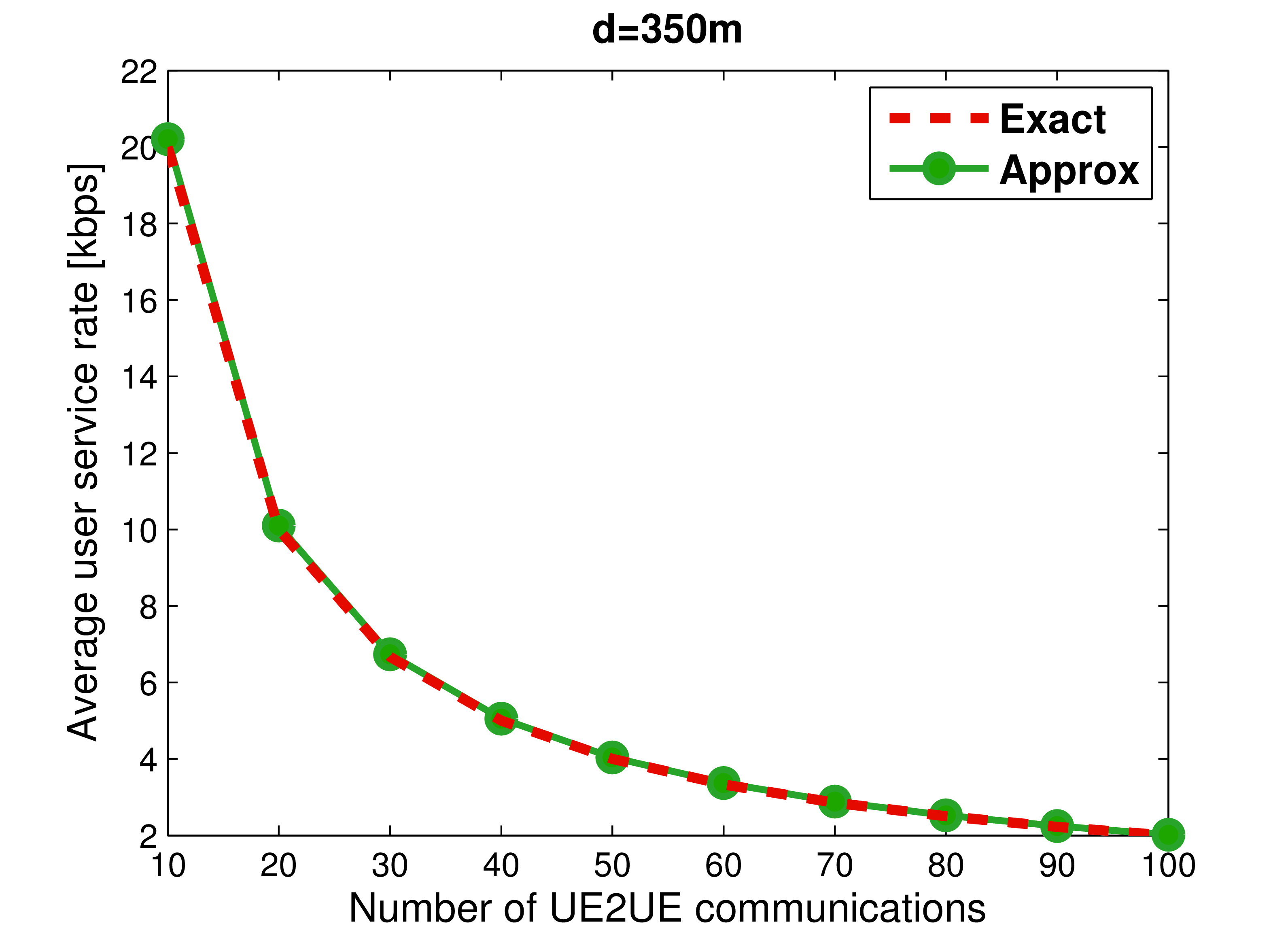}
\vspace{-3pt}
\caption{Exact vs. Approximated results}
\label{fig.AvgMu_WWO_D2D}
\end{figure}

\begin{figure}[H]
\vspace{-25pt}
\centering
\includegraphics[scale=0.1]{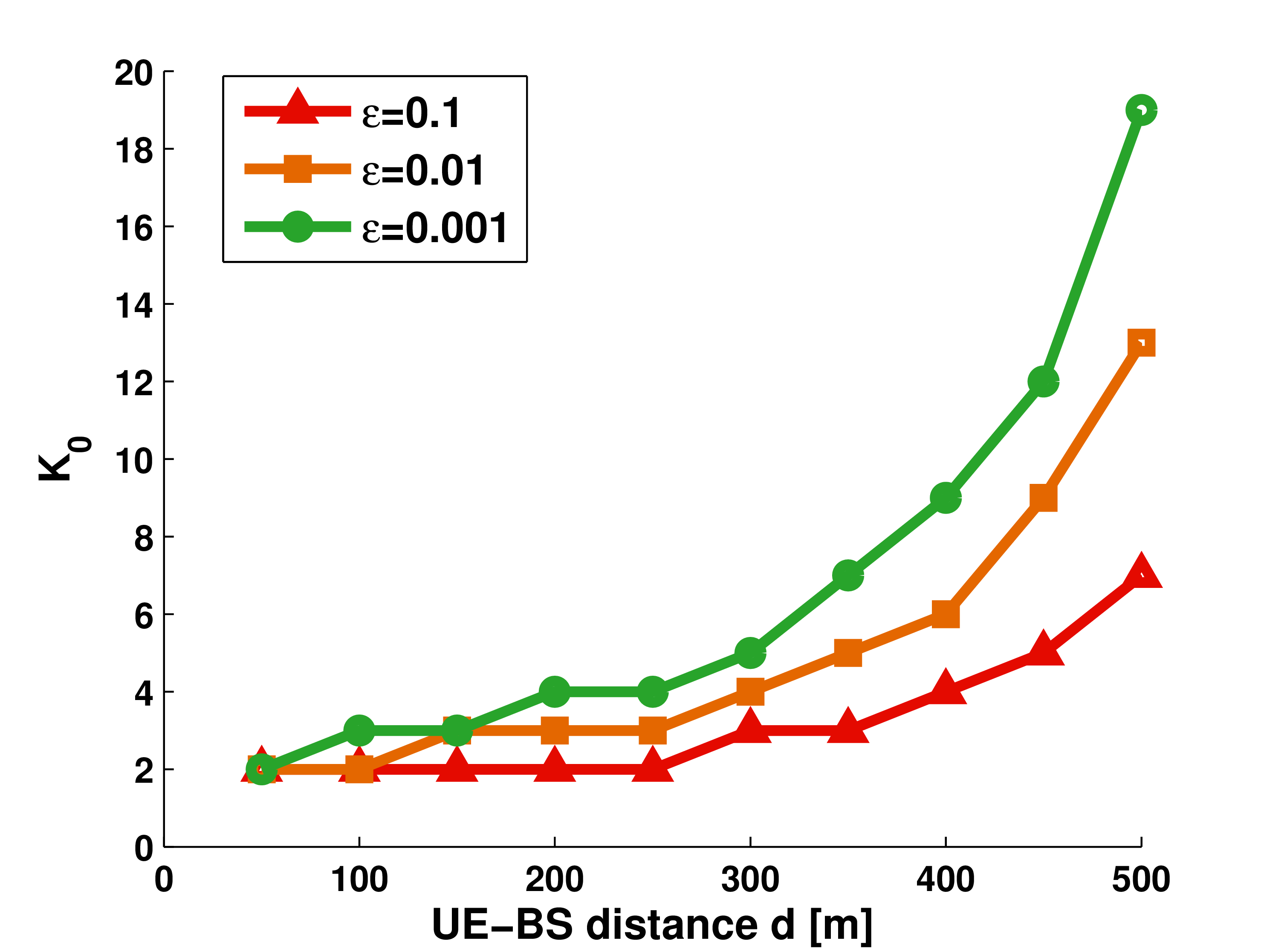}
\caption{Variation of the parameter $K_0$ as function of the distance $d$ between BS and users}
\label{fig.K0_func_d}

\end{figure}
\vspace{-20pt}
\section{Conclusion\label{sec:Conclusion}}
In this paper, we have carried out a queuing analysis to study a TDD network involving two types of communications: 1- \textit{UE2UE communications} passing through the BS and 2- \textit{UE2BS communications}. TDMA user scheduling and link adaptation model are the main assumptions in this work. First, we have shown the interest of our queuing theory approach by verifying that it provides a more realistic performance evaluation of the network compared to a performance analysis strictly based on physical layer that ignores the dynamic effect of the traffic pattern. This is due to the coupling between the UEs' queues and the BS queues. Second, we have given the analytic expressions of the exact stability region for the general scenario of different \textit{UE2UE} and \textit{UE2BS} \textit{communications}. Then, we have studied the complexity of such a computation. Finally, a highly accurate approximation is proposed such that simple analytic expressions of the stability regions are provided as convex polytopes with limited number of vertices (computationally feasible). The variation of these stability regions as a function of the queues and channels states is investigated. We examine the trade-off between the precision of the proposed approximations and their computational complexity. \\

\section*{APPENDIX \label{Appendix}}\renewcommand{\thesubsection}{\Alph{subsection}}

\subsection{Proof of theorem \ref{th_ss_cellular_real} \label{stab_real_twoRate}}
Here, we find the exact stability region for the three-UEs scenario. For this aim, we proceed as it follows: \textit{\textbf{step 1}} models the queue Q$_{BS}$ as a Markov chain and expresses the arrival and service probabilities of this chain,  \textit{\textbf{step 2}} computes the probability that the queue Q$_{BS}$ is empty,  \textit{\textbf{step 3}} obtains the service rate of both queues Q$_s$ and  Q$_u$, \textit{\textbf{step 4}} combines the results of the previous steps to provide the exact stability region of the three-UEs scenario, \textit{\textbf{step 5}} verifies that having a set of arrival rates within this stability region is equivalent to the stability of the system of queues. The main challenge is to solve the complicated Markov Chain that models the queue Q$_{BS}$ in order to find the probability that this queue is empty and deduce the service rate of the queues in the systems.

The coupling between the queues leads to a multidimensional Markov Chain modeling of the system of queues. Since this model complicates the study of the stability region, we approach this problem in a different way. We use the priority policies definition (in section \ref{sec:System-Model}) to transform the multidimensional model to a 1D Markov Chain (one dimensional). Hence, for a given priority policy, each queue can be modeled by a 1D Markov Chain. In order to characterize the stability region, all the priority policies should be considered or at least the priority policies that achieve the corner points of the stability region.

We recall the three-UEs scenario contains one \textit{UE2UE communication} and one \textit{UE2BS communication}. The \textit{UE2UE communication} is modeled by the cascade of Q$_{s}$ and Q$_{BS}$ and the \textit{UE2BS communication} is modeled by the queue Q$_{u}$. We consider the set of three rates $\left\lbrace r_1,r_2,r_3 \right\rbrace$ with $r_1=kr_2$ and $r_3=0$. So, if $P$ is the number of packets transmitted at $r_2$ then $kP$ is the number of packets transmitted at $r_1$.  Assuming that the two sources queues Q$_s$ and Q$_u$ are saturated, we want to characterize the stability region.

$\varGamma$ is the priority policy according to which the communications are sorted. We know that in order to characterize the stability region it is sufficient to consider the corner points of this region. These corner points correspond to the extreme policies where the priority is always given to the same communication or when the priority is always given to the communication that has the better channel state. Actually we consider two possible rate for each communication ($r_1$ and $r_2$) hence the priority policies $\Omega_{\Gamma}^{ss}$ that present the corner points of the stability region are the following:
\begin{itemize}
\item \textbf{Policy $\Gamma_1$}: \textit{UE2UE communication} is given higher priority, hence \textit{UE2BS communication} can only take place when the rate of the uplink and downlik of the \textit{UE2UE communication} are null.\\
\item \textbf{Policy $\Gamma_2$}: \textit{UE2BS communication} is given higher priority, hence the users of \textit{UE2UE communication} transmit only when the rate of the \textit{UE2BS communication} is null.\\
\item \textbf{Policy $\Gamma_3$}: At rate $r_{1}$ , the \textit{UE2UE communication} has the priority and at rate $r_{2}$, \textit{UE2BS communication} has the priority. Hence, \textit{UE2UE communication} transmits at $r_2$ only when the rate of the \textit{UE2BS communication} is null.\\
\item \textbf{Policy $\Gamma_4$}: At rate $r_{1}$, the \textit{UE2BS communication} has the priority and at rate $r_{2}$, \textit{UE2UE communication} has the priority. Hence, \textit{UE2BS communication} transmits at $r_2$ only when the rate of the uplink and downlik of the \textit{UE2UE communication} are null.\\
\item \textbf{Policy $\Gamma_5$}: Communication with highest rate transmits and when both communications have the same rate then \textit{UE2UE communication} is prioritized.\\
\item \textbf{Policy $\Gamma_6$}: Communication with highest rate transmits and when both communications have the same rate then \textit{UE2BS communication} is prioritized.\\
\end{itemize}

Thus, the set of priority policies $\Omega_{\Gamma}^{ss}$ that consists of the corner point of the stability region is given by:
\[
\Omega_{\Gamma}^{ss}=\lbrace \Gamma_1, \Gamma_2, \Gamma_3, \Gamma_4, \Gamma_5, \Gamma_6\rbrace
\]

For the cellular scenario, we denote by $q_{n}\left(S_{i},S_{j},S_{k},\Gamma\right)$ the probability to transmit over the link$_n$ (with $n\in \lbrace s, d,u \rbrace$) for a given policy $\Gamma$ and a given state of the links: link$_s$ at state $S_s$, link$_d$ is at state $S_d$ and link$_u$ is at state $S_u$ (with  $S_s$, $S_d$ and $S_u \in \left(S_{1},S_{2},S_{3}\right)$). Based on a simple analysis of the priority policies $\Gamma \in \Omega_{\Gamma}^{ss}$, these probabilities can be easily generated. Let us consider an example to clarify the procedure. We consider the priority policy $\Gamma_3$ and we give the values of the corresponding transmission probabilities in table \ref{table.q1}. Note that $q_{s}\left(S_{i},S_{j},S_{k},\Gamma_3\right)=q_{d}\left(S_{i},S_{j},S_{k},\Gamma_3\right)=q_{u}\left(S_{i},S_{j},S_{k},\Gamma_3\right)=0$ for all the the other combinations of states $\left(S_{i},S_{j},S_{k}\right)$ not shown in the table. By analogy, the transmission probabilities in the cellular scenario for all $\Gamma \in \Omega_{\Gamma}^{ss}$ are computed.\\

\begin{table}
\begin{tabular}{|c|c|c|c|c|}
\hline 
Policy $\Gamma$ & $Q_{BS}$ State & $q_{s}(S_{i},S_{j},S_{k})$ & $q_{d}(S_{i},S_{j},S_{k})$ & $q_{u}(S_{i},S_{j},S_{k})$\tabularnewline
\hline 
\hline 
3 & Empty & $\begin{array}{cc}
q_s\,(S_{1},S_{1-2-3},S_{1-2-3})=1\\q_s\,(S_{2},S_{3},S_{1-2-3})=1
\end{array}$ & x  &$\begin{array}{cc}
q_u(S_{2-3},S_{1},S_{1-2-3})=1\\q_u(S_{2},S_{2-3},S_{1-2-3})=1
\end{array}$ \tabularnewline
\cline{3-5} 
 & Not Empty & $\begin{array}{cc}
q_s(S_{1},S_{1-2-3},S_{1})=\alpha_{1}\\
q_s(S_{1},S_{1-2-3},S_{2})=\alpha_{2}\\
q_s(S_{2},S_{1-2-3},S_{1})=\alpha_{3}\\
q_s(S_{2},S_{3},S_{2})=\alpha_{4}\\
q_s(S_{1},S_{1-2-3},S_{3})=1\\
q_s(S_{2},S_{3},S_{3})=1
\end{array}$& $\begin{array}{cc}
 q_d(S_{1},S_{1-2-3},S_{1})=1-\alpha_{1}\\
 q_d(S_{1},S_{1-2-3},S_{2})=1-\alpha_{2}\\
 q_d,(S_{2},S_{1-2-3},S_{1})=1-\alpha_{3}\\
q_d(S_{2},S_{3},S_{2})=1-\alpha_{4} \\
q_d(S_{3},S_{1-2-3},S_{1})=1\\
q_d(S_{3},S_{3},S_{2})=1
\end{array}$& $\begin{array}{cc}
q_u(S_{2-3},S_{1},S_{2-3})=1\\
q_u(S_{2},S_{2-3},S_{2})=1
\end{array}$ \tabularnewline
\hline  
\end{tabular}
\caption{Cell. Scenario: Transmission probabilities for the third priority policies $\Gamma_3$ and all the  possible channel states}
\label{table.q1}
\end{table}

\subsubsection*{\textbf{\textit{Step 1}}} \textit{Markov chain model of Q$_{BS}$}\\

Cellular communications are modeled as coupled processor sharing queues where the service rates of Q$_s$ (equivalent to the arrival rate of Q$_{BS}$) as well as the service rate of Q$_u$ depend on the state (empty or not) of Q$_{BS}$. Let us study the Markov chain of the queue Q$_{BS}$ in order to deduce the probability of being empty which means the probability of having at least one packet.\\

Q$_{BS}$ corresponds to the queue at the BS side that can be modeled as a Markov chain with a transition probability from a state $x_i$ to a state $x_{i+1}$ is equal to the probability of receiving $P$ packets per slot which is equivalent to the transmission probability of the queue Q$_s$ at rate $r_2$ due to the fact that the packets arriving to Q$_{BS}$ correspond to the packets transmitted by UE$_s$. Thus, the transition probability from a state $x_i$ to a state $x_{i+k}$ is equal to the probability of receiving $kP$ packets per slot which is equivalent to the transmission probability of the queue Q$_s$ at rate $r_1$. Moreover, the transition probability from a state $x_i$ to a state $x_{i-1}$ is equal to the transmission probability at the downlink BS-UE$_d$ at rate $r_2$ and the transition probability from a state $x_i$ to a state $x_{i-k}$ is equal to the transmission probability at the downlink BS-UE$_d$ at rate $r_1$. As we mentioned before, the scheduling decision depend on the state (empty or not) of Q$_{BS}$ which means that the behavior of Q$_s$ and Q$_u$ is coupled to the state of Q$_{BS}$ and the arrival probability when Q$_{BS}$ is empty (state $0$) is not equal to the arrival probability when Q$_{BS}$ is not empty (state $\neq0$). In figure \ref{fig.QBS_real_gen}, we present the discrete time Markov Chain with infinite states that describes the evolution of the queue Q$_{BS}$. 

\begin{figure}[H]
\centering
\captionsetup{justification=centering}
\includegraphics[width=0.8\textwidth]{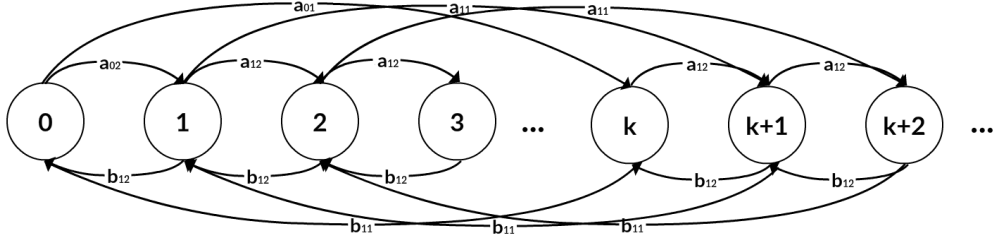}
\par
\caption{Markov Chain model of the queue Q$_{BS}$}
\label{fig.QBS_real_gen}
\end{figure}

We start by studying the queue Q$_{BS}$ in terms of probability of transmitting and probability of receiving packets in order to deduce the probability that the queue Q$_{BS}$ is empty.\\

\subsubsection*{\textbf{\textit{Step 1.a}}} \textit{Service probability of Q$_{BS}$}\\

For Q$_{BS}$, two service probabilities exist the first one b$_{11}$ corresponds to the probability of transmitting at rate $r_1$ and the second one b$_{12}$ corresponds to the probability of transmitting at rate $r_2$. 
\[
b_{11}\left(\Gamma\right)=p_d^{1}\sum_{i,j=1}^3p_s^i\,p^j_{u}\,q_{d}\left(S_{i},S_{j},S_{1},\Gamma\right)
\]
\[
b_{12}\left(\Gamma\right)=p_d^{2}\sum_{i,j=1}^3p^i_s\, p_u^{j}\, q_{d}\left(S_{i},S_{j},S_{2},\Gamma\right)
\]
Parameters $U,\,V,\,W',\,X',\,Y',\,Z',\,M$ and $N$ from table \ref{table.Notation} are used to make expressions simpler. These parameters depend on the considered priority policy $\varGamma$ and their values are deduced from that of the transmission probabilities (such that those in table \ref{table.q1} for $\Gamma_3$). The values of these parameters are specified in table \ref{table.Policy-Cell} for all $\Gamma \in \Omega_{\Gamma}^{ss}$. Hence, the service probabilities $b_{11}$ and $b_{12}$ of the queue $Q_{BS}$ can be written as it follows:
\begin{equation}
\label{QBS_b1_gen}
b_{11}=p_d^{1}\left(M-p_s^{1}W'-p_s^{2}Y'\right)=p_d^{1}\left(1-\alpha_{1}p_s^{1}-\alpha_{3}p_s^{2}\right)U
\end{equation}

\begin{equation}
\label{QBS_b2_gen}
b_{12}=p_d^{2}\left(N-p_s^{1}X'-p_s^{2}Z'\right)=p_d^{2}\left(p_s^1U+\bar{p}_s^1V-\alpha_{2}p_s^{1}U-\alpha_{4}p_s^{2}V\right)
\end{equation}

For each priority policy $\Gamma \in \Omega_{\Gamma}^{ss}$, we find the transmission probabilities of all the links ($s$, $d$ and $u$) then we compute the parameters $U,\,V,\,W',\,X',\,Y',\,Z',\,M$ and $N$ using table \ref{table.Notation}.

\subsubsection*{\textbf{\textit{Step 1.b}}} \textit{Arrival probability of Q$_{BS}$}

Due to the coupling between the queues Q$_s$ and Q$_u$ with the state (empty or not) of Q$_{BS}$, we should differ between (i) $a_{01}$ and $a_{02}$ that denote the arrival probabilities receptively at rate $r_1$ and $r_2$ when Q$_{BS}$ is empty and (ii) $a_{11}$ and $a_{12}$ that denote the arrival probabilities receptively at rate $r_1$ and $r_2$ when Q$_{BS}$ is not empty. For a given priority policy $\Gamma$, these probabilities are given by:

\[
a_{01}\left(\Gamma \right)=p_s^{1} \sum \limits_{j,k=1}^{3}p_u^{j}p_d^{k}q_{s}(S_{1},S_{j},S_{k},\varGamma|Q_{BS}(t)=0)
\]
\[
a_{02}\left(\Gamma \right)=p_s^{2}\sum   \limits_{j,k=1}^{3}p_u^{j}p_d^{k}q_{s}(S_{2},S_{j},S_{k},\varGamma|Q_{BS}(t)=0)
\]
\[
a_{11}\left(\Gamma \right)=p_{11}\sum \limits_{j,k=1}^{3}p_u^{j}p_d^{k}q_{s}(S_{1},S_{j},S_{k},\varGamma|Q_{BS}(t)>0)
\]
\[
a_{12}\left(\Gamma \right)=p_{21}\sum \limits_{j,k=1}^{3}p_u^{j}p_d^{k}q_{s}(S_{2},S_{j},S_{k},\varGamma|Q_{BS}(t)>0)
\]

\begin{table}[H]
\centering
  \begin{tabular}{|c|c|}
\hline 
Notation & Expression \tabularnewline
\hline 
\hline 
$U$ & $\sum \limits_{i=1}^{3}p_u^{i}q_{s}(S_{1},S_{i},-,\varGamma|Q_{BS}(t)=0)$ \tabularnewline
\hline 
$V$ & $\sum \limits_{i=1}^{3}p_u^{i}q_{s}(S_{2},S_{i},-,\varGamma|Q_{BS}(t)=0)$ \tabularnewline
\hline 
$W'$ & $\sum \limits_{i=1}^{3}p_u^{i}q_{s}(S_{1},S_{i},S_{1},\varGamma|Q_{BS}(t)>0)=\alpha_1U$\tabularnewline
\hline 
$X'$ & $\sum \limits_{i=1}^{3}p_u^{i}q_{s}(S_{1},S_{i},S_{2},\varGamma|Q_{BS}(t)>0)=\alpha_2U$\tabularnewline
\hline 
$Y'$ & $\sum \limits_{i=1}^{3}p_u^{i}q_{s}(S_{2},S_{i},S_{1},\varGamma|Q_{BS}(t)>0)=\alpha_3U$\tabularnewline
\hline 
$Z'$ & $\sum  \limits_{i=1}^{3}p_u^{i}q_{s}(S_{2},S_{i},S_{2},\varGamma|Q_{BS}(t)>0)=\alpha_4V$\tabularnewline
\hline 
$W$ & $\sum \limits_{i,j=1}^{3}\sum  \limits_{i=1}^{3}p_s^{i}p_d^{j}q_{u}(S_{i},S_{1},S_{j},\varGamma|Q_{BS}(t)=0)$ \tabularnewline
\hline 
$X$ & $\sum  \limits_{i,j=1}^{3}\sum  \limits_{i=1}^{3}p_s^{i}p_d^{j}q_{u}(S_{i},S_{2},S_{j},\varGamma|Q_{BS}(t)=0)$\tabularnewline
\hline 
$Y$ & $\sum  \limits_{i,j=1}^{3}\sum  \limits_{i=1}^{3}p_s^{i}p_d^{j}q_{u}(S_{i},S_{1},S_{j},\varGamma|Q_{BS}(t)>0)$\tabularnewline
\hline 
$Z$ & $\sum  \limits_{i,j=1}^{3}\sum  \limits_{i=1}^{3}p_s^{i}p_d^{j}q_{u}(S_{i},S_{2},S_{j}|Q_{BS}(t)>0$\tabularnewline
\hline 
$A$ & $\sum  \limits_{i=1}^{3}p_u^{i}q_{s}^{d}(S_{1},S_{i},\varGamma)=U$\tabularnewline
\hline 
$B$ & $\sum  \limits_{i=1}^{3}p_u^{i}q_{s}^{d}(S_{2},S_{i},\varGamma)=V$\tabularnewline
\hline 
$C$ & $\sum  \limits_{i=1}^{3}p_s^{i}q_{u}^{d}(S_{i},S_{1},\varGamma)$\tabularnewline
\hline 
$D$ & $\sum  \limits_{i=1}^{3}p_s^{i}q_{u}^{d}(S_{i},S_{2},\varGamma)$\tabularnewline
\hline 
$M$ & $p_s^{1}U+p_s^{2}U+p_s^{3}U=U$\tabularnewline
\hline 
$N$ & $p_s^{1}U+p_s^{2}V+p_s^{3}V=p_s^{1}U+\bar{p}_s^{1}V$\tabularnewline
\hline 
\end{tabular} 

\caption{Notation to make expressions simpler}
\label{table.Notation}
\end{table}

Using parameters $U$ and $V$ from table \ref{table.Notation}, the probabilities of arrival of the queue $Q_{BS}$ can be written as it follows:
\begin{equation}
\label{QBS_a01_gen}
a_{01}=p_s^{1}U
\end{equation}
\begin{equation}
\label{QBS_a02_gen}
a_{02}=p_s^{2}V
\end{equation}
\begin{equation}
a_{11}=p_s^{1}\left(\alpha_{1}p_d^{1}+\alpha_{2}p_d^{2}+p_d^{3}\right)U
\label{QBS_a11_gen}
\end{equation}
\begin{equation} 
a_{12}=p_s^{2}\left(\alpha_{3}p_d^{1}U+\alpha_{4}p_d^{2}V+p_d^{3}V\right)
\label{QBS_a12_gen}
\end{equation}

\subsubsection*{\textbf{\textit{Step 2}}\label{subsec:Pi_0_2rates}} \textit{Probability that Q$_{BS}$ empty}

Let us consider $a_0=a_{01}+a_{02}$, $a_1=a_{11}+a_{12}$ and $b_1=b_{11}+b_{12}$. We solve the balance equations in order to find the stationary distribution $\varPi$ of the Markov chain of Q$_{BS}$ and more specifically the probability that Q$_{BS}$ is empty $\varPi_0$. Recall that the stationary distribution of a Markov Chain exists if and only if  the stability condition is verified. Hence, finding the stationary distribution based on the balance equations is a necessary and sufficient condition for guaranteeing the stability of Q$_{BS}$.  As shown in figure (\ref{fig.QBS_real_gen}), the balance equations at the different states $i$ of the Markov chain are the following:

\begin{equation}
\label{eq.BE1_gen}
\text{For }i=0: \,\,\,\, \varPi_0a_{0}=\varPi_1b_{12}+\varPi_kb_{11}
\end{equation}
\begin{equation}
\label{eq.BE2_gen}
\text{For }i=1: \,\,\,\, \varPi_1\left(a_{1}+b_{12}\right)=\varPi_0a_{02}+ \varPi_2b_{12}+\varPi_{k+1}b_{11}
\end{equation}
\begin{equation}
\label{eq.BE2tok_gen}
\text{For }2\leq i\leq k-1: \,\,\,\, \varPi_i\left(a_{1}+b_{1}\right)=\varPi_{i-1}a_{12}+\varPi_{i+1}b_{12}+\varPi_{i+k}b_{11}
\end{equation}
\begin{equation}
\label{eq.BEk_gen}
\text{For }i=k: \,\,\,\, \varPi_k\left(a_{1}+b_{1}\right)=\varPi_{0}a_{01}+\varPi_{k-1}a_{12}+\varPi_{k+1}b_{12}+\varPi_{2k}b_{11}
\end{equation}
\begin{equation}
\label{eq.BalanceEquation_gen}
\text{For }i>k: \,\,\,\, \varPi_i\left(a_{1}+b_{1}\right)=\varPi_{i-k}a_{11}+\varPi_{i-1}a_{12}+\varPi_{i+1}b_{12}+\varPi_{i+k}b_{11}
\end{equation}

We look for a solution of the balance equations of the form $\varPi_n=x^n$. Then we construct a linear combination of these solutions (which also satisfying the boundary equations (\ref{eq.BE1_gen}), (\ref{eq.BE2_gen}), (\ref{eq.BE2tok_gen})) that verifies the normalization equation $\sum\limits_n\varPi_n=1$. \\
We start by substituting $\varPi_n$ by $x^n$ in (\ref{eq.BalanceEquation_gen}) and then dividing by $x^{n-k}$. This yields the following polynomial equation: 
\[
P\left(x\right)=b_{11}x^{2k}+b_{12}x^{k+1}-\left(a_{12}+a_{11}+b_{12}+b_{11}\right)x^k+a_{12}x^{k-1}+a_{11}
\]
\begin{equation}
=\left(x-1\right)\left(b_{11}\sum_{i=k+1}^{2k-1}x^i+\left(b_{11}+b_{12}\right)x^k-\left(a_{11}+a_{12}\right)x^{k-1}-a_{11}\sum_{i=0}^{k-1}x^i\right)
\label{eq.Polynom_gen}
\end{equation}
 
 The first root $x=1$ but this one is not a useful, since we must be able to normalize the solution of the equilibrium equations. Stability conditions requires that this polynomial has at least one root $x$ with $|x|<1$. Let us say we found $R$ roots such that $|x|<1$:  $x_1,x_2,x_3...x_{R}$ with $1\leq R \leq 2k-1$ (degree of the polynomial equation). Then the stationary distribution is given by the following linear combination:  
\begin{equation}
 \varPi_n=\sum^{R}_{i=1}c_ix_i^n \,\,\,\,\,\, n=k+1,k+2 
\label{eq.Pi_n_gen}
\end{equation}
Where $\varPi_n$ for $0\leq n \leq k$ as well as the coefficients $c_i$ for $0\leq i \leq R$ are computed by solving the following system of equations: (i) first $R+K$ balance equations and (ii) the normalization equation: $\sum \limits_n\varPi_n=1$.\\

For each choice of the coefficients $c_k$ the linear combination satisfies (\ref{eq.BalanceEquation_gen}). These coefficients add some freedom that can be used to also satisfy the initial balance equations for $0 \leq i \leq k$ as well as the normalization equation. 
Therefore, we have $R+k+1$ unknowns to determine in order to find the stationary distribution of the queue Q$_{BS}$, these unknowns are: the $R$ coefficients $c_i$ and the $k+1$ probabilities $\varPi_i$ for $0\leq i \leq k$. These unknowns are found by solving the following linear system $AX=B$ with $X^T=\left[ \varPi_0 ,\, \varPi_1 ,\, \varPi_2 ,\, ... ,\,\varPi_{k-1} ,\, \varPi_{k-1} ,\,c_1 ,\, c_2 ,\, c_3 ... ,\,c_{R-1},\, c_{R}\right]$ (Note that $X^T$ is the transpose of the vector $X$). 
By substituting  of (\ref{eq.Pi_n_gen}) into the first  $R+k-1$ balance equation as well as the normalization equation, we deduce for $k>2$ the matrix $A$ and $B$ of the linear system as it follows. (The special case of $k=2$ is treated afterward.
\\
\begin{center}

{\small{
\[
\hspace{-30pt}
A_0=\left[\begin{array}{ccccccccccc}
State & \varPi_0 & \varPi_1 & \varPi_2 & \varPi_3 & \varPi_4 & ... & \varPi_{k-2} & \varPi_{k-1} & \varPi_{k}\\
i=0 & -a_0 & b_{12} & 0 & 0 & 0 & ... & 0 & 0 & b_{11}\\
i=1 & a_{02} & -\left(a_1+b_{12}\right) & b_{12} & 0 & 0 & ... & 0 & 0 & 0\\
i=2 & 0 & a_{12} & -\left(a_1+b_{12}\right) & b_{12} & 0 & ... & 0 & 0 & 0\\
... & 0 & 0 & a_{12} & -\left(a_1+b_{12}\right) & b_{12} & 0... & 0 & 0 & 0\\
&  &  &  &  & &  ... &  &  &  \\
i=k-1 & 0 & 0 & 0 & 0 & 0 & ... & a_{12} & -\left(a_1+b_{12}\right) & b_{12}\\
i=k & a_{0k} & 0 & 0 & 0 & 0 & ... & 0 & a_{12} & -\left(a_1+b_{1}\right)\\
i=k+1 & 0 &a_{1k} & 0 & 0 & 0 & ... & 0 & 0 & a_{12}\\
i=k+2 & 0 & 0 & a_{1k} & 0 & 0 & ... & 0 & 0 & 0\\
i=k+3  & 0 & 0 & 0 &  a_{1k} & 0 & ... & 0 & 0 & 0 \\
&  &  &  &  & &  ... &  &  &  \\
i=2k & 0 & 0 & 0 &  0 & 0 & ... & 0 & 0 & a_{1k}\\
i=2k+1 & 0 & 0 &0 &  0 & 0 & ... & 0 & 0 & 0\\
... & 0 & 0 &0 &  0 & 0 & ... & 0 & 0 & 0\\
i=R+k-1 & 0 & 0 &0 &  0 & 0 & ... & 0 & 0 & 0\\
Norm & 1 & 1 & 1 & 1 & 1 &  ... & 1 & 1 & 1 \\
\end{array}\right]\]

\[\forall 1\leq i\leq R: A_i=
\left[\begin{array}{c}
c_i \\
0 \\
b_{11}x_i^{k+1}\\
b_{11}x_i^{k+2} \\
b_{11}x_i^{k+3}  \\
... \\
b_{11}x_i^{2k-1} \\
b_{12}x_i^{k+1} +b_{11}x_i^{2k}\\
-\left(a_1+b_{1}\right)x_i^{k+1} + b_{12}x_i^{k+2} +b_{11}x_i^{2k+1}\\
a_{12}x_i^{k+1}  -\left(a_1+b_{1}\right)x_i^{k+2} + b_{12}x_i^{k+3} +b_{11}x_i^{2k+2}\\
a_{12}x_i^{k+2} -\left(a_1+b_{1}\right)x_i^{k+3} + b_{12}x_i^{k+4} +b_{11}x_i^{2k+3}\\
... \\
a_{12}x_i^{2k-1} -\left(a_1+b_{1}\right)x_i^{2k} + b_{12}x_i^{2k+1} +b_{11}x_i^{3k}\\
a_{1k}x_i^{k+1}+ a_{12}x_i^{2k} -\left(a_1+b_{1}\right)x_i^{2k+1} + b_{12}x_i^{2k+2} +b_{11}x_i^{3k+1}\\
a_{1k}x_i^{k+2}+ a_{12}x_i^{2k+1} -\left(a_1+b_{1}\right)x_i^{2k+2} + b_{12}x_i^{2k+3} +b_{11}x_i^{3k+2}\\
a_{1k}x_i^{R-1}+ a_{12}x_i^{R+k-2} -\left(a_1+b_{1}\right)x_i^{R+k-1} + b_{12}x_i^{R+k} +b_{11}x_i^{R+2k}\\
\frac{1}{1-x_{i}}-\sum_{j=0}^{k}x_{i}^j \\
\end{array}\right]
\]
}}
\end{center}
\[
A=\left[\begin{array}{cccccc}
 A_0 & A_1  & A_2  & ...  & A_{R-1}  & A_R \\
\end{array}\right]\]

\[
B^T=\left[\begin{array}{cccccccccccccc}
 0 & 0  & 0  & 0  & 0  & 0 &  0...0 & 0  & 0  & 0  & 0  & 0  & 0 & 1 \\
\end{array}\right]\]

Thus, solving the linear system $AX=B$ with success induces two results: (i) stability of the queue Q$_{BS}$ and (ii) the stationary distribution of Q$_{BS}$ and especially what interest us is the probability that this queue is empty $\varPi_0$.\\

The result above holds for any $k \in \mathbb{N}^+$. However, we explicit the result for the case $\left(r_1=2r_2 \Rightarrow k=2\right)$ where $k-1=1$ hence the case $2\leq i\leq k-1$ should not be taken into the account for this especial case. Thus, the set of balance equations is the following:
\newpage
For $i=0$:
\begin{equation}
\label{eq.BE1}
\varPi_{0}a_{0}=\varPi_{1}b_{12}+\varPi_2b_{11}
\end{equation}
For $i=1$:
\begin{equation}
\label{eq.BE2}
\varPi_1\left(a_{1}+b_{12}\right)=\varPi_0a_{02}+ \varPi_2b_{12}+\varPi_3b_{11}
\end{equation}
For $i=2$:
\begin{equation}
\label{eq.BE3}
\varPi_2\left(a_{1}+b_{1}\right)=\varPi_0a_{01}+ \varPi_1a_{12}+\varPi_3b_{12}+\varPi_4b_{11}
\end{equation}
For $i\geq3$:
\begin{equation}
\varPi_i\left(a_{1}+b_{1}\right)=\varPi_{i-2}a_{11}+\varPi_{i-1}a_{12} +\varPi_{i+1}b_{12}+\varPi_{i+2}b_{11}
\label{eq.BalanceEquation}
\end{equation}

By analogy to the general case of $r_1=kr_2$, we look for a solution of the balance equations of the form $\varPi_{BS}\left( n\right)=x^n$ and then we construct a linear combination of these solutions also satisfying the boundary equations (\ref{eq.BE1}), (\ref{eq.BE2}), (\ref{eq.BE3}) as well as the normalization equation $\sum_n\varPi_{BS}\left(n\right)=1$. 
 Substituting of $\varPi_{BS}\left( n\right)$ by $x^n$ into (\ref{eq.BalanceEquation}) and then dividing by $x^{n-2}$ yield the following polynomial equation: 
\[
 P\left(x\right)=b_{11}x^4+b_{12}x^3-\left(a_{12}+a_{11}+b_{12}+b_{11}\right)x^2+a_{12}x+a_{11}
 \]
\begin{equation}
=\left(x-1\right)\left(b_{11}x^3+\left(b_{11}+b_{12}\right)x^2-\left(a_{11}+a_{12}\right)x-a_{11}\right)
\label{eq.Polynom}
\end{equation}
 
Recall that the root $x=1$ is not a useful one, since we must be able to normalize the solution of the equilibrium equations. Stability conditions requires that this polynomial has at least one root $x$ with $|x|<1$ say $x_1,x_2,x_3$. We now consider the stationary distribution given by the following linear combination:  
\begin{equation}
 \varPi_n=\sum^3_{k=1}c_kx_k^n \,\,\,\,\,\, n=3,4 ...
\label{eq.Pi_n}
\end{equation}

 and $\varPi_n$ for $=n=0,1,2$ are given by the boundary equations.
 For each choice of the coefficients $c_k$ the linear combination satisfies (\ref{eq.BalanceEquation}). These coefficients add some freedom that can be used to also satisfy the equations (\ref{eq.BE1}), (\ref{eq.BE2}), (\ref{eq.BE3}) and the normalization equation. 
 Substituting  of (\ref{eq.Pi_n}) into equations (\ref{eq.BE1}), (\ref{eq.BE2}), (\ref{eq.BE3}) we deduce $\varPi_0$, $\varPi_1$, $\varPi_2$ as function of  $c_1$,$c_1$ and $c_3$. Then substituting  of (\ref{eq.Pi_n}) for $n>3$ as well as $\varPi_0$, $\varPi_1$, $\varPi_2$ as function of  $c_1$,$c_1$ and $c_3$  into the normalization equation and the balance equations (\ref{eq.BalanceEquation}) for $n=3,4$ yield a set of 3 linear equations for 3 unknowns coefficients $c_1$,$c_1$ and $c_3$.\\
 
It can be shown that the unknowns $\varPi_0$, $\varPi_1$ and $\varPi_2$, $c_1, c_2, c_3$ is found by solving the following linear problem $AX=B$ such that:\\

 {\small{
\[
\hspace{-30pt}
A=\left[\begin{array}{cccccc} 
\varPi_0 & \varPi_1 & \varPi_2 & c_1 & ... & c_3 \\
-a_0 & b_{12} & b_{11} & 0 & ... & 0 \\ 
a_{02} & -\left(a_1 +b_{1}\right) & b_{12} & b_{11}x_1^3 & ... & b_{11}x_3^3 \\
a_{01} & a_{12} & -\left(a_1 +b_{1}\right) & b_{12}x_1^3+b_{11}x_1^4 & ... & b_{12}x_3^3+b_{11}x_3^4 \\
0 & a_{11} & a_{12} & -\left(a_1 +b_{1}\right)x_1^3 + b_{12}x_1^4+b_{11}x_1^5 & ... & -\left(a_1 +b_{1}\right)x_3^3 + b_{12}x_3^4+b_{11}x_3^5 \\
0 & 0 & a_{11} & a_{12}x_1^3 -\left(a_1 +b_{1}\right)x_1^4 + b_{12}x_1^5+b_{11}x_1^6 & ... & a_{12}x_3^3 -\left(a_1 +b_{1}\right)x_3^4 + b_{12}x_3^5+b_{11}x_3^6 \\
1 & 1 & 1 & \frac{1}{1-x_1}-1-x_1-x_1^2 & ... & \frac{1}{1-x_3}-1-x_3-x_3^2 \end{array}\right]
\]
 }}\\

\[
B=\left[\begin{array}{cccccc} 
0 & 0 & 0 & 0 & 0 & 1 
 \end{array}\right]
\]
           
Solving the system of linear equations given by $AX=B$ gives the values of $\varPi_i$ for $0\leq i \leq k-1$ and of the coefficients $c_j$ for $1\leq j \leq R$. Therefore, we find the probability that the  queue BS is empty which is $\varPi_0$.\\

\subsubsection*{\textbf{\textit{Step 3}}\label{subsec:Pi_0_2rates}} \textit{Service rates of Q$_s$ and Q$_u$}\\

We follow the procedure below, based on queuing theory analysis of the network capacity, in order to derive the stability region study of the cellular scenario. For the 3-UEs scenario, the stability region is characterized by computing the two service rates $\mu_s$ and $\mu_u$ of the \textit{UE2UE} and \textit{UE2BS communications}.\\

\subsubsection*{\textbf{\textit{Step 3.a}}} \textit{Service rate of Q$_{s}$}\\

Let us start with the service rate of the \textit{UE2UE communications} $\mu_s$. If Q$_{BS}$ is empty then the service rate of Q$_{s}$ is denoted by $\mu_s^0$, otherwise it is denoted by $\mu_s^1$. Hence, the average service rate of Q$_{s}$ for a given priority policy $\Gamma$ is computed as it follows: 

\[
\mu_{s}\left(\Gamma\right)=\mathbf{\mathbb{P}}\left[Q_{BS}=0\right]\mu_{s}^{0}\left(\Gamma\right)+\mathbf{\mathbb{P}}\left[Q_{BS}>0\right]\mu_{s}^{1}\left(\Gamma\right)
\]

\begin{equation}
\mu_{s}\left(\alpha,\Gamma\right)=\varPi_0\mu_{s}^{0}\left(\Gamma\right)+\left(1-\varPi_0\right)\mu_{s}^{1}\left(\alpha,\Gamma\right)\label{eq:Gmu1_real}
\end{equation}
with $\mu_{s}^{0}\left(\Gamma\right)$ and $\mu_{s}^{1}\left(\alpha,\Gamma\right)$ given by:
\[
\mu_{s}^{0}\left(\Gamma\right)=\mathbb{E}\left[\text{\ensuremath{\mu}}_{s}\left(\varGamma\,,\,Q|Q_{BS}=0\right)\right]
=\sum_{i=1}^{3}r_{i}p_s^{i}\sum \limits_{j=1}^3p_u^{j}q_{s}(S_{i},S_{j},-,\varGamma|Q_{BS}=0)
\]

\[
\mu_{s}^{1}\left(\alpha,\Gamma\right)=\mathbb{E}\left[\text{\ensuremath{\mu}}_{s}\left(\varGamma\,,\,Q|Q_{BS}>0\right)\right]=\sum_{i=1}^{3}r_{i}p_s^{i}\sum \limits_{j,k=1}^3p_u^{j}p_d^{k}q_{s}(S_{i},S_{j},S_{k},\varGamma|Q_{BS}>0)
\]

In order to simplify expressions, we consider the following notation $U$ and $V$ that depend on the considered priority policy $\varGamma$ and their values for the six priority policies $\Gamma \in  \Omega_{\Gamma}^{ss}$ are specified in table \ref{table.Policy-Cell}. Hence $\mu_{s}^{0}$ and $\mu_{s}^{1}$ can be written as it follows:

\begin{equation}
\label{mu10}
\mu_{s}^{0}=r_{1}p_s^{1}U+r_{2}p_s^{2}V
\end{equation}

\begin{equation}
\label{mu11}
\mu_{s}^{1}=r_{1}p_s^{1}\left(\alpha_{1}p_d^{1}+\alpha_{2}p_d^{2}+p_d^{3}\right)U+r_{2}p_s^{2}\left(\alpha_{3}p_d^{1}U+\alpha_{4}p_d^{2}U+p_d^{3}V\right)
\end{equation}

For each priority policy $\Gamma \in \Omega_\Gamma^{ss}$, we compute the probability that Q$_{BS}$ is empty $\varPi_0\left(\alpha,\Gamma\right)$ from \textit{ step 2} then we substitute the above expressions of $\mu_{s}^{0}\left(\Gamma\right)$ and $\mu_{s}^{1}\left(\alpha,\Gamma\right)$ into the equation (\ref{eq:Gmu1_real}), we obtain (\ref{eqn_mu1c_real}) as the service rate $\mu_{s}\left(\alpha,\Gamma\right)$ of the queue Q$_s$. \\

\subsubsection*{\textbf{\textit{Step 3.b}}} \textit{Service rate of Q$_{u}$}\\

After computing $\mu_s$, let us compute the second element of the stability region which is  the service rate $\mu_u$ of the \textit{UE2UE communication}. If Q$_{BS}$ is empty then the service rate of Q$_{u}$ is denoted by $\mu_u^0$ otherwise it is denoted by $\mu_u^1$. By analogy to the Q$_s$, the service rate of Q$_{u}$ is computed as it follows: 
\begin{equation}
\mu_{u}\left(\Gamma\right)=\varPi_0\mu_{u}^{0}\left(\Gamma\right)+\left(1-\varPi_0\right)\mu_{u}^{1}\left(\alpha,\Gamma\right)
\label{eq:Gmu2_real}
\end{equation}
with $\mu_{u}^{0}\left(\Gamma\right)$ and $\mu_{u}^{1}\left(\alpha,\Gamma\right)$ given by:

\newpage
\[\mu_{u}^{0}\left(\Gamma\right)=\mathbb{E}\left[\text{\ensuremath{\mu}}_{u}\left(S_{i},S_{j},S_{k},\varGamma\,,\,|Q_{BS}=0\right)\right]=\sum_{j=1}^{3}r_{j}p_u^{j}\sum \limits_{i=1}^3p_s^{i}q_{u}(S_{i},S_{j},-,\varGamma|Q_{BS}=0)
\]
\[\mu_{u}^{1}\left(\alpha,\Gamma\right)=\mathbb{E}\left[\text{\ensuremath{\mu}}_{u}\left(S_{i},S_{j},S_{k},\varGamma\,,\,Q|Q_{BS}>0\right)\right]=\sum_{j=1}^{3}r_{j}p_u^{j}\sum \limits_{i,k=1}^3p_s^{i}p_d^{k}q_{u}(S_{i},S_{j},S_{k},\varGamma|Q_{BS}>0)\]

Parameters $W,\,X,\,Y$ and $Z$ from table \ref{table.Notation} are used to make expressions simpler. These parameters depend on the considered priority policy $\varGamma$ and their values for the six priority policies $\Gamma \in  \Omega_{\Gamma}^{ss}$ are specified in table \ref{table.Policy-Cell}. Hence, $\mu_{u}^{0}$ and $\mu_{u}^{1}$ of the queue Q$_{u}$
can be written as it follows: 
\begin{equation}
\mu_{u}^{0}=r_{1}p_u^{1}W+r_{2}p_u^{2}X
\label{mu20}
\end{equation}
\begin{equation}
\mu_{u}^{1}=r_{1}p_u^{1}Y+r_{2}p_u^{2}Z
\label{mu21}
\end{equation}

For each priority policy $\Gamma \in \Omega_\Gamma^{ss}$, we compute the probability that Q$_{BS}$ is empty $\varPi_0\left(\alpha,\Gamma\right)$ from \textit{ step 2} then we substitute the above expressions of $\mu_{u}^{0}\left(\Gamma\right)$ and $\mu_{u}^{1}\left(\alpha,\Gamma\right)$ into the equation (\ref{eq:Gmu2_real}), we obtain (\ref{eqn_mu2c_real}) as the service rate $\mu_{u}\left(\alpha,\Gamma\right)$ of the queue Q$_u$. \\

\subsubsection*{\textbf{\textit{Step 4}}} \textit{Characterization of the stability region}\\

Combining the results of the previous steps, we deduce the exact stability region of the three-UEs scenario. Supposing that the arrival and service processes of Q$_{s}$ and Q$_{u}$ are strictly stationary and ergodic then their stability which is determined using \textbf{Loyne\textquoteright s} criterion is given by the condition that the average arrival rate is smaller than the average service rate.\\

The following procedure is pursue for capturing the stability region of the scenario. We start by considering a priority policies that correspond to the corner points of the stability region ($\Gamma \in \Omega_{\Gamma}^{ss}$ with $|\Omega_{\Gamma}^{ss}|=6$ ). Then, for this priority policy, we vary $\alpha \in \left[0,1\right]$. For each value of $\alpha$, we find the probability that Q$_{BS}$ is empty in order to deduce the service rates of the queues in the system. This procedure is applied for all the priority policies $\Gamma \in \Omega_{\Gamma}^{ss}$. \\

Therefore, the stability region for the 3-UEs cellular scenario is characterized by the set of mean arrival rates $\lambda_s$ and $\lambda_u$ in$\mathcal{R}_c^{ss}$  such that :

\[
\mathcal{R}_{c}^{ss}=co\left(\bigcup\limits_{\Gamma \in \Omega^{ss}_{\Gamma}}\bigcup\limits_{\alpha \in \left[0,1 \right]^4}\lbrace \mu_s\left(\alpha,\Gamma\right) , \mu_u\left(\alpha,\Gamma\right)  \rbrace\right)
\] 

where $\mu_{s}\left(\alpha,\Gamma\right)$ and $\mu_{u}\left(\alpha,\Gamma\right)$ are respectively given by (\ref{eqn_mu1c_real}) and (\ref{eqn_mu2c_real}). \\

\subsubsection*{\textbf{\textit{Step 5}}} \textit{Arrival rates within the stability region is equivalent to the stability of the system of queues}\\

We prove that if the mean arrival rates $\lambda_s$ and $\lambda_u$ are in the region $\mathcal{R}_c^{ss}$ is equivalent to the stability of the system of queues. To do so we prove that being having $\bm{\lambda}\in\mathcal{R}_c^{ss}$ gives the stability of the queues and vice versa. We suppose the following notation:
\begin{itemize}
\item $a_i\left(t \right)$: arrival process at queue Q$_i$ for $i=s,d,u$
\item $\lambda_s$ and $\lambda_u$ the mean arrival rates at respectively the \textit{UE2UE} and \textit{UE2BS communications}. By the law of large numbers, we have with probability 1:
\[
\lim_{t\rightarrow \infty}\frac{1}{t}\sum_{\tau=0}^{t-1}a_i\left(\tau\right)=\mathbb{E}\left[a_i\left(t\right)\right]=\lambda_i
\]
\item The second moments of the arrival processes $\mathbb{E}\left[a_i\left(t\right)^2\right]$ are assumed to be finite.
\item $b_i\left(t \right)$: departure process at queue Q$_i$ for $i=s,d,u$
\item  $S\left(t \right)=\left( S_i\left(t \right),S_j\left(t \right),S_k\left(t \right) \right)$: channel state vector where each SNR state is within $\lbrace S_1, S_2, S_3\rbrace$
\item $\Gamma \left(t\right)$: policy of scheduling on slot $t$\\
\end{itemize}

Thus for each channel  $i \in \left[s,d,u \right]$ the queuing dynamics are given by:
\[
Q_i\left(t+1\right)=max\left[Q_i\left(t\right)-b_i\left(t\right)\right] + a_i\left(t\right)
\] 

where $b_i\left(t\right)$ represents the amount of service offered to channel $i$ on slot $t$ and is defined by a function $\hat{b}_i \left(S\left(t \right),Q\left(t \right),\Gamma \left(t\right)\right)$:
\[
b_i \left( t \right)=\hat{b}_i \left(S\left(t \right),Q\left(t \right),\Gamma \left(t\right)\right)=r_i q_i\left(S\left(t \right),Q\left(t \right),\Gamma \left(t\right)\right)
\]
Further, by the law of large numbers, we have with probability 1:
\[
\mu_i\left(\Gamma\right)=\lim_{t\rightarrow \infty}\frac{1}{t}\sum_{\tau=0}^{t-1}b_i\left(\tau|\Gamma\right)=\lim_{t\rightarrow \infty}\frac{1}{t}\sum_{\tau=0}^{t-1}\hat{b}_i \left(S\left(\tau \right),Q\left(\tau \right)|\Gamma \right)
\]
\[
=\lim_{t\rightarrow \infty}\frac{1}{t}\sum_{\tau=0}^{t-1}\sum_{Q^*\in \mathbb{Q}}\hat{b}_i \left(S\left(\tau \right)|Q^*,\Gamma \right)\mathbbm{1}_{\left[ Q^* \right]}
\]
\[
=\lim_{t\rightarrow \infty}\frac{1}{t}\sum_{\tau=0}^{t-1}\left[\hat{b}_i \left(S\left(\tau \right)|Q_{BS}\left(\tau \right)=0,\Gamma \right)\mathbbm{1}_{\left[ Q_{BS}\left(\tau \right)=0 \right]}+\hat{b}_i \left(S\left(\tau \right)|Q_{BS}\left(\tau \right)>0,\Gamma \right)\mathbbm{1}_{\left[ Q_{BS}\left(\tau \right)>0 \right]}\right]
\]
\[
=\sum_{\left(S_i,S_j,S_k\right) \in \mathbb{S}}p_s^{i}p_u^{j}p_d^{k}r_iq_i \left(S_i,S_j,S_k | Q_{BS}=0,\Gamma \right)\mathbb{P}\left[Q_{BS}=0\right]
\]
\[
+\sum_{\left(S_i,S_j,S_k\right) \in \mathbb{S}}p_s^{i}p_u^{j}p_d^{k}r_iq_i \left(S_i,S_j,S_k | Q_{BS}>0,\Gamma \right)\mathbb{P}\left[Q_{BS}>0\right]
\]
\[
=\mu_{i}^0\left(\Gamma\right)\mathbb{P}\left[Q_{BS}=0\right]+\mu_{i}^1\left(\Gamma\right)\mathbb{P}\left[Q_{BS}>0\right]
\]
\[
=\mu_{i}^0\left(\Gamma\right)\varPi_0+\mu_{i}^1\left(\Gamma\right)\left(1-\varPi_0\right)
\]

\subsubsection*{\textbf{\textit{Step 5.a}}} \textit{$\lambda \in  \mathcal{R}_c^{ss} \Rightarrow $ Stability of the queues}\\

$\lambda \in  \mathcal{R}_c^{ss} \Rightarrow$ then for each channel $i=s,d,u$ it exists a $\mu_i^*=\sum\limits_{\Gamma} \varPi_{\Gamma} \mu_i\left(\Gamma\right)$ as combination of $\mu_i\left(\Gamma\right)$ for different scheduling policies $\Gamma$ such that $\lambda_i \leq \mu_i^*$. 
We have by the law of large numbers that:
\[
\lim_{t\rightarrow \infty}\frac{1}{t}\sum_{\tau=0}^{t-1}b_i\left(\tau\right)
=\lim_{t\rightarrow \infty}\frac{1}{t}\sum_{\Gamma^* \in \Omega_{\Gamma}}\sum_{\tau=0}^{t-1}b_i\left(\tau|\Gamma=\Gamma^*\right)\mathbbm{1}_{\Gamma=\Gamma^*}=\lim_{t\rightarrow \infty}\frac{1}{t}\sum_{\Gamma^* \in \Omega_{\Gamma}}\varPi_{\Gamma^*}\sum_{\tau=0}^{t-1}b_i\left(\tau|\Gamma=\Gamma^*\right)
\]
\[
=\sum_{\Gamma^* \in \Omega_{\Gamma}}\varPi_{\Gamma^*}\mu_i\left(\Gamma^*\right)
=\mu_i^*
\]
Hence $\lambda_i \leq \mu_i^* =\lim_{t\rightarrow \infty}\frac{1}{t}\sum_{\tau=0}^{t-1}b_i\left(\tau\right)$ which gives that queue $i$ is stable for $i=s,d,u$.
We deduce that if $\lambda \in  \mathcal{R}_c^{ss}$ then the system of queues is stable.\\

\subsubsection*{\textbf{\textit{Step 5.b}}} \textit{ Stability of the queues $\Rightarrow \lambda \in  \mathcal{R}_c^{ss}$}\\

If the queues are stable then each queue $i \in \left\lbrace s,d,u\right\rbrace$ has a stationary distribution $\varPi_i$. The mean service rate is given by:
\[
\mu_i=\mathbb{E}\left[ \hat{b}_i \left(S\left(t\right),Q\left(t\right),\Gamma\left(t\right) \right)\right]
\]
\[
=\sum_{\Gamma^*\in\Omega_{\Gamma}}\varPi_{\Gamma^*}\left[\mathbb{E}\left[ \hat{b}_i \left(S\left(t\right)|Q_{BS}=0,\Gamma=\Gamma^* \right)\varPi_{BS}\left(0\right)\right]+\mathbb{E}\left[\hat{b}_i \left(S\left(t\right)|Q_{BS}>0,\Gamma=\Gamma^* \right)\bar{\varPi}_{BS}\left(0\right)\right]\right]
\]
\[
=\sum_{\Gamma^* \in \Omega_{\Gamma}}\varPi_{\Gamma^*}\mu_i\left(\Gamma^*\right)
\]
Knowing that each queue $i\in \left\lbrace s,d,u \right\rbrace$ is stable then $\lambda_i \leq \mu_i$ which gives  $\lambda_i \leq \sum \limits_{\Gamma^* \in \Omega_{\Gamma}}\varPi_{\Gamma^*}\mu_i\left(\Gamma^*\right)$ which is a combination of the limit of the stability region  $\mathcal{R}_c^{ss}$ hence $\lambda \in  \mathcal{R}_c^{ss}$.\\

\subsection{Proof of lemma \ref{th2_gen} \label{A_gen}}

Here, we propose an approximated model for the three-UEs scenario and we define the condition that the fraction vector  $\alpha$ should verify in order to satisfy the stability of the approximated BS queue $\tilde{Q}_{BS}$. For this aim, we proceed as it follows: \textit{\textbf{step 1}} proposes the approximated Markov chain model of $\tilde{Q}_{BS}$ and gives the expression of the probability that this queue is empty as function of the arrival and service probabilities, \textit{\textbf{step 2}} provides the expressions of the arrival and service probabilities of this chain, \textit{\textbf{step 3}} deduces the stability condition that the corresponding fraction vector $\tilde{\alpha}$ should verify. The main challenge is to find an approximated Markov Chain that satisfies the following criteria: simple model, with an explicit form of the stationary distribution, that reduces complexity and finally an approximation that provides close results to the exact stability region. \\

\subsubsection*{\textbf{\textit{Step 1}}} \textit{Approximated system model}\\

Cellular communications are modeled as coupled processor sharing queues where the service rates of Q$_s$ (equivalent to the arrival rate of Q$_{BS}$) as well as the service rate of Q$_u$ depend on the state (empty or not) of Q$_{BS}$. We have already studied the stability region of the cellular scenario by consider the real Markov chain representing the queue Q$_{BS}$. This real Markov chain consider the transition probabilities of both rates $r_1$ and $r_2$. We note that finding such stability region is done by varying the vector $\alpha\in\left[0,1\right]^4$ and solving for each $\alpha$ the linear system of equations (see subsection \ref{subsec:Pi_0_2rates}) in order to find the probability that Q$_{BS}$ is empty.\\
Let us present an approximation of the stability region for which we provide an explicit formula of the points $\left(\mu_s,\mu_u\right)$ as well as the exact fraction vector $\tilde{\alpha}^*$ for which the limit of the stability region is achieved. In other terms, this explicit formula avoid us the consideration of all the $\alpha \in \left[0,1 \right]^4$ as well as the resolving of the system of equations.

To do so, we consider the average number of packets received and transmitted per time slot at the BS side in order to find the probability of receiving and transmitting a packet per time slot at the BS level. Thus, the goal is to have an approximated queue $\tilde{Q}_{BS}$ that can be modeled by a simple birth and death Markov chain.
We remark first that the average number of packets received by the BS is given by $Pa_{02}+kPa_{01}$ when the queue is empty and this average is equal to $Pa_{12}+kPa_{11}$ otherwise, we note second that the average number of transmitted packets by the BS is equal to $Pb_{12}+kPb_{11}$. In other terms, with probability $a_{12}+ka_{11}$ the BS is receiving $P$ packets per time slot and with probability $b_{12}+kb_{11}$ the BS is transmitting $P$ packets per time slot.

The approximation is based on the simplification of the Markov Chain model that helps us to reduce the complexity of the problem. Therefore, the approximated Markov Chain still take into account the multiple rate model at the level of the probabilities of transition. Indeed, the multiple rate model is integrated in such a way that the probability of transmitting $P$ packets at rate $r_1$ is given by the sum of the two following terms: (i) $k$ times the probability of having a transmission at a rate $r_1$ and (ii) the probability of having a transmission at a rate $r_2$.\\

\subsubsection*{\textbf{\textit{Step 1.a}}} \textit{Approximated Markov Chain  $\tilde{Q}_{BS}$}

As we mentioned before, the scheduling decision depend on the state (empty or not) of Q$_{BS}$ which means that the behavior of Q$_s$ and Q$_u$ is coupled to the state of Q$_{BS}$. Thus, the arrival probability when Q$_{BS}$ is empty (state $0$ of Markov chain) is not equal to the arrival probability when Q$_{BS}$ is not empty (state $\neq0$ of Markov chain). 
 The approximated queue $\tilde{Q}_{BS}$ corresponds to the queue at the BS side for which the transition probability from a state $x_i$ to a state $x_{i+1}$ is equivalent to the probability of transmitting $P$ packets per time slot by Q$_{s}$ which is equal to the probability of receiving $P$ packets per time slot by $\tilde{Q}_{BS}$ which means a probability of $a_{02}+ka_{01}$ when $\tilde{Q}_{BS}$ is empty and $a_{12}+ka_{11}$ otherwise. Moreover, the transition probability from a state $x_i$ to a state $x_{i-1}$ is equivalent to the probability of transmitting $P$ packets per time slot at the downlink BS-UE$_d$ which means a probability of $b_{12}+kb_{11}$. In figure \ref{fig.QBS}, we present the discrete time Markov Chain with infinite states which describes the evolution of the approximated queue $\tilde{Q}_{BS}$. \\
 
\begin{figure}[H]
\begin{centering}
\includegraphics[width=0.6\textwidth]{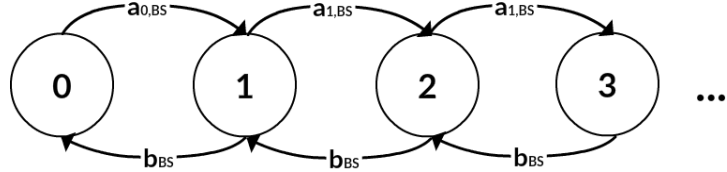}
\par
\caption{Approximated Markov Chain model of the BS queue: $\tilde{Q}_{BS}$}
\label{fig.QBS}
\end{centering}
\end{figure}

The transition probabilities of the Markov chain corresponding to $\tilde{Q}_{BS}$ are given by:
\begin{itemize}
\item Service probability: $b_{BS}\left( \Gamma\right)=b_{12}\left( \Gamma\right)+kb_{11}\left( \Gamma\right)$.
\item Arrival probability when $\tilde{Q}_{BS}=0$: $a_{BS}^0\left( \Gamma\right)=a_{02}\left( \Gamma\right)+ka_{01}\left( \Gamma\right)$.
\item Arrival probability when $\tilde{Q}_{BS}>0$: $a_{BS}^1\left(\alpha, \Gamma\right)=a_{12}\left( \alpha,\Gamma\right)+ka_{11}\left( \alpha,\Gamma\right)$.\\
\end{itemize}

We see that these probabilities depend on the priority policy $\Gamma$, but for clarity reasons of the expressions we may use the notation $b_{BS}$, $a^0_{BS}$ and $a^1_{BS}$ instead of $b_{BS}\left( \Gamma\right)$, $a^0_{BS}\left( \Gamma\right)$ and $a^1_{BS}\left(\alpha, \Gamma\right)$. For the approximated Markov chain we consider the fraction vector $\tilde{\alpha}=\left[\tilde{\alpha}_1, \tilde{\alpha}_2, \tilde{\alpha}_3, \tilde{\alpha}_4\right]$.\\

\subsubsection*{\textbf{\textit{Step 1.b}}} \textit{Probability that $\tilde{Q}_{BS}$ is empty}\\

The probability that $\tilde{Q}_{BS}$ is empty is deduced from the stationary distribution $\tilde{\varPi}$ of the approximated Markov Chain of $\tilde{Q}_{BS}$ (figure \ref{fig.QBS}) which is computed by applying the balance equations
as follows:  
\[
\tilde{\varPi}_1=\frac{a_{BS}^0}{b_{BS}}\tilde{\varPi}_0
;
\tilde{\varPi}_2=\frac{a_{BS}^0a_{BS}^1}{b_{BS}}\tilde{\varPi}_0
\]

\[
\tilde{\varPi}_n=\frac{a_{BS}^0\left(a_{BS}^1\right)^{n-1}}{\left(b_{BS}\right)^{n}}\tilde{\varPi}_0
\]
Knowing that the stationary distribution verifies $\sum_{n=1}^{\infty}  \tilde{\varPi}_n=1$; hence the probability that $\tilde{Q}_{BS}$ is empty for a given priority policy $\Gamma$ is given by:
\begin{equation}
\tilde{\varPi}_0=\mathbf{\mathbb{P}}\left[\tilde{Q}_{BS}=0\right]
=\frac{b_{BS}-a_{BS}^1}{b_{BS}-a_{BS}^1+a_{BS}^0}
\label{eq.Pi_0_gen}
\end{equation}

\[
iff \,\, a_{BS}<b_{BS}.
\]

\subsubsection*{\textbf{\textit{Step 2}}} \textit{Service and arrival probabilities of $\tilde{Q}_{BS}$}\\

In order to find the stability conditions, we need to find the expressions of transitions probabilities of the approximated Markov chain of $\tilde{Q}_{BS}$.

\subsubsection*{\textbf{\textit{Step 2.a}}} \textit{Service probability of $\tilde{Q}_{BS}$}\\

Using parameters $U,\,V$ and $N$ from table \ref{table.Notation} and based on equations (\ref{QBS_b1_gen}) and (\ref{QBS_b2_gen}), the service probability $b_{BS}$ can be written as it follows:

\begin{equation}
b_{BS}=kp_d^{1}\left(1-\tilde{\alpha}_{1}p_s^{1}-\tilde{\alpha}_{3}p_s^{2}\right)U+p_d^{2}\left(N-\tilde{\alpha}_{2}p_s^{1}U-\tilde{\alpha}_{4}p_s^2V\right)
\label{Pmu_BS_gen}
\end{equation}

\subsubsection*{\textbf{\textit{Step 2.b}}} \textit{Arrival probability of $\tilde{Q}_{BS}$}\\

We should differ between $a_{BS}^0$: arrival probability when $\tilde{Q}_{BS}$ is empty and $a_{BS}^1$: arrival probability when $\tilde{Q}_{BS}$ is not empty. Using parameters $U,\,V$ and $N$ from table \ref{table.Notation} and based on equations (\ref{QBS_a01_gen}), (\ref{QBS_a02_gen}), (\ref{QBS_a11_gen}) and (\ref{QBS_a12_gen}), the probabilities of arrival $a_{BS}^0$ and $a_{BS}^1$ of the approximated queue $\tilde{Q}_{BS}$ respectively when it is empty or not empty can be written as it follows:

\begin{equation}
\label{Pmu10_gen}
a_{BS}^0=kp_s^{1}U+p_s^{2}V
\end{equation}
\begin{equation}
\label{Pmu11_gen}
a_{BS}^1=kp_s^{1}\left(\tilde{\alpha}_{1}p_d^{1}+\tilde{\alpha}_{2}p_d^{2}+p_d^{3}\right)U+p_s^{2}\left(\tilde{\alpha}_{3}p_d^{1}U+\tilde{\alpha}_{4}p_d^{2}V+p_d^{3}V\right)
\end{equation}

The values of the parameter above $U$, $V$ and $N$ are given in table \ref{table.Policy-Cell} for all the considered priority policies $\Gamma$ that correspond to the corner point of the stability region ($\Gamma \in \Omega_{\Gamma}^{ss}$). Since the arrival probability $a_{BS}$ at the queue $\tilde{Q}_{BS}$ is given by: 
\[\mathbf{\mathbb{P}}\left[\tilde{Q}_{BS}=0\right]a_{BS}^{0}+\mathbf{\mathbb{P}}\left[\tilde{Q}_{BS}>0\right]a_{BS}^{1} = \tilde{\varPi}_0a_{BS}^{0}+\left( 1-\tilde{\varPi}_0\right)a_{BS}^{1}
\] 
\begin{equation}
\label{eq.a_bs_gen}
\Rightarrow a_{BS}=\frac{b_{BS}a_{BS}^0}{b_{BS}-a_{BS}^1+a_{BS}^0}
\end{equation}

\subsubsection*{\textbf{\textit{Step 3}}} \textit{Stability condition of $\tilde{Q}_{BS}$}\\

Supposing that the arrival and
service processes of $\tilde{Q}_{BS}$ are strictly stationary and ergodic
then the stability of $\tilde{Q}_{BS}$ can be determined using \textbf{Loyne's} criterion
which states that the queue is stable if and only if the average arrival
rate is strictly less than the average service rate. Then the stability
condition that the fraction vector $\tilde{\alpha}=\left(\tilde{\alpha}_{1},\tilde{\alpha}_{2},\tilde{\alpha}_{3},\tilde{\alpha}_{4}\right)$
should verify in order to satisfy the stability of the queue $\tilde{Q}_{BS}$
is given by:
\[
a_{BS}<b_{BS}\Leftrightarrow\frac{b_{BS}a_{BS}^0}{b_{BS}-a_{BS}^1+a_{BS}^0}<b_{BS} \Leftrightarrow a_{BS}^1<b_{BS}
\]
Since $a_{BS}^1$ and $b_{BS}$ are given by (\ref{Pmu11_gen}) and (\ref{Pmu_BS_gen}) then the queue $\tilde{Q}_{BS}$ is stable if the fraction vector $\tilde{\alpha}$ verifies (\ref{eq:QBSstabConstraint_gen}).

\subsection{Proof of theorem \ref{th3_gen} \label{B_gen}}

Here, we study the approximated model for the three-UEs scenario and we derive its corresponding stability region. For this aim, we proceed similarly to the proof in appendix \ref{stab_real_twoRate}. However, the first two steps for the approximated model (the modeling of the BS queue by an approximated Markov Chain as well as the computation of the probability that this queue is empty) are already proved in appendix \ref{A_gen}. Hence, based on these results, we pursue the procedure by the following: \textit{\textbf{step 1}} computes the service rate of both queues Q$_s$ and  Q$_u$, \textit{\textbf{step 2}} provides the optimal fraction vector that depends on the priority policy $\Gamma \in \Omega_\Gamma^{ss}$ and that achieves the corner point corresponding to each priority policy $\Gamma$, \textit{\textbf{step 3}} combines the previous results and provides the simple and explicit form of the approximated stability region and \textit{\textbf{step 4}}, similarly to the step 5 of the appendix \ref{stab_real_twoRate}, verifies that having a set of arrival rates within this stability region is equivalent to the stability of the approximated system of queues. Here, there are two main challenges: (i) finding the approximated Markov Chain model of the queue Q$_{BS}$ that will help us to achieve an explicit and simple form of the stability region and (ii) the computation of the optimal fraction vector that will reduce the the complexity of the problem. \\

\subsubsection*{\textbf{\textit{Step 1}}} \textit{Service rate of Q$_{s}$ and Q$_{u}$}\\

We compute the approximated stability region of the 3-UEs cellular scenario. For that, we compute the service rate of both \textit{UE2UE} and \textit{UE2BS communications} for a given priority policy $\Gamma$. We follow the procedure below, based on queuing theory analysis of the network capacity, in order to derive the performance study of cellular system.\\

\subsubsection*{\textbf{\textit{Step 1.a}}} \textit{Service rate of Q$_{s}$}\\

If $\tilde{Q}_{BS}$ is empty then the service rate of Q$_{s}$ is denoted by $\mu_s^0$ and by $\mu_s^1$ otherwise. Since the service rate of Q$_{s}$ is computed by 
\[\tilde{\mu}_{s}\left(\tilde{\alpha},\Gamma\right)=\mathbf{\mathbb{P}}\left[\tilde{Q}_{BS}=0\right]\mu_{s}^{0}\left(\Gamma\right)+\mathbf{\mathbb{P}}\left[\tilde{Q}_{BS}>0\right]\mu_{s}^{1}\left(\tilde{\alpha},\Gamma\right)
\]
\[
=\frac{\mu_{s}^{0}\left(\Gamma\right)\left(b_{BS}\left(\Gamma\right)-a_{BS}^1\left(\tilde{\alpha},\Gamma\right)\right)+\mu_{s}^{1}\left(\tilde{\alpha},\Gamma\right)a_{BS}^{0}\left(\Gamma\right)}{b_{BS}\left(\Gamma\right)-a_{BS}^{1}\left(\tilde{\alpha},\Gamma\right)+a_{BS}^{0}\left(\Gamma\right)}
 \]
\begin{equation}
\Rightarrow \tilde{\mu}_{s}=\frac{\mu_{s}^{0}\left(b_{BS-}a_{BS}^1\right)+\mu_{s}^{1}a_{BS}^{0}}{b_{BS}-a_{BS}^{1}+a_{BS}^{0}}\label{eq:Gmu1_gen}
\end{equation}
$\mu_{s}^{0}\left(\Gamma\right)$ and $\mu_{s}^{1}\left(\alpha,\Gamma\right)$ are respectively given by (\ref{mu10}) and( \ref{mu11}),  we obtain (\ref{eqn_mu1c_1_gen}) as the service rate \textbf{$\tilde{\mu}_{1}\left(\tilde{\alpha},\Gamma\right)$} of the queue Q$_s$.

\[
\resizebox{1\hsize}{!}{$\tilde{\mu}_{s}\left(\tilde{\alpha},\Gamma\right)=\frac{\left(r_{1}p_s^{1}U+r_{2}p_s^{2}V\right)\left(-2k\tilde{\alpha}_{1}p_s^{1}p_d^{1}U-\left(k+1\right)\tilde{\alpha}_{2}p_s^{1}p_d^{2}U-\left(k+1\right)\tilde{\alpha}_{3}p_s^{2}p_d^{1}U-\tilde{\alpha}_{4}p_s^{2}p_d^{2}V+kp_d^{1}+p_d^{2}N-\left(kp_s^{1}U+p_s^{2}V\right)p_d^{3}\right)}{-2k\tilde{\alpha}_{1}p_s^{1}p_d^{1}U-\left(k+1\right)\tilde{\alpha}_{2}p_s^{1}p_d{2}U-\left(k+1\right)\tilde{\alpha}_{3}p_s^{2}p_d^{1}U-2\tilde{\alpha}_{4}p_s^{2}p_d^{2}V+\left(kp_d^{1}+p_d^{2}N\right)+\left(kp_s^{1}U+p_s^{2}V\right)\left(1-p_d^{3}\right)}$}
\]
\[
\resizebox{1\hsize}{!}{$+\frac{\left(kp_s^{1}U+p_s^{2}V\right)\left(r_{1}\tilde{\alpha}_{1}p_s^{1}p_d^{1}W+r_{1}\tilde{\alpha}_{2}p_s^{1}p_d^{2}U+r_{2}\tilde{\alpha}_{3}p_s^{2}p_d^{1}U+r_{2}\tilde{\alpha}_{4}p_s^{2}p_d^{2}V+\left(r_{1}p_s^{1}U+r_{2}p_s^{2}V\right)p_d^{3}\right)}{-2k\tilde{\alpha}_{1}p_s^{1}p_d^{1}U-\left(k+1\right)\tilde{\alpha}_{2}p_s^{1}p_d^{2}U-\left(k+1\right)\tilde{\alpha}_{3}p_s^{2}p_d^{1}U-2\tilde{\alpha}_{4}p_s^{2}p_d^{2}V+\left(kp_d^{1}+p_d^{2}N\right)+\left(kp_s^{1}U+p_s^{2}V\right)\bar{p}_d^{3}}$}
\]
$=\left(r_{1}p_s^{1}U+r_{2}p_s^{2}V\right)+\left(kp_s^{1}U+p_s^{2}V\right)\times$
\[
\resizebox{1\hsize}{!}{$\frac{\left(-r_{1}\tilde{\alpha}_{1}p_s^{1}p_d^{1}U-r_{1}\tilde{\alpha}_{2}p_s^{1}p_d^{2}U-r_{2}\tilde{\alpha}_{3}p_s^{2}p_d^{1}U-r_{2}\tilde{\alpha}_{4}p_s^{2}p_d^{2}V+\left(r_{1}p_s^{1}U+r_{2}p_s^{2}V\right)\bar{p}_d^3\right)}{2k\tilde{\alpha}_{1}p_s^{1}p_d^{1}U+\left(k+1\right)\tilde{\alpha}_{2}p_s^{1}p_d^{2}U+\left(k+1\right)\tilde{\alpha}_{3}p_s^{2}p_d^{1}U+2\tilde{\alpha}_{4}p_s^{2}p_d^{2}V-\left(kp_d^{1}+p_d^{2}N\right)-\left(kp_s^{1}U+p_s^{2}V\right)\bar{p}_d^{3}}$}
\]
$=\left(r_{1}p_s^{1}U+r_{2}p_s^{2}V\right)+r_{2}\left(kp_s^{1}U+p_s^{2}V\right)\times$
\[
\resizebox{1\hsize}{!}{$\frac{2\left(-k\tilde{\alpha}_{1}p_s^{1}p_d^{1}U-k\tilde{\alpha}_{2}p_s^{1}p_d^{2}U-\tilde{\alpha}_{3}p_s^{2}p_d^{1}U-\tilde{\alpha}_{4}p_s^{2}p_d^{2}V+\left(kp_s^{1}U+p_s^{2}V\right)\bar{p}_d^3\right)}{2\left( 2k\tilde{\alpha}_{1}p_s^{1}p_d^{1}U+\left(k+1\right)\tilde{\alpha}_{2}p_s^{1}p_d^{2}U+\left(k+1\right)\tilde{\alpha}_{3}p_s^{2}p_d^{1}U+2\tilde{\alpha}_{4}p_s^{2}p_d^{2}V-\left(kp_d^{1}+p_d^{2}N\right)-\left(kp_s^{1}U+p_s^{2}V\right)\bar{p}_d^{3}\right)}$}
\]
$=\left(r_{1}p_s^{1}U+r_{2}p_s^{2}V\right)\times$
\[
\resizebox{1\hsize}{!}{$\left(\frac{1}{2}+\frac{\left(1-k\right)\tilde{\alpha}_2 p_s^{1}p_d^{2}U+\left(k-1\right)\tilde{\alpha}_{3}p_s^{2}p_d^{1}U-\left(kp_d^{1}+p_d^{2}N\right)+\left(kp_s^{1}U+p_s^{2}V\right)\bar{p}_d^3}{2\left(2k\tilde{\alpha}_{1}p_s^{1}p_d^{1}U+\left(k+1\right)\tilde{\alpha}_{2}p_s^{1}p_d^{2}U+\left(k+1\right)\tilde{\alpha}_{3}p_s^{2}p_d^{1}U+2\tilde{\alpha}_{4}p_s^{2}p_d^{2}V-\left(kp_d^{1}+p_d^{2}N\right)-\left(kp_s^{1}U+p_s^{2}V\right)\bar{p}_d^{3}\right)}\right)$}
\]
$=\frac{1}{2}\left(r_{1}p_s^{1}U+r_{2}p_s^{2}V\right)\times$
\[
\resizebox{1\hsize}{!}{$\left(1+\frac{\left(1-k\right)\tilde{\alpha}_2p_s^{1}p_d^{2}U+\left(k-1\right)\tilde{\alpha}_{3}p_s^{2}p_d^{1}U-\left(kp_d^{1}+p_d^{2}N\right)+\left(kp_s^{1}U+p_s^{2}V\right)\bar{p}_d^3}{2k\tilde{\alpha}_{1}p_s^{1}p_d^{1}U+\left(k+1\right)\tilde{\alpha}_{2}p_s^{1}p_d^{2}U+\left(k+1\right)\tilde{\alpha}_{3}p_s^{2}p_d^{1}U+2\tilde{\alpha}_{4}p_s^{2}p_d^{2}V-\left(kp_d^{1}+p_d^{2}N\right)-\left(kp_s^{1}U+p_s^{2}V\right)\bar{p}_d^{3}}\right)$}
\]

\subsubsection*{\textbf{\textit{Step 1.b}}} \textit{Service rate of Q$_{u}$}\\

If $\tilde{Q}_{BS}$ is empty then the service rate of Q$_{u}$ is denoted by $\mu_u^0$ otherwise it is denoted by $\mu_u^1$. By analogy to the Q$_s$, the service rate of Q$_{u}$ is computed as it follows: 

\[\tilde{\mu}_{u}\left(\tilde{\alpha},\Gamma\right)=\mathbf{\mathbb{P}}\left[\tilde{Q}_{BS}=0\right]\mu_{u}^{0}\left(\Gamma\right)+\mathbf{\mathbb{P}}\left[\tilde{Q}_{BS}>0\right]\mu_{u}^{1}\left(\tilde{\alpha},\Gamma\right)
\]
\[
=\frac{\mu_{u}^{0}\left(\Gamma\right)\left(b_{BS}\left(\Gamma\right)-a_{BS}^1\left(\tilde{\alpha},\Gamma\right)\right)+\mu_{u}^{1}\left(\tilde{\alpha},\Gamma\right)a_{BS}^{0}\left(\Gamma\right)}{b_{BS}\left(\Gamma\right)-a_{BS}^{1}\left(\tilde{\alpha},\Gamma\right)+a_{BS}^{0}\left(\Gamma\right)}
 \]
\begin{equation}
\Rightarrow \tilde{\mu}_{u}=\frac{\mu_u^{0}\left(b_{BS-}a_{BS}^1\right)+\mu_{u}^{1}a_{BS}^{0}}{b_{BS}-a_{BS}^{1}+a_{BS}^{0}}\label{eq:Gmu2_gen}
\end{equation}
With $\mu_{u}^{0}\left(\Gamma\right)$ and $\mu_{u}^{1}\left(\alpha,\Gamma\right)$  respectively given by (\ref{mu20}) and( \ref{mu21}),  we obtain (\ref{eqn_mu2c_1_gen}) as the service rate \textbf{$\tilde{\mu}_{2}\left(\tilde{\alpha},\Gamma\right)$} of the queue Q$_u$.


\[
\resizebox{1\hsize}{!}{$\tilde{\mu}_{u}\left(\tilde{\alpha},\Gamma\right)=\frac{\left(r_{1}p_u^{1}W+r_{2}p_u^{2}X\right)\left(-2k\tilde{\alpha}_{1}p_s^{1}p_d^{1}U-\left(k+1\right)\tilde{\alpha}_{2}p_s^{1}p_d^{2}U-\left(k+1\right)\tilde{\alpha}_{3}p_s^{2}p_d^{1}U-2\tilde{\alpha}_{4}p_s^{2}p_d^{2}V+kp_d^{1}+p_d^{2}N-\left(kp_s^{1}U+p_s^{2}V\right)p_d^{3}\right)}{-2k\tilde{\alpha}_{1}p_s^{1}p_d^{1}U-\left(k+1\right)\tilde{\alpha}_{2}p_s^{1}p_d^{2}U-\left(k+1\right)\tilde{\alpha}_{3}p_s^{2}p_d^{1}U-2\tilde{\alpha}_{4}p_s^{2}p_d^{2}V+\left(kp_d^{1}+p_d^{2}N\right)+\left(kp_s^{1}U+p_s^{2}V\right)\bar{p}_d^{3}}$}
\]
\[
\resizebox{1\hsize}{!}{$+\frac{\left(r_{1}p_u^{1}Y+r_{2}p_u^{2}Z\right)\left(kp_s^{1}U+p_s^{2}V\right)}{-2k\tilde{\alpha}_{1}p_s^{1}p_d^{1}U-\left(k+1\right)\tilde{\alpha}_{2}p_s^{1}p_d^{2}U-\left(k+1\right)\tilde{\alpha}_{3}p_s^{2}p_d^{1}U-2\tilde{\alpha}_{4}p_s^{2}p_d^{2}V+\left(kp_d^{1}+p_d^{2}N\right)+\left(kp_s^{1}U+p_s^{2}V\right)\bar{p}_d^{3}}$}
\]
$=\left(r_{1}p_u^{1}W+r_{2}p_u^{2}X\right)+$
\[
\resizebox{1\hsize}{!}{$\frac{\left(r_{1}p_u^{1}W+r_{2}p_u^{2}X\right)\left(-kp_s^{1}U+p_s^{2}V\right)+\left(r_{1}p_u^{1}Y+r_{2}p_u^{2}Z\right)\left(kp_s^{1}U+p_s^{2}V\right)}{-2k\tilde{\alpha}_{1}p_s^{1}p_d^{1}U-\left(k+1\right)\tilde{\alpha}_{2}p_s^{1}p_d^{2}U-\left(k+1\right)\tilde{\alpha}_{3}p_s^{2}p_d^{1}U-2\tilde{\alpha}_{4}p_s^{2}p_d^{2}V+\left(kp_d^{1}+p_d^{2}N\right)+\left(kp_s^{1}U+p_s^{2}V\right)\bar{p}_d^{3}}$}
\]
$=\left(r_{1}p_u^{1}W+r_{2}p_u^{2}X\right)$
\[
\resizebox{1\hsize}{!}{$+\frac{\left(r_{1}p_u^{1}\left(W-Y\right)+r_{2}p_u^{2}\left(X-Z\right)\right)\left(kp_s^{1}U+p_s^{2}V\right)}{2k\tilde{\alpha}_{1}p_s^{1}p_d^{1}U+\left(k+1\right)\tilde{\alpha}_{2}p_s^{1}p_d^{2}U+\left(k+1\right)\tilde{\alpha}_{3}p_s^{2}p_d^{1}U+2\tilde{\alpha}_{4}p_s^{2}p_d^{2}V-\left(kp_d^{1}+p_d^{2}N\right)-\left(kp_s^{1}U+p_s^{2}V\right)\bar{p}_d^{3}}$}
\]

$\tilde{\mu}_s\left(\tilde{\alpha},\Gamma\right)=\frac{1}{2}\left(r_{1}p_s^{1}U+r_{2}p_s^{2}V\right)\times$
\begin{equation}
\label{eqn_mu1c_1_gen}
\resizebox{0.9\hsize}{!}{$\left(1+\frac{\left(1-k\right)\tilde{\alpha}_2 p_s^{1}p_d^{2}U+\left(k-1\right)\tilde{\alpha}_{3}p_s^{2}p_d^{1}U-\left(kp_d^{1}+p_d^{2}N\right)+\left(kp_s^{1}U+p_s^{2}V\right)\bar{p}_d^3}{2k\tilde{\alpha}_{1}p_s^{1}p_d^{1}U+\left(k+1\right)\tilde{\alpha}_{2}p_s^{1}p_d^{2}U+\left(k+1\right)\tilde{\alpha}_{3}p_s^{2}p_d^{1}U+2\tilde{\alpha}_{4}p_s^{2}p_d^{2}V-\left(kp_d^{1}+p_d^{2}N\right)-\left(kp_s^{1}U+p_s^{2}V\right)\bar{p}_d^{3}}\right)$}
\end{equation}
$\tilde{\mu}_{u}\left(\tilde{\alpha},\Gamma\right)=\left(r_{1}p_u^{1}W+r_{2}p_u^{2}X\right)+$
\begin{equation}
\label{eqn_mu2c_1_gen}
\resizebox{0.9\hsize}{!}{$\frac{\left(r_{1}p_u^{1}\left(W-Y\right)+r_{2}p_u^{2}\left(X-Z\right)\right)\left(kp_s^{1}U+p_s^{2}V\right)}{2k\tilde{\alpha}_{1}p_s^{1}p_d^{1}U+\left(k+1\right)\tilde{\alpha}_{2}p_s^{1}p_d^{2}U+\left(k+1\right)\tilde{\alpha}_{3}p_s^{2}p_d^{1}U+2\tilde{\alpha}_{4}p_s^{2}p_d^{2}V-\left(kp_d^{1}+p_d^{2}N\right)-\left(kp_s^{1}U+p_s^{2}V\right)\bar{p}_d^{3}}$}
\end{equation}


\subsubsection*{\textbf{\textit{Step 2}}} \textit{The optimal fraction vector $\alpha^*$}\\

 We know that finding the corner points of approximated stability region are sufficient to characterize this region. Hence, we intend to reduce more the complexity by avoiding the consideration of all the  $\tilde{\alpha} \in \left[0,1\right]^4$ and finding the set of optimum $\alpha^*$ that achieve the corner points of the approximated stability region. So, for each priority policy $\Gamma \in \Omega_{\Gamma}^{ss}$, we compute the optimum $\tilde{\alpha}^*=\left( \tilde{\alpha}^*_{1},\tilde{\alpha}^*_{2},\tilde{\alpha}^*_{3},\alpha^*_{4}\right)$ that maximizes $\mu_s$ and $\mu_u$ such a border point of the stability region is achieved. In other terms, finding $\tilde{\alpha}^*$ for each priority policy $\Gamma \in \Omega_{\Gamma}^{ss} $ allows us to find explicitly the corresponding corner point and to avoid the need to vary $\tilde{\alpha} \in \left[0,1\right]^4$ for obtaining this point of the stability region.\\

We start by the service rate $\tilde{\mu}_{u}$. We can see that for all the priority policies, $\tilde{\mu}_{u}\left(\tilde{\alpha},\Gamma\right)$ is inversely proportional to $\tilde{\alpha}_{1},\tilde{\alpha}_{2},\tilde{\alpha}_{3}$ and $\tilde{\alpha}_{4}$ then $\tilde{\alpha}^{*}$=$\left(0,0,0,0\right)$ maximizes $\tilde{\mu}_{u}\left(\tilde{\alpha}, \Gamma\right)$ for all $\Gamma \in \Omega_{\Gamma}^{ss}$ . It remains to maximize $\tilde{\mu}_{s}$ by solving the following optimization problem for each priority policy:

\[
\max \limits_{\tilde{\alpha}}\:\tilde{\mu}_{s}\left(\tilde{\alpha},\Gamma\right)
\]
\begin{equation}
s.t.\:C_{0}=\begin{cases}
C_{01}:\,\left( \ref{eq:QBSstabConstraint_gen} \right) is\,verified\\
C_{02}:\,0\leq\tilde{\alpha}_{1}\leq1,\:0\leq\tilde{\alpha}_{2}\leq1\\
\:0\leq\tilde{\alpha}_{3}\leq1\,and\,0\leq\tilde{\alpha}_{4}\leq1
\end{cases}\label{eq:constraint0_gen}
\end{equation}

The optimization problem above is a linear-fractional programming (LFP) where the objective function is the ratio of two linear functions with a set of linear constraints. This LFP has a bounded and non-empty feasible region and can be transformed into an equivalent linear problem (LP)  which the solution (solved by any LP solution method such that simplex algorithm) yields that of the original LFP problem.

The solution of the optimization problem above belongs to the scenarios where at least one of the two constraints ($C_{01}$,$C_{02}$ ) is reached which corresponds to different cases:  (i) constraint $C_{02}$ is achieved ($\left(\tilde{\alpha}_1,\tilde{\alpha}_2,\tilde{\alpha}_3,\tilde{\alpha}_4\right) \in \lbrace 0,1\rbrace^4$) while the stability condition $C_{01}$ is verified without equality and (ii) the stability condition $C_{01}$ is achieved with $\left(\tilde{\alpha}_1,\tilde{\alpha}_2,\tilde{\alpha}_3,\tilde{\alpha}_4\right) \in \left[ 0,1\right]^4$.

These cases will be studied in order to come up with the set, called $\mathbb{S}_\alpha$, of potential solutions of the optimization problem (\ref{eq:constraint0_gen}) . Several cases will be detailed where from each one we can deduce an element of the set $\mathbb{S}_\alpha$. \\

\textbf{\textit{Case 1:}} For the first case, we start by supposing that at the optimal solution the constraint $C_{01}$ is verified but not achieved while the second constraint $C_{02}$ is reached. The optimum in this case is the combination of $\tilde{\alpha}=\tilde{\alpha}_{1},\tilde{\alpha}_{2},\tilde{\alpha}_{3},\tilde{\alpha}_{4} \in \left\{ 0,1\right\} ^{4}$ that maximizes $\tilde{\mu}_{s}\left(\tilde{\alpha},\Gamma\right)$ while verifying the stability condition $C_{01}$.\\

\textbf{\textit{Case 2:}} For the second case, we suppose that the stability condition $C_{01}$ is achieved at the optimum. Hence, the optimization problem is converted to the problem (\ref{eq:constraint1_gen}) with $F_{1}=M$ and $F_{2}=F_{1}-2kp_s^{1}p_d^{1}U-2p_s^{2}p_d^{2}V$. Recall that $M=\left(k p_d^{1}+p_d^{2}N\right)-\left(k p_s^{1}U+p_s^{2}V\right)p_d^{3}$.\\

 By analogy to (\ref{eq:constraint0_gen}), we solve the problem (\ref{eq:constraint1_gen}) by considering that at least one of the linear constraints is reached which corresponds to the two following cases: (i) the first one assumes that $C_{12}$ constraint is achieved and (ii) the other one considers that $C_{11}$ constraint is achieved. We study both cases then the optimum corresponds to the case that maximizes $\tilde{\mu}_{s}\left(\tilde{\alpha},\Gamma\right)$.\\ 

We start by supposing that at the optimal solution the constraint $C_{11}$ is not achieved then the optimum in this case is the combination of $\tilde{\alpha}_{2},\tilde{\alpha}_{3} \in \left\{ 0,1\right\} ^{2}$ that maximizes $\tilde{\mu}_{s}\left(\tilde{\alpha},\Gamma\right)$ and for which it exists $\tilde{\alpha}_1$ and $\tilde{\alpha}_4 \in \left[ 0,1\right]$ that verify (\ref{eq:constraint1_gen}). \\

\begin{figure*}
\normalsize
\resizebox{0.9\hsize}{!}{$
\max\limits_{\tilde{\alpha}_2,\tilde{\alpha}_3}\:\frac{1}{2}\left(r_{1}p_s^{1}U+r_{2}p_s^{2}V\right)\left(1+\frac{\left( k-1\right)\tilde{\alpha}_2p_s^{1}p_d^{2}U+\left(1-k\right)\tilde{\alpha}_{3}p_s^{2}p_d^{1}U+\left(k p_d^{1}+p_d^{2}N\right)-\left(k p_s^{1}U+p_s^{2}V\right)\bar{p}_d^{3}}{kp_s^{1}U+p_s^{2}V}\right)
$}
\begin{equation}
s.t.\:C_{1}=\begin{cases}
C_{11}: F_{2}\leq \left(k+1\right)\tilde{\alpha}_{2}p_s^{1}p_d^{2}U+\left(k+1\right)\tilde{\alpha}_{3}p_s^{2}p_d^{1}U\leq F_{1}\\
C_{12}: 0\leq \tilde{\alpha}_{2}\leq1\:and\,0\leq\tilde{\alpha}_{3}\leq1
\end{cases}\label{eq:constraint1_gen}
\end{equation}
\hrulefill
\vspace*{4pt}
\end{figure*}

\textbf{\textit{Case 3:}} For this case, we suppose that the constraint $C_{11}$ is active at the optimum by achieving either the lower bound $F_{1}$ (for $\tilde{\alpha}_1=\tilde{\alpha}_4=0$) or the upper bound $F_{2}$ (for $\alpha_1=\alpha_4=1$). For each bound $F_1$ and $F_2$ respectively, $\tilde{\alpha}_2$ and $\tilde{\alpha}_3$ are chosen in a way to verify respectively the following equations: $\left(k+1\right)\tilde{\alpha}_{2}p_s^{1}p_d^{2}U+\left(k+1\right)\tilde{\alpha}_{3}p_s^{2}p_d^{1}U=F_i$ for $i=1,2$. For this case, we suppose that the constraint $C_{11}$ is active at the optimum then the optimization problem can be written as in (\ref{eq:constraint2_gen}).\\

The LFP problem (\ref{eq:constraint2_gen}) depends only from $\tilde{\alpha}_2$ where the objective function is linearly depending on $\tilde{\alpha}_2$ with a positive coefficient. Hence, the objective function is maximized by increasing as much as possible $\tilde{\alpha}_2$ such that the constraints (\ref{eq:constraint2_gen}) are verified. \\

Combining the results of the three cases above, for each priority policy $\Gamma$, we find the set $\mathbb{S}_\alpha$ to which corresponds the optimum $\alpha^*$ that maximizes $\tilde{\mu}_s\left(\tilde{\alpha},\Gamma\right)$ and achieves the corresponding corner point of the stability region. Hence, we hugely reduce the complexity by finding the explicit value of $ \alpha^*$ that gives the corner points that are sufficient to characterize the approximated stability region. \\

\begin{figure*}
\normalsize
\[
\max\limits_{\tilde{\alpha}_2}\:\frac{1}{2}\left(r_{1}p_s^{1}U+r_{2}p_s^{2}V\right)\left(1+\frac{2\left(k-1\right)\tilde{\alpha}_2p_s^{1}p_d^{2}U+\frac{\left(1-k\right)}{\left(1+k\right)}F_i+\left(k p_d^{1}+p_d^{2}N\right)-\left(k p_s^{1}U+p_s^{2}V\right)\bar{p}_d^{3}}{kp_s^{1}U+p_s^{2}V}\right)
\]
\begin{equation}
s.t.\:C_{1}=\begin{cases}
C_{21}: \left(k+1\right)\tilde{\alpha}_{2}p_s^{1}p_d^{2}U+\left(k+1\right)\tilde{\alpha}_3p_s^{2}p_d^{1}U= F_{i} \\
C_{22}: 0\leq \tilde{\alpha}_{2}\leq1 \\
C_{23}: \,\,for\,\, F_1:\,\, \tilde{\alpha}_1=\tilde{\alpha}_4=0 \,\,;\,\, for\,\, F_2:\,\, \tilde{\alpha}_1=\tilde{\alpha}_4=1
\end{cases}\label{eq:constraint2_gen}
\end{equation}
\hrulefill
\vspace*{4pt}
\end{figure*}

\[\mathbb{S}_{\alpha}=
\left(\begin{array}{c}
\tilde{\alpha}_1=0,\tilde{\alpha}_2=1, \tilde{\alpha}_3=\frac{F_1-\left(k+1\right)p_s^{1}p_d^{2}}{\left(k+1\right)p_d^{1}p_s^{2}U},\tilde{\alpha}_4=0 \\
\hspace{5pt} \\
\tilde{\alpha}_1=1, \tilde{\alpha}_2=1, \tilde{\alpha}_3=\frac{F_2-\left(k+1\right)p_s^{1}p_d^{2}U}{\left(k+1\right)p_d^{1}p_s^{2}U},\tilde{\alpha}_4=1 \\
\hspace{5pt} \\
\tilde{\alpha}_1=0, \tilde{\alpha}_2=\frac{F_1}{\left(k+1\right)p_s^{1}p_d^{2}U}, \tilde{\alpha}_3=0,\tilde{\alpha}_4=0 \\
\hspace{5pt} \\
\tilde{\alpha}_1=0, \tilde{\alpha}_2=\frac{F_1-\left(k+1\right)p_d^{1}p_s^{2}U}{\left(k+1\right)p_s^{1}p_d^{2}U}, \tilde{\alpha}_3=1,\tilde{\alpha}_4=0 \\
\hspace{5pt} \\
\tilde{\alpha}_1=1, \tilde{\alpha}_2=\frac{F_2}{\left(k+1\right)p_s^{1}p_d^{2}U}, \tilde{\alpha}_3=0,\alpha_4=1\\
\hspace{5pt} \\
\tilde{\alpha}_1=1, \tilde{\alpha}_2=\frac{F_2-\left(k+1\right)p_d^{1}p_s^{2}U}{\left(k+1\right)p_s^{1}p_d^{2}U}, \tilde{\alpha}_3=1,\tilde{\alpha}_4=1 
\end{array}\right)
\]

Finally, for each priority policy within $\Gamma \in  \Omega_{\Gamma}^{ss}$, the optimum $\tilde{\alpha}^{*}=\left(\tilde{\alpha}_{1}^{*},\tilde{\alpha}_{2}^{*},\tilde{\alpha}_{3}^{*},\tilde{\alpha}_{4}^{*}\right)$, that reaches the corner point of the approximated stability region corresponding to $\Gamma$, to the finite set (\ref{Set_alpha_gen}) denoted by $\mathbb{S}_{\alpha}$.\\

\subsubsection*{\textbf{\textit{Step 3}}} \textit{Characterization of the approximated stability region}

Combining the results of the previous steps, we deduce the approximation of the stability region for the three-UEs scenario. Supposing that the arrival and service processes of $Q_{s}$ and $Q_{u}$ are strictly stationary and ergodic then the stability of $\tilde{Q}_{BS}$ can be determined using \textbf{Loyne's} criterion which states that the queue is stable if and only if the average arrival
rate is strictly less than the average service rate.

We proceed as it follows for characterizing the approximated stability region. We start by considering a priority policy that corresponds to a corner point of the region ($\Gamma \in  \Omega^{ss}_{\Gamma}$ with $ |\Omega^{ss}_{\Gamma}|=6$). Then, for the considered $\Gamma$, we find the corresponding optimum fraction vector $\alpha^*$ within the finite set (\ref{Set_alpha_gen}), . We deduce the service rate of the system of queues. Finally, these steps are applied over all the priority policies $\Gamma \in  \Omega^{ss}_{\Gamma}$. Thus, the approximated stability region $\tilde{\mathcal{R}}_c^{ss}$ for the 3-UEs scenario is characterized by the set of the mean arrival rates $\left(\lambda_s,\lambda_u\right)$ in $\tilde{\mathcal{R}}_{c}^{ss}$ such that:
\[
\tilde{\mathcal{R}}_{c}^{ss}=co \left( \bigcup\limits_{\Gamma \in \Omega^{ss}_{\Gamma}}\lbrace  \tilde{ \mu}_s\left(\tilde{\alpha}^*,\Gamma\right) , \tilde{ \mu}_u\left(\tilde{\alpha}^*,\Gamma\right)  \rbrace \right)
\] 
where $\tilde{\mu}_{s}\left(\tilde{\alpha}^*,\Gamma\right)$ and $\tilde{\mu}_{u}\left(\tilde{\alpha}^*,\Gamma\right)$ are respectively given by (\ref{eqn_mu1c_gen}) and (\ref{eqn_mu2c_gen}).

\vspace{5pt}
\subsubsection*{\textbf{\textit{Step 4}}} \textit{Arrival rates within the stability region is equivalent to the stability of the system of queues} \\

Proving that having a set of mean arrival rates $\left(\lambda_s,\lambda_u\right)$ in $\tilde{\mathcal{R}}_{c}^{ss}$  is equivalent to the stability of the system queue is identical to the demonstration done in step 5 of the appendix \ref{stab_real_twoRate}. For brevity, we avoid repetition and remove this proof.

\vspace{-5pt}
\subsection{Proof of theorem \ref{th_comparison_gen} \label{th_comp_gen}}

Here, we compare between the real and approximated stability region for the three-UEs scenario. For this aim, we proceed similarly to the proof in appendix \ref{stab_real_twoRate}. For a given priority policy $\Gamma \in \Omega_\Gamma^{ss}$, we proceed as it follows : in \textit{\textbf{step 1}}, we compare between the two Markov Chain model by computing the absolute error between the two values $\varPi_0$ and $\tilde{\varPi}_0$ that respectively correspond to the probabilities that Q$_{BS}$ and $\tilde{Q}_{BS}$ are empty and \textit{\textbf{step 2}} provides the absolute error between the service rates $\mu_s$ and $\tilde{\mu}_s$ in order to veridy that the approximation is an upper bound of the stability region. In \textit{\textbf{step 3}} we deduce the distance between these two regions based on the maximum relative error between their six corner points that correspond to the set of the border priority policies $\Omega_\Gamma^{ss}$. Finally, in \textit{\textbf{step 4}} we find the upper and the lower bounds of the stability region as function of its approximation.   \\

\subsubsection*{\textbf{\textit{Step 1}}} \textit{Comparison between the Markov chains of Q$_{BS}$ and $\tilde{Q}_{BS}$}

We start by presenting the used notation for facilitating the readability of the expressions:

\begin{itemize}
\item Let us keep in mind that the comparison between the two Markov chain models (exact and approximation) is done for each priority policy $\Gamma \in \Omega_\Gamma^{ss}$, hence to improve the clarity of the expressions we use the following: $a_{01}=a_{01}\left( \alpha,\Gamma\right)$, $a_{02}=a_{02}\left( \alpha,\Gamma\right)$, $a_{11}=a_{11}\left( \alpha,\Gamma\right)$, $a_{12}=a_{12}\left( \alpha,\Gamma\right)$, $b_{11}=b_{11}\left( \alpha,\Gamma\right)$, $b_{12}=b_{12}\left( \alpha,\Gamma\right)$\\
\item $a_0=a_{01}+a_{02}$, $a_1=a_{11}+a_{12}$, $b_1=b_{11}+b_{12}$\\
\item $\varPi_n=\varPi_n\left( \alpha,\Gamma\right)$ probability that Q$_{BS}$ is at state $n$ of the Markov chain which means that the queue contains n P-packets\\
\item $\tilde{\varPi_n}=\tilde{\varPi}_n\left( \alpha,\Gamma\right)$ probability that $\tilde{Q}_{BS}$ is at state $n$ of the approximated Markov chain which means that the approximated queue $\tilde{Q}_{BS}$ contains n P-packets.\\
\end{itemize} 

The main challenges of the exact stability region computation lies on the following to facts: (i) this region has no closed form expression since the final solution requires solving a linear system and (ii) the complexity of the exact stability region is high. Therefore, the advantage of the proposed approximation is to have a close upper bound that has an simple and explicit solution. 

The Markov Chain modeling is restrictively used for computing the probability that the corresponding queues are empty. Thus, the comparison between the exact and the approximated Markov Chains is done by finding the absolute error between the probability that Q$_{BS}$ is empty ($\varPi_0$) and the probability that $\tilde{Q}_{BS}$ is empty ( ($\tilde{\varPi}_0$) ). For a given priority policy $\Gamma$ and a given fraction vector $\alpha$ this error is defined as it follows: 
\[
\delta_{\varPi}\left(\alpha,\Gamma\right):=\varPi_0\left(\alpha,\Gamma\right)-\tilde{\varPi}_0\left(\alpha,\Gamma\right)
\]

Let us consider the Markov chain of the queue Q$_{BS}$ given in figure (\ref{fig.QBS_real_gen}) and its corresponding balance equations given by (\ref{eq.BE1_gen}), (\ref{eq.BE2_gen}), (\ref{eq.BE2tok_gen}) and (\ref{eq.BalanceEquation_gen}). The generating function corresponding to this Markov chain is given by:
\[
G\left(z\right)=\sum_{n=0}^{\infty}\varPi_nz^n
\] 
We deduce $G\left(z\right)$ as function of $\varPi_0$ and $\varPi_1$ by multiplying the $n^{th}$ balance equation by $z^n$ and by adding all these equations ($for \, n=0,...,\infty$) as it follows:

\[
\varPi_{0}a_{0}+\sum_{n=1}^{\infty}\varPi_{n}\left(a_{1}+b_{12}\right)z^{n}+b_{11}\sum_{n=k}^{\infty}\varPi_{n}z^{n}
\]\[
=a_{11}\sum_{n=k+1}^{\infty}\varPi_{n-k}z^{n}+a_{12}\sum_{n=2}^{\infty}\varPi_{n-1}z^{n}+b_{12}\sum_{n=0}^{\infty}\varPi_{n+1}z^{n}+b_{11}\sum_{n=0}^{\infty}\varPi_{n+k}z^{n}+a_{02}\varPi_{0}z+a_{01}\varPi_{0}z^{k}
\]

\[
\Rightarrow \sum_{n=0}^{\infty}\varPi_{n}\left(a_{1}+b_{1}\right)z^{n}+\varPi_{0}\left(a_{0}-a_{1}-b_{1}\right)-b_{11}\sum_{n=1}^{k-1}\varPi_{n}z^{n}
\]\[
=a_{11}\sum_{n=1}^{\infty}\varPi_{n}z^{n+k}+a_{12}\sum_{n=1}^{\infty}\varPi_{n}z^{n+1}+b_{12}\sum_{n=1}^{\infty}\varPi_{n}z^{n-1}+b_{11}\sum_{n=k}^{\infty}\varPi_{n}z^{n-k}+a_{02}\varPi_{0}z+a_{01}\varPi_{0}z^{k}
\]

\[
\Rightarrow \left(a_{1}+b_{1}\right)G\left(z\right)+\varPi_{0}\left(a_{0}-a_{1}-b_{1}-a_{02}z-a_{01}z^{k}\right)-b_{11}\sum_{n=1}^{k-1}\varPi_{n}z^{n}
\]\[
=a_{11}z^{k}G\left(z\right)-a_{11}\varPi_{0}z^{\text{k}}+a_{12}zG\left(z\right)-a_{12}\varPi_{0}z+b_{12}z^{-1}G\left(z\right)-b_{12}\varPi_{0}z^{-1}+b_{11}z^{-k}G\left(z\right)-b_{11}\sum_{n=0}^{k-1}\varPi_{n}z^{n-k}
\]

\[
\Rightarrow G\left(z\right)\left(a_{1}+b_{1}-a_{11}z^{k}-a_{12}z-b_{12}z^{-1}-b_{11}z^{-k}\right)
\]\[
=\varPi_{0}\left(-a_{0}+a_{1}+b_{1}+a_{02}z+a_{01}z^{k}-a_{11}z^{\text{k}}-a_{12}z-b_{12}z^{-1}-b_{11}z^{-k}\right)+b_{11}\left(1-z^{-k}\right)\sum_{n=1}^{k-1}\varPi_{n}z^{n}
\]
 
\[
\Rightarrow G\left(z\right)\left(a_{11}z^{2k}+a_{12}z^{k+1}-\left(a_{1}+b_{1}\right)z^{k}+b_{12}z^{k-1}+b_{11}\right)
\]\[
=\varPi_{0}\left(\left(a_{11}-a_{01}\right)z^{\text{2k}}+\left(a_{12}-a_{02}\right)z^{k+1}+\left(a_{0}-a_{1}-b_{1}\right)z^{k}+b_{12}z^{k-1}+b_{11}\right)-b_{11}\left(z^{k}-1\right)\sum_{n=1}^{k-1}\varPi_{n}z^{n}
\]

\[
\Rightarrow G\left(z\right)\left(a_{11}\sum_{n=2k-1}^{k+1}z^{n}+a_{1}z^{k}-b_{1}z^{k-1}-b_{11}\sum_{n=k-2}^{0}z^{n}\right)
\]\[
=\varPi_{0}\left(\left(a_{11}-a_{01}\right)\sum_{n=2k-1}^{k+1}z^{n}+\left(a_{1}-a_{0}\right)z^{k}-b_{1}z^{k-1}-b_{11}\sum_{n=k-2}^{0}z^{n}\right)-b_{11}\left(\sum_{i=k-1}^{0}z^{i}\right)\sum_{n=1}^{k-1}\varPi_{n}z^{n}
\]

Knowing that $G\left(1\right)=1$ due to the normalization constraint then we deduce $\varPi_0$ as function of $\tilde{\varPi}_0$:

\[
a_{11}\left(k-1\right)+a_{1}-b_{1}-b_{11}\left(k-1\right)=\varPi_{0}\left(\left(a_{11}-a_{01}-b_{11}\right)\left(k-1\right)+a_{1}-a_{0}-b_{1}\right)-kb_{11}\sum_{n=1}^{k-1}\varPi_{n}z^{n}
\]

\[
\Rightarrow ka_{11}+a_{12}-b_{12}-kb_{11}=\varPi_{0}\left(a_{12}+ka_{11}-a_{02}-ka_{01}-b_{12}-kb_{11} \right)-kb_{11}\sum_{n=1}^{k-1}\varPi_{n}
\]

\[
\Rightarrow \varPi_{0}=\frac{ka_{11}+a_{12}-b_{12}-kb_{11}}{a_{12}+ka_{11}-a_{02}-ka_{01}-b_{12}-kb_{11}}+\frac{kb_{11}\sum\limits_{n=1}^{k-1}\varPi_{n}}{a_{12}+ka_{11}-a_{02}-ka_{01}-b_{12}-kb_{11}}
\]

\[
\Rightarrow \varPi_{0}=\frac{b_{12}+kb_{11}-a_{12}-ka_{11}}{b_{12}+kb_{11}-a_{12}-ka_{11}+a_{02}+ka_{01}}-\frac{kb_{11}\sum\limits_{n=1}^{k-1}\varPi_{n}}{b_{12}+kb_{11}-a_{12}-ka_{11}+a_{02}+ka_{01}}
\]

$\tilde{\varPi}_0$ is given by (\ref{eq.Pi_0_gen}) therefore we can write the following:
\[
\tilde{\varPi}_0=\frac{b_{1}+\left(k-1\right)b_{11}-a_{1}-\left(k-1\right)a_{11}}{b_{1}+\left(k-1\right)b_{11}-a_{1}-\left(k-1\right)a_{11}+a_{0}+\left(k-1\right)a_{01}}
\]

\[
=\frac{b_{12}+kb_{11}-a_{12}-ka_{11}}{b_{12}+kb_{11}-a_{12}-ka_{11}+a_{02}+ka_{01}}
\]
We deduce $\delta_{\varPi}\left( \alpha,\Gamma\right)$ as it follows:

\begin{equation}
\label{eq.Pi_0_comparison_gen}
\delta_{\varPi}\left( \alpha,\Gamma\right)=\varPi_0\left( \alpha,\Gamma\right)-\tilde{\varPi}_0\left( \alpha,\Gamma\right)=-\frac{kb_{11}\sum_{n=1}^{k-1}\varPi_{n}\left( \alpha,\Gamma\right)}{b_{12}+kb_{11}-a_{12}-ka_{11}+a_{02}+ka_{01}}
\end{equation}

\subsubsection*{\textbf{\textit{Step 2}}} \textit{$\tilde{\mathcal{R}}_c^{ss}$ an upper bound of ${\mathcal{R}}_c^{ss}$}\\

We prove that this approximation is a close upper bound for the stability region of the 3-UEs cellular scenario by finding the absolute difference between ${\mu}_{s}\left( {\alpha},\Gamma\right)$  and $\tilde{\mu}_{s}\left( {\alpha},\Gamma\right)$ as defined below:
\[
\delta_\mu\left( {\alpha},\Gamma\right)=\tilde{\mu}_{s}\left( {\alpha},\Gamma\right) -{\mu}_{s}\left( {\alpha},\Gamma\right)
\]

We know that ${\tilde{\mu}}_{s}\left( {\alpha},\Gamma\right)$ is deduced from the service rate of the queue Q$_s$when $\tilde{Q}_{BS}$ is empty ($\tilde{\mu}_{s}^{0}\left(\Gamma\right)$) and when $\tilde{Q}_{BS}$ is not empty ($\tilde{\mu}_{s}^{1} \left( \alpha,\Gamma\right)$) as it follows:

\[
\tilde{\mu}_{s}\left( \tilde{\alpha},\Gamma\right)=\tilde{\varPi}_0\left(  {\alpha},\Gamma\right)\tilde{\mu}_{s}^{0}\left(\Gamma\right)+\left(1-\tilde{\varPi}_0\left( \alpha,\Gamma\right)\right)\tilde{\mu}_{s}^{1} \left( \alpha,\Gamma\right)
\]
\begin{equation}
\label{eq.mu1_comp_gen}
 =\frac{\left(b_{1}+\left(k-1\right)b_{11}-a_{1}-\left(k-1\right)a_{11}\right)\tilde{\mu}_{s}^{0}\left(\Gamma\right)+\left(a_{0}+\left(k-1\right)a_{01}\right)\tilde{\mu}_{s}^{1}\left( \tilde{\alpha},\Gamma\right)}{b_{1}+\left(k-1\right)b_{11}-a_{1}-\left(k-1\right)a_{11}+a_{0}+\left(k-1\right)a_{01}}
\end{equation}

Similarly, ${{\mu}}_{s}\left( {\alpha},\Gamma\right)$  can be written as it follows:
\[
\mu_{s}\left( \alpha,\Gamma\right)=\varPi_0\left(\alpha,\Gamma\right)\mu_{s}^{0}\left(\Gamma\right)+\left(1-\varPi_0\left(\alpha,\Gamma\right)\right)\mu_{s}^{1} \left(\alpha,\Gamma\right)
\]
Given that $\varPi_0$ is related to $\tilde{\varPi}_0$ by the equation (\ref{eq.Pi_0_comparison_gen}) then:

\[
\mu_{s}\left( \alpha,\Gamma\right)=\left( \tilde{\varPi}_0\left(\alpha,\Gamma\right)+\delta_{\varPi}\left(\alpha,\Gamma\right)\right)\mu_{s}^{0}\left(\Gamma\right)+\left(1-\tilde{\varPi}_0\left(\alpha,\Gamma\right)-\delta_{\varPi}\left(\alpha,\Gamma\right) \right)\mu_{s}^{1}\left(\alpha,\Gamma\right)
\]

\[
=\tilde{\varPi}_0\left(\alpha,\Gamma\right)\mu_{s}^{0}\left(\Gamma\right)+\left(1-\tilde{\varPi}_0\left(\alpha,\Gamma\right)\right)\mu_{s}^{1}\left(\alpha,\Gamma\right)+\delta_{\varPi}\left(\alpha,\Gamma\right)\left(\mu_{s}^{0}\left(\Gamma\right) -\mu_{s}^{1}\left(\alpha, \Gamma\right)\right)
\]

\[
=\tilde{\mu}_s\left( \alpha,\Gamma\right)-\frac{kb_{11}\sum\limits_{n=1}^{k-1}\varPi_{n}\left(\alpha,\Gamma\right)}{b_{12}+kb_{11}-a_{12}-ka_{11}+a_{02}+ka_{01}}\left(\mu_{s}^{0}\left(\Gamma\right) -\mu_{s}^{1}\left(\alpha,\Gamma\right)\right)
\]

\[
\Rightarrow \delta_\mu\left( {\alpha},\Gamma\right)=\frac{kb_{11}\sum\limits_{n=1}^{k-1}\varPi_{n}\left(\alpha,\Gamma\right)}{b_{12}+kb_{11}-a_{12}-ka_{11}+a_{02}+ka_{01}}\left(\mu_{s}^{0}\left(\Gamma\right) -\mu_{s}^{1}\left(\alpha,\Gamma\right)\right)\geq 0\,\,\,\, \forall \alpha, \forall\Gamma
\]
\[
\Rightarrow {\mu}_{s}\left( {\alpha},\Gamma\right) \leq \tilde{\mu}_{s}\left( {\alpha},\Gamma\right)  \,\,\,\, \forall \alpha \in \left[0,1\right], \forall \,\, \Gamma \in \Omega_\Gamma^{ss}
\]
The highest transmission rate of Q$_s$, under a given priority policy $\Gamma$, corresponds to the case where Q$_{BS}$, therefore we can write $\mu_s^1\left(\alpha,\Gamma\right) \leq \mu_s^0\left(\Gamma\right)\,\, \forall\Gamma, \forall \alpha$. Hence, we prove that  $\delta_\mu\left( {\alpha},\Gamma\right)\geq 0$ for all $\alpha \in \left[ 0,1\right]^4$ and for all $\Gamma \in \Omega_{\Gamma}$. Thus, for all the fraction vectors $\alpha$ and all the priority policies $\Gamma$, the approximated service rate is always higher than the exact service rate. We deduce that $\tilde{\mathcal{R}}_c^{ss}$ is an upper bound of ${\mathcal{R}}_c^{ss}$.\\

\subsubsection*{\textbf{\textit{Step 3}}} \textit{Relative distance between ${\mathcal{R}}_c^{ss}$ and $\tilde{\mathcal{R}}_c^{ss}$}\\

We define the relative distance between $\tilde{\mathcal{R}}_c^{ss}$ and ${\mathcal{R}}_c^{ss}$ as the maximum relative error between the service rate $\mu_s$ and $\tilde{\mu}_s$ over all the priority policies $\Gamma \in \Omega_\Gamma^{ss}$ (with $|\Gamma \in \Omega_\Gamma^{ss}|=6$).
We denote by $\epsilon_{\mu}\left(\alpha,\Gamma\right)$ the following expression:\\
\[
\epsilon_{\mu}\left(\alpha,\Gamma\right):=\frac{\tilde{\mu_s}\left(\alpha, \Gamma\right)-\mu_s\left(\alpha,\Gamma\right)}{\tilde{\mu_s}\left(\alpha,\Gamma\right)}=\frac{\frac{kb_{11}\sum\limits_{n=1}^{k-1}\varPi_{n}\left(\alpha,\Gamma\right)}{b_{12}+kb_{11}-a_{12}-ka_{11}+a_{02}+ka_{01}}\left(\mu_{s}^{0}\left(\Gamma\right)-\mu_{s}^{1}\left(\alpha,\Gamma\right)\right)\varPi_1\left(\alpha,\Gamma\right)}{\frac{\left(b_{12}+kb_{11}-a_{12}-ka_{11}\right)\mu_{s}^{0}\left(\Gamma\right)+\left(a_{02}+ka_{01}\right)\mu_{s}^{1} \left(\alpha,\Gamma\right)}{b_{12}+kb_{11}-a_{12}-ka_{11}+a_{02}+ka_{01}}}
\]
Using the notation introduced here for simplifying equation, we can write:
\[
\mu_s^0=r_1a_{01}+r_2a_{02}=r_2\left(ka_{01}+a_{02}\right) \text{ and }
\mu_s^1=r_1a_{11}+r_2a_{12}=r_2\left(ka_{11}+a_{12}\right)
\]
Thus $\epsilon_{\mu}\left(\alpha,\Gamma\right)$ can be written as it follows:
{\small{
\[
\epsilon_{\mu}\left(\alpha,\Gamma\right)=\frac{kb_{11}\sum \limits_{n=1}^{k-1}\varPi_{n}\left(\alpha,\Gamma\right)\left(a_{02}+ka_{01}-a_{12}-ka_{11} \right)}{\left(b_{12}+kb_{11}-a_{12}-ka_{11}\right)+\left(a_{02}+ka_{01}\right)+\left(a_{02}+ka_{01}\right)\left(b_{12}+kb_{11}-a_{12}-ka_{11}+a_{02}+ka_{01}\right)}
\]}}
\begin{equation}
\label{eq.delta_rel_alpha_gen}
\Rightarrow \epsilon_{\mu}\left(\alpha,\Gamma\right)=\frac{kb_{11}\sum\limits_{n=1}^{k-1}\varPi_{n}\left(\alpha\right)\left(a_{02}+ka_{01}-a_{12}-ka_{11} \right)}{\left(b_{12}+kb_{11}\right)\left(a_{02}+ka_{01}\right)}
\end{equation}

We recall that all the parameters in the equations above depend on the fraction vector $\alpha$ and on the priority policy $\Gamma$.
\\

The transmission rate of Q$_s$, under a given priority policy $\Gamma$, achieved its highest value when Q$_{BS}$, therefore we can write $\mu_s^1\left(\alpha,\Gamma\right) \leq \mu_s^0\left(\Gamma\right)\,\, \forall\Gamma, \forall \alpha$. Hence, $r_2\left(ka_{11}\left(\alpha\right)+a_{12}\left(\alpha\right)\right) \leq r_2\left(ka_{01}+a_{02}\right)$ thus $\left(ka_{11}\left(\alpha\right)+a_{12}\left(\alpha\right)\right) \leq \left(ka_{01}+a_{02}\right)$. Hence, we prove that $\epsilon_{\mu}\left(\alpha,\Gamma\right)\geq 0$ for all $\alpha \in \left[ 0,1\right]^4$ and for all $\Gamma \in \Omega_{\Gamma}^{ss}$ which validate the fact that the approximated stability region is an {upper bound} of the exact stability region of the cellular scenario. \\

Consider that a corner point of the stability region is achieved at $\alpha^*$ and that the limit of the approximated stability region is achieved at $\tilde{\alpha}^*$ then:
\[
\mu_s\left(\alpha,\Gamma\right)\leq \mu_s\left(\alpha^*,\Gamma\right) \text{ and } \tilde{\mu_s}\left(\alpha,\Gamma\right)\leq \tilde{\mu}_s\left(\tilde{\alpha}^*,\Gamma\right)\text{   } \forall \alpha \in \left[0,1\right]^4\]
We deduce that the service rate relative error $\epsilon_{\mu}\left(\alpha,\Gamma\right)$ for a given priority policy $\Gamma$ is maximized as it follows:
\[
\epsilon_{\mu}\left(\alpha,\Gamma\right)\leq \frac{|\mu_s\left(\alpha^*,\Gamma \right)-\tilde{\mu}_s\left(\tilde{\alpha}^*,\Gamma \right)|}{\tilde{\mu}_s\left(\tilde{\alpha}^*,\Gamma \right)}
= \frac{|\tilde{\mu}_s\left(\tilde{\alpha}^*,\Gamma \right)-\mu_s\left(\alpha^* ,\Gamma\right)|}{\tilde{\mu}_s\left(\tilde{\alpha}^* ,\Gamma\right)}
\]
\[=\left| \frac{kb_{11}\sum\limits_{n=1}^{k-1}\varPi_{n}\left(\Gamma\right)\left(a_{12}+ka_{11}-a_{02}-ka_{01} \right)}{\left(b_{12}+kb_{11}\right)\left(a_{02}+ka_{01}\right)}\right|_{\tilde{\alpha}^*}:=\epsilon^*\left(\Gamma\right)
\] 
We define the relative distance $\epsilon_R^*$ between ${\mathcal{R}}_c^{ss}$ and $\tilde{\mathcal{R}}_c^{ss}$ as the maximum of the relative service rate error $\epsilon^*\left(\Gamma\right)$ over all the priority policies $\Gamma \in \Omega_{\Gamma}^{ss}$. Considering all the corner points of the stability region which corresponds to all $\Gamma \in \Omega_{\Gamma}^{ss}$, we define the   $\epsilon_R^*$ as it follows :
\[
\epsilon_R^*:= \max \limits_{\Gamma \in  \Omega^{ss}}\epsilon^*\left(\Gamma\right)= \max \limits_{\Gamma \in  \Omega^{ss}}\left| \frac{kb_{11}\sum\limits_{n=1}^{k-1}\varPi_{n}\left(\Gamma\right)\left(a_{12}+ka_{11}-a_{02}-ka_{01} \right)}{\left(b_{12}+kb_{11}\right)\left(a_{02}+ka_{01}\right)}\right|_{\tilde{\alpha}^*}
\]
Thus, we deduce that the exact service rate of ${\mu_s}$ is limited by its approximation $\tilde{\mu_s}$ as it follows:
\[
\epsilon_{\mu}\left(\alpha,\Gamma\right)=\frac{\tilde{\mu_s}\left(\alpha, \Gamma\right)-\mu_s\left(\alpha,\Gamma\right)}{\tilde{\mu_s}\left(\alpha,\Gamma\right)}\leq \epsilon_R^* \,\,\,\,\,\,\,\, \forall \alpha \in \left[0,1\right] \text{ and } \forall \,\,\Gamma \in \Omega_\Gamma^{ss}
\]
\[
\Rightarrow {\mu_s}\left(\alpha,\Gamma\right) \geq  \left( 1- \epsilon_R^*\right) \tilde{\mu_s}\left(\alpha,\Gamma\right) \,\,\,\,\,\,\,\, \forall \alpha \in \left[0,1\right] \text{ and } \forall \,\,\Gamma \in \Omega_\Gamma^{ss}
\]
\subsubsection*{\textbf{\textit{Step 4}}} \textit{Upper and lower bound of ${\mathcal{R}}_c^{ss}$}\\

We proved in step 2 that the approximated service rate is an upper bound of the exact service rate hence we can write: ${\mu_s}\left(\alpha,\Gamma\right) \leq \tilde{\mu_s}\left(\alpha,\Gamma\right); \,\,\,\,\,\,\,\, \forall \alpha \in \left[0,1\right] \text{ and } \forall \,\,\Gamma \in \Omega_\Gamma^{ss}$. We proved in step 3 that the relative error between the exact and approximated service rate of Q$_s$ has a maximum value equal to $\epsilon_R^*$ hence we can write: ${\mu_s}\left(\alpha,\Gamma\right) \geq  \left( 1- \epsilon_R^*\right) \tilde{\mu_s}\left(\alpha,\Gamma\right); \,\,\,\,\,\,\,\, \forall \alpha \in \left[0,1\right] \text{ and } \forall \,\,\Gamma \in \Omega_\Gamma^{ss}$.\\

Let us verify that the exact stability region is bounded as it follows:
\[
 \left(1-\epsilon_R^*\right)\tilde{\mathcal{R}}_c^{ss} \subseteq  {\mathcal{R}}_c^{ss} \subseteq \tilde{\mathcal{R}}_c^{ss}
\]
\newpage
These bounds are verified by proving the following two statements:
\begin{itemize}
\item Having a set of mean arrival rate $\left(\lambda_s,\lambda_u\right) \in {\mathcal{R}}_c^{ss}$ (which means the stability of the exact model of the queues) gives that $\left(\lambda_s,\lambda_u\right) \in {\tilde{\mathcal{R}}}_c^{ss}$ (which means the stability of the approximated model of the queues). \\
\item Having a set of mean arrival rate $\left(\lambda_s,\lambda_u\right) \in \left(1-\epsilon_R^* \right)\tilde{\mathcal{R}}_c^{ss}$ gives that $\left(\lambda_s,\lambda_u\right) \in {{\mathcal{R}}}_c^{ss}$. \\
\end{itemize}

To do so, we use the same notation as in step 5 of appendix \ref{stab_real_twoRate} and we recall that $\mu_i\left(\Gamma\right)=\mu_{i}^0\left(\Gamma\right)\varPi_0+\mu_{i}^1\left(\Gamma\right)\left(1-\varPi_0\right)$. \\

\subsubsection*{\textbf{\textit{Step 4.a}}} \textit{$\bm{\lambda} \in  \mathcal{R}_c^{ss} \Rightarrow \bm{\lambda} \in  \tilde{\mathcal{R}}_c^{ss}$}\\

$\bm{\lambda} \in  \mathcal{R}_c^{ss} \Rightarrow$ for each channel $i=s,d,u$ it exists a $\mu_i^*=\sum\limits_{\Gamma} \varPi_{\Gamma} \mu_i\left(\Gamma\right)$ as combination of $\mu_i\left(\Gamma\right)$ for different scheduling policies $\Gamma$ such that $\lambda_i \leq \mu_i^*$. Given that in \textbf{\textit{step 2}} we verified that $\mu \left( \alpha,\Gamma \right) \leq \tilde{\mu} \left( \alpha,\Gamma \right)$ for all $\alpha \in \left[0,1\right]$ and all $\Gamma \in \Omega_{\Gamma}^{ss}$, then it exists $\tilde{\mu}_i^* \in \tilde{\mathcal{R}}_c^{ss}$ such that ${\mu}_i^* \leq \tilde{\mu}_i^*$. Thus, we deduce that $\lambda_i \leq \mu_i^* \leq \tilde{\mu}_i^*$ which means that $\bm{\lambda} \in \tilde{ \mathcal{R}}_c^{ss}$ and that the queues are stable in the approximated model.\\

\subsubsection*{\textbf{\textit{Step 4.b}}} \textit{$\bm{\lambda} \in  \left(1-\epsilon_R^*\right) \tilde{\mathcal{R}}_c^{ss} \Rightarrow \bm{\lambda} \in  {\mathcal{R}}_c^{ss}$}\\

$\bm{\lambda} \in \left(1-\epsilon_R^* \right) \tilde{\mathcal{R}}_c^{ss}$ then for each channel $i=s,d,u$ it exists a $\tilde{\mu}_i^*=\sum\limits_{\Gamma} \varPi_{\Gamma} \tilde{\mu}_i\left(\Gamma\right)$ as combination of $\tilde{\mu}_i\left(\Gamma\right)$ for different scheduling policies $\Gamma$ such that $\lambda_i \leq \left(1-\epsilon_R^* \right) \mu_i^*$. Given that in \textbf{\textit{step 3}} we verified that $\mu \left( \alpha,\Gamma \right) \geq \left(1-\epsilon_R^* \right) \tilde{\mu} \left( \alpha,\Gamma \right)$ for all $\alpha \in \left[0,1\right]$ and all $\Gamma \in \Omega_{\Gamma}^{ss}$, then it exists ${\mu}_i^* \in {\mathcal{R}}_c^{ss}$ such that ${\mu}_i^* \geq \left(1-\epsilon_R^* \right) \tilde{\mu}_i^*$. Thus, we deduce that $\lambda_i \leq \left(1-\epsilon_R^* \right) \tilde{\mu}_i^* \leq {\mu}_i^*$ which means that the queues are stable in the exact model and that $\bm{\lambda} \in { \mathcal{R}}_c^{ss}$.\\

 From \textbf{\textit{Step 4.a}} and \textbf{\textit{Step 4.b}}, we deduce the following bounds of the exact stability region:
\[
\left(1-\epsilon_R^*\right)\tilde{\mathcal{R}}_c^{ss} \subseteq  {\mathcal{R}}_c^{ss} \subseteq \tilde{\mathcal{R}}_c^{ss}
\]

\subsection{Proof of theorem \ref{th1_oneRate_MU} \label{proof_oneRate_MU}}

Here, we find the exact stability region for the multi-UE scenario. For this aim, we pursue the same procedure as the three-UEs scenario in appendix \ref{stab_real_twoRate}. For a given $i^{th}$ \textit{UE2UE communication} and $j^{th}$ \textit{UE2BS communication} we proceed as it follows: \textit{\textbf{step 1}} models the queue Q$_{i,BS}$ as a Markov chain and computes the stability condition of the queue Q$_{i,BS}$ as well as the probability that this queue is empty,  \textit{\textbf{step 2}} obtains the service rate of both queues Q$_{i,s}$ and  Q$_{j,u}$ (these results are applied for all $1\leq i\leq K, 1\leq j\leq U$), \textit{\textbf{step 3}} combines the results of the previous steps to provide the exact stability region of the multi-UE scenario, \textit{\textbf{step 4}} verifies that having a set of arrival rates within this stability region is equivalent to the stability of the system of queues. The main challenge is to find the set of mean arrival rates $\bm{\lambda}$ that verify the stability of the users queues while guaranteeing the stability conditions of the queues $Q_{i,BS}$ at the BS level (for $1\leq i\leq K$). The computation of the exact stability region in this section proves the complexity of such procedure and motivates us to search for less complex solutions.

We recall that we have two types of communications: \textit{UE2UE communication} and \textit{UE2BS communications}. The \textit{UE2UE communications} pass thorough the BS such that each \textit{UE2UE communication} $i$ is modeled as the cascade of the queues Q$_{i,s}$ at the UE$_{i,s}$ level and Q$_{i,BS}$ at the BS level (see Fig.\ref{fig.Queues_MU}). We consider two rates $\left\lbrace r_1,r_2 \right\rbrace$ with $r_2=0$. Assuming that the sources queues are saturated, we want to characterize the stability region of the multi-UE scenario. $\Gamma$ is the priority policy according to which the users are prioritized for transmission with $\Omega_{\Gamma}$ the set of all the possible policies (it means $\Gamma \in \Omega_{\Gamma}$). \\

\subsubsection*{\textbf{\textit{Step 1}}} \textit{Markov chain model of Q$_{i,BS}$}\\

Cellular communications are modeled as coupled processor sharing queues where the service rates of Q$_{i,s}$ (equivalent to the arrival rate of Q$_{i,BS}$) as well as the service rate of Q$_{j,u}$ depend on the state (empty or not) of all the queues Q$_{k,BS}$ for $1\leq k \leq K$. Each queue Q$_{i,BS}$ can be modeled by a simple birth and death Markov chain (see Fig.\ref{fig.QBS_MU}) which describes the evolution of the queue Q$_{i,BS}$. The transition probabilities of the Markov chain corresponding to Q$_{i,BS}$ depend on: (i) the fraction vector $\alpha=\left(\alpha_1,...,\alpha_K\right)$ of all the queues Q$_{i,BS}$, (ii) the priority policy $\Gamma$ under which the queues are prioritized, and (iii) the state empty of not of the queues Q$_{i,BS}$ (for $1 \leq i\leq K$). These probabilities are denoted as it follows:
\begin{itemize}
\item Service probability: $b_{i,BS}\left(\alpha,\Gamma\right)$.\\
\item Arrival probability when Q$_{i,BS}=0$: $a_{i,BS}^0\left(\alpha,\Gamma\right)$.\\
\item Arrival probability when Q$_{i,BS}>0$: $a_{i,BS}^1\left(\alpha,\Gamma\right)$.\\
\end{itemize}

\begin{figure}[H]
\vspace{-20pt}
\begin{centering}
\includegraphics[width=0.7\textwidth]{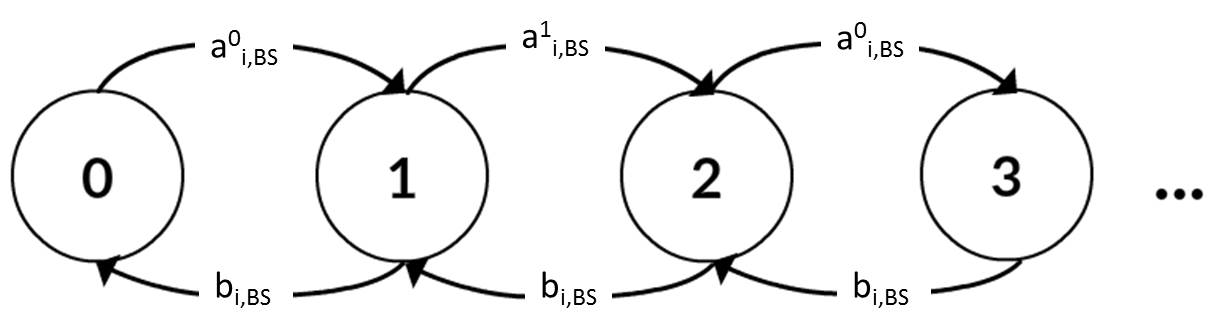}
\captionsetup{justification=centering}
\caption{Markov chain model of  each Q$_{i,BS}$ in the multi-UE scenario}
\label{fig.QBS_MU} 
\end{centering}
\end{figure}

\subsubsection*{\textbf{\textit{Step 1.a}}} \textit{Arrival and service probabilities of Q$_{i,BS}$}\\

We should differ between $a_{i,BS}^0$: arrival probability when Q$_{i,BS}$ is empty and $a_{i,BS}^1$: arrival probability when Q$_{i,BS}$ is not empty and these probabilities of arrival are respectively given by: (for all $1\leq i \leq K$)

\begin{equation}
\label{Pmu10_oneRate_MU}
a_{i,BS}^0\left(\alpha,\Gamma\right)=p_{i,1}X_i\left(\alpha,\Gamma\right)
\end{equation}

\begin{equation}
\label{Pmu11_oneRate_MU}
a_{i,BS}^1\left(\alpha,\Gamma\right)=p_s^{i}\left(\alpha_ip_d^{3}+\bar{p}_d^{i}\right)X_i\left(\alpha,\Gamma\right)
\end{equation}

with $X_i\left(\alpha,\Gamma\right)$ the probability that the resources are allocated to the $i^{th}$ \textit{UE2UE communication} for a given priority policy $\Gamma$ and a given fraction vector $\alpha$:

\[
X_i\left(\alpha,\Gamma\right)=\prod_{m \in \mathbf{U}_{\Gamma}\lbrace i \rbrace}\bar{p}_u^{m}\prod_{n \in \mathbf{K}_{\Gamma}\lbrace i \rbrace}\bar{p}_s^{n}\left[ \varPi_0^n+\bar{p}_d^{n}\bar{\varPi}^n_0 \right]
\]
with ${\varPi}^n_0$ the probability that the queue Q$_{n,BS}$ is empty and $\bar{\varPi}^n_0$ the probability that this queue is not empty. It will be computed \textbf{Step 1.c}.\\

The service probability $b_{i,BS}$ of $Q_{i,BS}$ can be written as it follows: (for $1\leq i \leq K$)
\begin{equation}
b_{i,BS}=p_d^{i}\left(1-\alpha_i p_s^{i}\right)X_i\left(\alpha,\Gamma\right)
\label{Pmu_BS_OneRate_MU}
\end{equation}

\subsubsection*{\textbf{\textit{Step 1.b}}} \textit{Stability condition of $Q_{i,BS}$}\\

Supposing that the arrival and
service processes of Q$_{i,BS}$ are strictly stationary and ergodic
then the stability of Q$_{i,BS}$ can be determined using \textbf{Loyne's} criterion
which states that the queue is stable if and only if the average arrival
rate is strictly less than the average service rate. Then the stability
condition that the fraction parameter $\alpha_i$
should verify in order to satisfy the stability of the queue $Q_{i,BS}$
is given by:
\[
a_{i,BS}<b_{i,BS}
\]
Since the arrival probability $a_{i,BS}$ at the queue Q$_{i,BS}$ is given by (\ref{eq.a_bs_gen}) then this condition can be written: 
\[
{b_{i,BS}a_{i,BS}^0}{b_{i,BS}-a_{i,BS}^1+a_{i,BS}^0}<b_{i,BS} \Rightarrow a_{i,BS}^1<b_{i,BS}
\]
Since $a_{i,BS}^1$ and $b_{i,BS}$ are given by (\ref{Pmu11_oneRate_MU}) and (\ref{Pmu_BS_OneRate_MU}) then the queue $Q_{i,BS}$ is stable if the fraction parameter $\alpha_i$ verifies:
\[
p_s^{i}X_i\left(\alpha_i p_d^{i}+1-p_d^{i}\right) \leq p_d^{i}\left(1-\alpha_i p_s^{i}\right)X_i
\]
\[
 2\alpha_i p_s^{i}p_d^{i} \leq p_d^{i} - p_s^{i}+p_s^{i}p_d^{i}
\]
\begin{equation}
\label{eq.alpha_opt_MUE}
 \alpha_i \leq \frac{p_d^{i} - p_s^{i}+p_s^{i}p_d^{i}}{2p_s^{i}p_d^{i}}:=\alpha_i^{*}
\end{equation}

We define the maximum fraction vector $\alpha^*=\left(\alpha_1^*,\alpha_2^*, ..., \alpha_K^*\right)$ as the vector containing the maximum fraction parameter $\alpha_i^*$ of all the \textit{UE2UE communications} for which the system is stable. Thus, the stability region is defined by varying the fraction vector $\alpha$ within $\left[ 0,\alpha^*\right]$: $\alpha \prec \alpha^*$ ($\prec$ is a component wise inequality).\\

\subsubsection*{\textbf{\textit{Step 1.c}}} \textit{Probability that Q$_{i,BS}$ is empty}\\

The probability that $Q_{i,BS}$ is empty is deduced from the stationary distribution ${\varPi}^{i}$ of the Markov Chain of $Q_{i,BS}$ (Fig. \ref{fig.QBS_MU}) which is deduced from the balance equations. As demonstrated in (\ref{eq.Pi_0_gen}), iff $a_{i,BS}<b_{i,BS}$ then the probability that $Q_{i,BS}$ is empty  is given by: (for all $1\leq i \leq K$)
\begin{equation}
{\varPi}^i_{0}\left(\alpha,\Gamma\right)=\mathbf{\mathbb{P}}\left[Q_{i,BS}=0\right]
=\frac{b_{i,BS}-a_{i,BS}^1}{b_{i,BS}-a_{i,BS}^1+a_{i,BS}^0}
\label{eq.Pi_0_oneRate_MU}
\end{equation}
\[=\frac{b_{i,BS}\left(\alpha,\Gamma\right)-a_{i,BS}^1\left(\alpha,\Gamma\right)}{b_{i,BS}\left(\alpha,\Gamma\right)-a_{i,BS}^1\left(\alpha,\Gamma\right)+a_{i,BS}^0\left(\alpha,\Gamma\right)}
=\frac{p_d^{i}\left(1-\alpha_i p_s^{i}\right)-p_s^{i}\left(\alpha_ip_d^{i}+\bar{p}_d^{i}\right)}{p_d^{i}\left(1-\alpha_i p_s^{i}\right)-p_s^{i}\left(\alpha_ip_d^{i}+\bar{p}_d^{i}\right)+p_s^{i}}
\]

$\Rightarrow$
\begin{equation}
\label{pi_MU}
{\varPi}_0^i=\frac{-2\alpha_ip_s^{i}p_d^{i}+p_d^{i}-p_s^{i}\bar{p}_d^{i}}{-2\alpha_ip_s^{i}p_d^{i}+p_d^{i}+p_s^{i}p_d^{i}}
\end{equation}
 
Which gives
\begin{equation}
\label{XN_MU}
X_i\left(\alpha,\Gamma\right)=\prod_{m \in \mathbf{U}_{\Gamma}\lbrace i \rbrace}\bar{p}_u^{m}\prod_{n \in \mathbf{K}_{\Gamma}\lbrace i \rbrace}\bar{p}_s^{n}\left[1+ \frac{{p}_s^{n}p_d^{n}}{2\alpha^*_n p_s^{n}p_d^{n}-\left(1+p_s^{n}\right)p_d^{n}} \right]
\end{equation}

\subsubsection*{\textbf{\textit{Step 2}}} \textit{Service rates of Q$_{i,s}$ and Q$_{j,u}$}\\

We compute the exact stability region of the multi-UE cellular scenario. For that, we compute the service rate of each \textit{UE2UE} and \textit{UE2BS communications} for a given priority policy $\Gamma$. For brevity, we present the results for the $i^{th}$ \textit{UE2UE communication} and $j^{th}$ \textit{UE2BS communication}. These results are applied for all $1 \leq i \leq K; 1 \leq j \leq U$. We follow the procedure below, based on queuing theory analysis of the network capacity, in order to derive the performance of the system of queues.\\
\subsubsection*{\textbf{\textit{Step 2.a}}} \textit{Service rate of Q$_{i,s}$}\\

If Q$_{i,BS}$ is empty then the service rate of Q$_{i,s}$ is denoted by $\mu_{i,s}^0$ and by $\mu_{i,s}^1$ otherwise then the service rate of Q$_{s,1}$ is computed as it follows:
\[
\mu_{i,s}\left(\alpha,\Gamma\right)=\mathbf{\mathbb{P}}\left[Q_{i,BS}=0\right]\mu_{i,s}^{0}\left(\alpha,\Gamma\right)+\mathbf{\mathbb{P}}\left[Q_{i,BS}>0\right]\mu_{i,s}^{1}\left(\alpha,\Gamma\right)
\] 
\begin{equation}
=\varPi_0^i\mu_{i,s}^{0}\left(\alpha,\Gamma\right)+\left[1-\varPi_0^i\right]\mu_{i,s}^{1}\left(\alpha,\Gamma\right)
\label{eq:Gmu1_real_oneRate_MU}
\end{equation}
with $\mu_{i,s}^{0}$ and $\mu_{i,s}^{1}$ given by:
\[
\mu_{i,s}^{0}\left(\alpha, \Gamma \right)=\mathbb{E}\left[\text{\ensuremath{\mu}}_{i,s}\left(S\left( t \right),\varGamma\,,\,Q(t)|Q_{i,BS}(t)=0\right)\right]
\]

\[
\mu_{i,s}^{1}\left( \alpha,\Gamma \right)=\mathbb{E}\left[\text{\ensuremath{\mu}}_{i,s}\left(S\left( t \right),\varGamma\,,\,Q(t)|Q_{i,BS}(t)>0\right)\right]
\]
Based on (\ref{XN_MU}), $\mu_{i,s}^{0}\left(\alpha,\Gamma\right)$ and $\mu_{i,s}^{1}\left(\alpha,\Gamma\right)$ can be written as it follows:

\begin{equation}
\label{mu10_oneRate_MU}
\mu_{i,s}^{0}\left(\alpha,\Gamma\right)=r_{1}p_s^{i}\prod_{m \in \mathbf{U}_{\Gamma}\lbrace i \rbrace}\bar{p}_u^{m}\prod_{n \in \mathbf{K}_{\Gamma}\lbrace i \rbrace}\bar{p}_s^{n}\left[1+ \frac{{p}_s^{n}p_d^{n}}{2\alpha_n p_s^{n}p_d^{n}-\left(1+p_s^{n}\right)p_d^{n}} \right]
\end{equation}
\begin{equation}
\label{mu11_oneRate_MU}
\mu_{i,s}^{1}\left(\alpha,\Gamma\right)=r_{1}p_s^{i}\left(\alpha_i p_d^{1}+1-p_d^{1}\right)\prod_{n \in \mathbf{K}_{\Gamma}\lbrace i \rbrace}\bar{p}_s^{n}\left[1+ \frac{{p}_s^{n}p_d^{n}}{2\alpha_n p_s^{n}p_d^{n}-\left(1+p_s^{n}\right)p_d^{n}} \right]
\end{equation}

Substituting all the $\varPi^i_{0}$ by their values from equation (\ref{pi_MU}) and based on the above expressions of $\mu_{i,s}^{0}$ and $\mu_{i,s}^{1}$, equation (\ref{eq:Gmu1_real_oneRate_MU}) gives that the formula (\ref{eqn_mu1c_oneRate_MU}) that describes the service rate $\mu_{i,s}\left(\alpha,\Gamma\right)$ of the $i^{th}$ \textit{UE2UE communications} in the multi-UE cellular scenario for a given fraction vector $\alpha$ and a given priority policy $\Gamma$. Not that $1\leq i \leq K$.\\

\subsubsection*{\textbf{\textit{Step 2.b}}} \textit{Service rate of Q$_{j,u}$}\\ 

For a given policy $\Gamma$, the service rate of each Q$_{j,u}$ with $1\leq j\leq U$ depends also on the state (empty or not) of all the queues Q$_{m,BS}$ for $m \in \mathbf{K}_\Gamma\lbrace j+K \rbrace$ and is computed as it follows:
 
\[\mu_{j,u}=\mathbb{E}\left[\text{\ensuremath{\mu}}_{j,u}\left(S(t),\varGamma\,,\,Q(t)|Q_{BS}(t)\right)\right]
= r_{1}p_{j,u}X_{j+K}\left(\alpha,\Gamma\right) 
\]
\[= r_{1}p_{j,u}\prod_{m \in \mathbf{U}_{\Gamma}\lbrace j+K\rbrace}\bar{p}_u^{m}\prod_{n \in \mathbf{K}_{\Gamma}\lbrace j+K \rbrace}\bar{p}_s^{n}\left[ \varPi_0^n+\bar{p}_d^{n}\bar{\varPi}_0^n\right]\]

Substituting all $\varPi_0^i$ by their values from equation (\ref{pi_MU}) we can verify that the service rate $\mu_{i,s}\left(\alpha,\Gamma\right)$ of the $j^{th}$ \textit{UE2BS communications} in the Multi-UE cellular scenario is given by (\ref{eqn_mu1c_oneRate_MU}) for a given fraction vector $\alpha$ and a given priority policy $\Gamma$. Note that $1 \leq j \leq U$.

\vspace{5pt}
 
 \subsubsection*{\textbf{\textit{Step 3}}} \textit{Characterization of the multi-UE stability region}

\vspace{5pt}
 
Combining the results of the previous steps, we deduce the exact stability region of the multi-UE scenario.Supposing that the arrival and service processes of the queues are strictly stationary and ergodic then their stability which is determined using \textbf{Loyne\textquoteright s} criterion is given by the condition that the average arrival rate is smaller than the average service rate.

The following procedure is pursue for capturing the stability region of the scenario. We consider the set of all the possible priority policies $ \Omega_{\Gamma}$ with $|\Omega_{\Gamma}|=\left(K+U\right)!$ ). Then, for each priority policy, we vary $\bm{\alpha} \in \left[0,\bm{\alpha}^*\right]$ in order to guarantee the stability of the Q$_{i,BS}$ queues (with $1\leq i\leq K$). For each value of $\alpha$, we find the probabilities that Q$_{i,BS}$ (with $1\leq i\leq K$) are empty in order to deduce the service rates of the queues in the system. This procedure is applied for all the priority policies $\Gamma \in \Omega_{\Gamma}$.

Therefore, the stability region for the multi-UE cellular scenario is characterized by the set of mean arrival rates $\bm{\lambda} \in \mathcal{R}_c$ such that:
\[
\mathcal{R}_{c}=co\left( \bigcup\limits_{\Gamma \in \Omega_{\Gamma}}\bigcup\limits_{\alpha \in \left[0,\alpha^* \right]}\lbrace \bm{\mu}\left(\bm{\alpha},\Gamma \right) \rbrace\right)
\]

where the elements of $\bm{\mu} \left( \Gamma \right)$ which are  $\mu_{i,s}\left( \Gamma\right)$ and $\mu_{j,u}\left( \Gamma\right)$ (for $1 \leq i\leq K$, $1 \leq j\leq U$) are respectively given by (\ref{eqn_mu1c_oneRate_MU}) and (\ref{eqn_mu2c_oneRate_MU}).

\subsubsection*{\textbf{\textit{Step 4}}} \textit{Arrival rates within the stability region is equivalent to the stability of the system of queues}

Proving that having a set of mean arrival rates $\bm{\lambda}$ in $\tilde{\mathcal{R}}_{c}$  is equivalent to the stability of the system queue is identical to the demonstration done in step 5 of the appendix \ref{stab_real_twoRate}. For brevity, we remove this proof to avoid repetition.

\subsection{Proof of theorem \ref{th2_oneRate_MU} \label{proof_oneRate_MU_th2}}
In order to reduce the computation complexity of the stability region for the cellular scenario of the multi-UE scenario, we start by reducing the set of $\alpha$ that is studied for characterizing this stability region.  We prove that considering only the border values of each fraction parameter $\alpha_i$ (which corresponds to $0$ and $\alpha_i^*$ given by equation (\ref{eq.alpha_opt_MUE})) is sufficient for characterizing the stability region $\mathcal{R}_c$. Hence, the procedure for computing the stability region  $\mathcal{R}_{c}$, as shown in appendix \ref{proof_oneRate_MU}, remains unaffected thus for brevity only the modifications will be presented in this proof.

Here, we prove that there is no need to vary each $\alpha_i \in \left[0,\alpha_i^* \right]$ for finding the stability region $\mathcal{R}_{c}$; but it is ample to consider only the border values of $\alpha_i \in \lbrace0,\alpha_i^*\rbrace$. For each $\alpha_i$ (for all $1\leq i \leq K$), only these two values $\lbrace 0, \alpha_i^*\rbrace$  have to be taken into account, thus the computation complexity of the stability region $\mathcal{R}_{c}$ decreases largely due to the fact that there is no need anymore to vary $\alpha_i$ within $\in \left[0,\alpha_i^*\right]$. Let us verify that for each fraction parameter $\alpha_i$ (that corresponds to the $i^{th}$ \textit{UE2UE communication}), it is sufficient to consider these two values of $\alpha_i$: $\lbrace 0,\alpha_i^*\rbrace$ for characterizing the exact stability region $\mathcal{R}_{c}$. To do so, we prove that, for all $1 \leq i\leq K$, the service rate vector $\bm{\mu}\left(\alpha_1,..,\alpha_i, ... , \alpha_K, \Gamma \right)$ can be written as a convex combination of $\bm{\mu}\left(\alpha_1,..,0, ... , \alpha_K, \Gamma \right)$  and $\bm{\mu}\left(\alpha_1,..,\alpha_i^*, ... , \alpha_K, \Gamma \right)$. \\

The following approach is applied separately over all the fraction parameters $\alpha_i$ (with $1 \leq i\leq K$). The mathematical form of the convex combination consists of verifying that for each $\alpha_i \in \left[0,\alpha_i^*\right]$ it exists a $\gamma_i \in \left[ 0 , 1\right]$ such that:
\begin{equation}
\label{eq.ConvComMu}
\bm{\mu}\left(\alpha_1,..,\alpha_i, ... , \alpha_K, \Gamma \right)=\gamma_i\bm{\mu}\left(\alpha_1,..,0, ... , \alpha_K, \Gamma \right)+(1-\gamma_i)\bm{\mu}\left(\alpha_1,..,\alpha_i^*, ... , \alpha_K, \Gamma \right)
\end{equation}

Expressing $\bm{\mu}\left(\alpha_1,..,\alpha_i, ... , \alpha_K, \Gamma \right)$ as a convex combination of $\bm{\mu}\left(\alpha_1,..,0, ... , \alpha_K, \Gamma \right)$ and \\ $\bm{\mu}\left(\alpha_1,..,\alpha_i^*, ... , \alpha_K, \Gamma \right)$ means that, for all $\alpha_i \in \left[0,\alpha_i^* \right]$, the service vector $\bm{\mu}\left(\alpha_1,..,\alpha_i, ... , \alpha_K, \Gamma \right)$ is an interior point of the convex hull $co \lbrace \bm{\mu}\left(\alpha_1,..,0, ... , \alpha_K, \Gamma \right),\bm{\mu}\left(\alpha_1,..,\alpha_i^*, ... , \alpha_K, \Gamma \right) \rbrace$. Hence, these two corner points $\bm{\mu}\left(\alpha_1,..,0, ... , \alpha_K, \Gamma \right)$ and $\bm{\mu}\left(\alpha_1,..,\alpha_i^*, ... , \alpha_K, \Gamma \right)$ are sufficient for describing the convex hull over all the points $\bm{\mu}\left(\alpha_1,..,\alpha_i, ... , \alpha_K, \Gamma \right)$ with $\alpha_i$ varying within $\left[0,\alpha_i^* \right]$.

In order to find the value of $\gamma_i$, we start by analyzing the expressions of $\bm{\mu}\left(\alpha_1,..,\alpha_i, ... , \alpha_K, \Gamma \right)$ and especially how the parameter $\alpha_i$ affects this formula. Indeed, the service rate of each communication in the system is affected by the parameter $\alpha_i$ in one of the two following forms: 
\[
\text{\textbf{First form}: }  1+ \frac{\left(1-p_{i,s}\right)p_{i,d}}{-2\alpha_ip_{i,s}p_{i,d}+\left(1+p_{i,s}\right)p_{i,d}}:= F_1\left(\alpha_i\right)
\]
\[
\text{\textbf{Second form}:  } 1- \frac{p_{i,s}p_{i,d}}{-2\alpha_ip_{i,s}p_{i,d}+\left(1+p_{i,s}\right)p_{i,d}}:= F_2\left(\alpha_i\right)
\]
The first form $F_1\left(\alpha_i\right)$ appears only at the level of the service rate of the $i^{th}$ \textit{UE2UE communication} ($\mu_{1,i} \left( \bm{\alpha}, \Gamma \right)$) while the second form $F_2\left(\alpha_i\right)$ appears at the level of the service rate of all the other communications.  $\bm{\mu} \left(\bm{\alpha},\Gamma\right)$ can be written as in equation (\ref{eq.ConvComMu}) if and only if both forms ($F_1\left(\alpha_i\right)$ and $F_2\left(\alpha_i\right)$) respectively verified the equation (\ref{eq.ConvComMu}). In other terms, we need to verify that $F_1\left(\alpha_i\right)$ (respectively $F_2\left(\alpha_i\right)$) can be written as a convex combination of $F_1\left(0\right)$ and $F_1\left(\alpha^*_i\right)$ (respectively  $F_2\left(0\right)$ and $F_2\left(\alpha^*_i\right)$) with $\gamma_i$ as coefficient:
\[
F_1\left(\alpha_i\right)=\gamma_i F_1\left(0\right)+\left(1-\gamma_i\right)F_1\left(\alpha_i^*\right)
\]
\[
F_2\left(\alpha_i\right)=\gamma_i F_2\left(0\right)+\left(1-\gamma_i\right)F_2\left(\alpha_i^*\right)
\]
Knowing that $\alpha \in \left[ 0,\alpha_i^*\right]$, we can write $\alpha_i=\lambda\alpha_i^*$ with $\lambda \in \left[ 0,1\right]$. Let us consider the following notation:
\begin{equation}
\label{eq.gamma_i}
\gamma_i=\frac{\left(1+p_{i,s}\right)p_{i,d}\left(1-\lambda\right)}{\left(1+p_{i,s}\right)p_{i,d}-\lambda\left(p_{i,s}p_{i,d}+p_{i,d}-p_{i,s}\right)}
\end{equation}

\[
\Rightarrow 1-\gamma_i=\frac{p_{i,s}\left(1-\lambda\right)}{\left(1+p_{i,s}\right)p_{i,d}-\lambda\left(p_{i,s}p_{i,d}+p_{i,d}-p_{i,s}\right)}
\]
We show that $0 \leq \gamma_i \leq 1$. First, both numerator and denominator are positives then $\gamma_i \geq 0$. Second, 
\[
-\lambda\left(1+p_{i,s}\right)p_{i,d}\leq-\lambda\left(1+p_{i,s}\right)p_{i,d}+\lambda p_{i,s}
\]
\[
\Rightarrow\left(1+p_{i,s}\right)p_{i,d}\left(1-\lambda\right)\leq\left(1+p_{i,s}\right)p_{i,d}-\lambda\left(p_{i,d}+p_{i,s}p_{i,d}-p_{i,s}\right) \Rightarrow\gamma_i\leq1
\]
Knowing that $\gamma_i$ (given by (\ref{eq.gamma_i})) is within $\left[ 0,1\right]$, we can easily verify the following:
\[
F_1\left(\alpha_i\right)=\gamma_i F_1\left(0\right)+\left(1-\gamma_i\right) F_1\left(\alpha_i^*\right)
\]
\[
F_2\left(\alpha_i\right)=\gamma_i F_2\left(0\right)+\left(1-\gamma_i\right) F_2\left(\alpha_i^*\right)
\]
where $F_1\left(0\right)=1+\frac{\bar{p}_{i,s}}{1+p_{i,s}}$, $F_1\left(\alpha_i^*\right)=1+\frac{\bar{p}_{i,s}p_{i,d}}{p_{i,s}}$, $F_2\left(0\right)=1-\frac{p_{i,s}}{1+p_{i,s}}$ and $F_1\left(\alpha_i^*\right)=\bar{p}_{i,d}$.\\

Hence, for all $1 \leq i \leq K$ the service rate vector $\bm{\mu}$ can be written as in equation (\ref{eq.ConvComMu}) with $\gamma_i$ given by (\ref{eq.gamma_i}). Therefore, we show that taking the convex hull of the vector $\bm{\mu}$ over all $\alpha_i \in \left[0,\alpha_i^*\right]$ is equivalent to considering the convex hull over $\alpha_i \in \lbrace 0,\alpha_i^* \rbrace $. This remains true for all the $1 \leq i \leq K$. \\

The same procedure, as in appendix \ref{proof_oneRate_MU}, is pursued for capturing the less complex stability region of the scenario: (i) we consider the set of all the possible priority policies $ \Omega_{\Gamma}$ with $|\Omega_{\Gamma}|=\left(K+U\right)!$ ), (ii) for each priority policy, we vary $\alpha_i \in \lbrace 0, \alpha_i^*\rbrace$ (with $1\leq i\leq K$), (iii) for each value of $\alpha$, we find the probabilities that Q$_{i,BS}$ (with $1\leq i\leq K$) are empty in order to deduce the service rates of the queues in the system, (iv) this procedure is applied for all the priority policies $\Gamma \in \Omega_{\Gamma}$. The only difference with that in appendix \ref{proof_oneRate_MU} is at the level of the step (ii) where we consider a limited set of values of the fraction vector $\bm{\alpha}$ compared to that in appendix \ref{proof_oneRate_MU}. We see that the complexity in this case is reduced to: $2^K\left(K+U\right)!$.\\

Thus, for the multi-UE cellular scenario, the stability region $\mathcal{R}_c$ can be simplified as it follows:
\[
\mathcal{R}_{c}=co\left(\bigcup\limits_{\Gamma \in \Omega_{\Gamma}}\bigcup\limits_{\alpha \in \mathbb{S}_{\alpha}}\lbrace \bm{\mu}\left(\bm{\alpha},\Gamma \right) \rbrace\right)
\]
with 
\[
\mathbb{S_{\alpha}}=\lbrace \bm{\alpha} \, | \, \alpha_i \in \lbrace 0, \alpha_i^*=\frac{p_{i,d}-p_{i,s}+p_{i,s}p_{i,d}}{2p_{i,s}p_{i,d}}  \rbrace \,\,\,\forall\, 1 \leq i \leq K \rbrace
\]
where the elements of $\bm{\mu} \left( \Gamma \right)$ which are  $\mu_{i,s}\left( \Gamma\right)$ and $\mu_{j,u}\left( \Gamma\right)$ (for $1 \leq i\leq K$, $1 \leq j\leq U$) are respectively given by (\ref{eqn_mu1c_oneRate_MU}) and (\ref{eqn_mu2c_oneRate_MU}).

\subsection{Proof of theorem \ref{th3_oneRate_MU} \label{proof_oneRate_MU_th3}}

Similarly to the proof in appendix \ref{proof_oneRate_MU_th2}, the demonstration here follows the same steps as in the appendix \ref{proof_oneRate_MU}. The only difference is at the level of the set of the considered priority policies that will be studied for characterizing the stability region. Actually, in appendix \ref{proof_oneRate_MU} we consider all the possible priority policies $\Gamma \in \Omega_{\Gamma}$ with $|\Omega_{\Gamma}|=\left(K+U\right)!$ while in this proof we reduce the subset of the considered priority policies to $\Omega_{\Gamma}^{K_0} \subset \Omega_{\Gamma}$ where $|\Omega_{\Gamma}^{K_0}|=\frac{\left(K+U\right)!}{\left(K+U-K_0\right)!}$ with $K_0 <<<< K+U$.We denote by $p_s=p_{i,s}=p_{j,u}$ and $p_d=p_{i,d}$ overall  $1 \leq i \leq K$ and $1 \leq j  \leq U$.\\

From appendix \ref{proof_oneRate_MU_th2} we can see that finding the stability region of the cellular symmetric case involves the consideration of $|\Omega_{\Gamma}|=\left(K+U\right)!$ policies and $2^{K}$ values of the fraction vector. On the aim of reducing this  $2^{K}\left(K+U\right)!$ complexity, we remark that it exists a $K_0 \in \mathbb{N}$ such than when the user has a priority less than $K_0$ then the improvement in terms of performance (i.e. service rate of its queue) is small. Hence, we avoid the search among all the priority policies $\Omega_{\Gamma}$ and we limit the search to the set $\Omega_{\Gamma}^{K_0}$ which consists of the priority policies of all the subsets of $K_0$ elements within the $K+U$ communications. Thus, the computational complexity is reduced to $2^{K_0}\frac{\left(K+U\right)!}{\left(K+U-K_0\right)!}$.\\

Considering $\epsilon$ as the preciseness parameter, we pursue the following procedure for finding the $\epsilon-$ approximation of the stability region of the multi-UE symmetric case: \textbf{step 1} gives the best service rate that a user can have when it is at the $k^{th}$ priority level, \textbf{step 2} deduces the priority level $K_0\left(\epsilon\right)$ (as function of the percentage of preciseness $\epsilon$) such that beyond this priority level the error of the users' service rate is bounded by $\epsilon$; this enables us to deduce the set of the priority policies $\Omega_{\Gamma}^{K_0}$ that is sufficient for describing the $\epsilon$-approximation of the stability region, \textit{\textbf{step 3}} combines the results of the previous steps to provide the $\epsilon$ close upper bound of the stability region of the symmetric case of the multi-UE scenario. \\

\subsubsection*{\textbf{\textit{Step 1 }}} \textit{The best service rate of a user at a priority level $k$}\\

We consider the symmetric case of the multi-UE scenario with $K$ \textit{UE2UE communications} and  $U$ \textit{UE2BS communications} ($2K+U$ users). A user with a priority level $k$ has, at the best case, a service rate equals to:
\begin{equation}
\mu^k=r_1p_s\bar{p}_s^{k-1}
\label{eq.BestCasemu}
\end{equation}

This service rate corresponds to the most optimistic case where the $k^{th}$ user as well as all the more prioritized UEs ($1 \leq i \leq k$) have \textit{UE2BS communications}. Actually, considering that the user at priority level $k$ has $K_1$ more prioritized \textit{UE2UE communications} and $K_2$ more prioritized \textit{UE2BS communications} then from (\ref{eqn_mu1c_oneRate_MU}) and (\ref{eqn_mu2c_oneRate_MU}) we deduce the following best service rates:\\
\begin{itemize}
\item If the $k^{th}$ priority level corresponds to a \textit{UE2UE communication} then its most optimist service rate corresponds to $\alpha_k=\alpha_k^*$ and $\alpha_i=0$ for all $1 \leq i \leq K_1$ :
\[
\mu^k_{s}=\frac{1}{2}r_1{p}_s\left(1+\frac{\bar{p_s}p_d}{p_s}\right)\bar{p}_s^{K_1+K_2}\left(\frac{1}{1+p_s}\right)^{K_1}
\]

\item If the $k^{th}$ priority level corresponds to a \textit{UE2BS communication} then its most optimist service rate corresponds to $\alpha_i=0$ for all $1 \leq i \leq K_1$ :
\[
\mu^k_{u}=r_1{p_s}\bar{p}_s^{K_1+K_2}\left(\frac{1}{1+p_s}\right)^{K_1}
\]
\end{itemize}

\newpage
Furthermore, a more optimist case corresponds to the case where all the more prioritized users having \textit{UE2BS communications}, which means $K_1=k-1$ and $K_2=0$. Then:
\begin{itemize}
\item If the $k^{th}$ is participating to a \textit{UE2UE communication} then its most optimist service rate is:
\[
\mu^k_{s}=\frac{1}{2}r_1{p_s}\left(1+\frac{\bar{p_s}p_d}{p_s}\right)\bar{p}_s^{k-1}
\]

\item If the $k^{th}$ is participating to a \textit{UE2BS communication} then its most optimist service rate is:
\[
\mu^k_{u}=r_1{p_s}\bar{p}_s^{k-1}
\]
\end{itemize}
It is obvious that $\mu^k_{u}\geq \mu^k_{s}$ since $\frac{1}{2}\left(1+\frac{\bar{p_s}p_d}{p_s}\right) \leq 1$. Hence, the service rate of a user at a priority level $k$ (denoted by $\mu^k$) is maximized as it follows: 
\[
\mu^k\leq r_1{p_s}\bar{p}_s^{k-1}:=\mu^{k*}
\]
We deduce that the service rate of any user, at priority level $k$, is lower or equal to $\mu^{k*} :=r_1{p_s}\bar{p}_s^{k-1}$. Our approximation is based on finding the priority level $K_0$ such that the maximum service rate $\mu^{K_0*}= r_1{p_s}\bar{p}_s^{K_0-1}$ at this priority level is lower than $\epsilon$. Then, we assume that the service rates of the communications that have a higher priority level (greater than $K_0$) are equal to zero.

\subsubsection*{\textbf{\textit{Step 2 }}} \textit{The priority policies subset $\Omega_\Gamma^{K_0}$ as function of $\epsilon$}\\

The level $K_0$ corresponds to the priority policy level at which the service rate is negligible which means lower than $\epsilon$. Knowing that the service rate of a user at a priority level $k$ is bounded by equation (\ref{eq.BestCasemu})), hence for the priority level $K_0$ we can write:
\[
\mu^{K_0*}=r_1p_s\bar{p}_s^{K_0-1} \leq \epsilon \Rightarrow  K_0=\left\lceil 1+\frac{\log\left(\frac{\epsilon}{r_1p_s}\right)}{\log\left(\bar{p}_s\right)}\right\rceil
\]
We should be aware that $K_0$ is limited by the total number of communications: ($K+U$). Hence,
\begin{equation}
\label{eq.K0_2}
K_0=min\left\{K+U,\left\lceil 1+\frac{\log\left(\frac{\epsilon}{r_1p_s}\right)}{\log\left(\bar{p}_s\right)}\right\rceil\right\}
\end{equation}

Thus, we define  $\Omega_\Gamma^{K_0}$ as it follows:
\begin{itemize}
\item We consider all the possible sets of $K_0$ communications within the all existing $K+U$ communications. In total, there is $\frac{\left(K+U\right)!}{\left(K+U-K_0\right)!K_0!}$ subsets of $K_0$ elements. \\
\item Then for each of the above subsets, we find all the possible priority sorting of these $K_0$ communications. We know that $K_0!$ is the number of the possible permutation of the $K_0$ elements of each subset.\\
\end{itemize}
We deduce that $\Omega_\Gamma^{K_0}$ , the set of all the possible priority policies of all the subset of $K_0$ elements within the total of $K+U$ communications, consists of $\frac{\left(K+U\right)!}{\left(K+U-K_0\right)!}$ elements.\\
 
\subsubsection*{\textbf{\textit{Step 3 }}} \textit{Characterizing the approximated stability region}\\

The same procedure, as in appendix \ref{proof_oneRate_MU_th2}, is pursued for capturing the $\epsilon$-approximated stability region of multi-UE symmetric case. However, here the set of the studied priority policies is restricted in order to reduce the complexity and proposing an $\epsilon$ close upper bound of the exact stability region. We proceed as it follows : (i) for a given $\epsilon$ (for a given precision), we find $K_0$ from equation (\ref{eq.K0_2})  and we deduce the subset $ \Omega_{\Gamma}^{K_0}$ of the required priority policies for characterizing the $\epsilon$ upper bound of the exact stability region (see \textbf{\textit{step 2}} where $|\Omega_{\Gamma}^{K_0}|=\frac{\left(K+U\right)!}{\left(K+U-K_0\right)!}$ ), (ii) for each priority policy $\Gamma \in \Omega_{\Gamma}^{K_0}$, we vary $\alpha_i \in \lbrace 0, \alpha_i^*\rbrace$ (with $1\leq i\leq K$), (iii) for each value of $\alpha$, we find the probabilities that the queues Q$_{i,BS}$ (with $1\leq i\leq K$) are empty in order to deduce the service rates of the queues in the system. The main difference between this procedure and the one in appendix \ref{proof_oneRate_MU_th2} is at the level of the step (i) where we consider a limited set of priority policies $\Gamma \in \Omega_{\Gamma}^{K_0}$ and not all the possible priority policies $\Omega_{\Gamma}^{K_0}$. For these priority policies we consider that the service rates of the communications, that have a priority level higher than $K_0$, are equal to zero.\\

We define $\mu^k_{i}$ as the service rate of the $i^{th}$ communication (with $1 \leq i \leq K+U$) when it has a priority level $k$. Hence, the approximated model (with $\tilde{\bm{\mu}}$ denotes the service rate vector) consists of considering the following: \\
\begin{itemize}
\item $\tilde{\mu}_i^k=\mu_i^k$ for $1 \leq i \leq K+U$ and for all $k \leq K_0$.

\item $\tilde{\mu}_i^k=0$ for $1 \leq i \leq K+U$ and for all $k \geq K_0$.\\
\end{itemize}
Thus, we can deduce that the service rate of all the communications  ($1 \leq i \leq K+U$) for any priority policy $\Gamma \in \Omega_\Gamma ^{K_0}$ is bounded as it follows:
\begin{equation}
\label{eq.boundsmu}
\tilde{\mu_i} \leq {\mu_i}  \leq \tilde{\mu_i} + \epsilon
\end{equation}

The above expression can be proved by verifying that it remains true for both cases: (i) priority levels $k \leq K_0$ and priority levels $k \geq K_0$.\\

A similar analysis as the one done in \textbf{\textit{step 4}} of appendix  \ref{th_comp_gen}, let us deduce the bounds of the stability region based on the bounds of the service rates (equation (\ref{eq.boundsmu})). Thus, for the symmetric case of the multi-UE cellular scenario , the exact stability region can be bounded as it follows:
\[
\tilde{\mathcal{R}}_{c} \subseteq  {\mathcal{R}}_{c} \subseteq \tilde{\mathcal{R}}_{c}+\epsilon
\]
Where the $\epsilon$ close upper bound of the exact stability region $\tilde{\mathcal{R}}_{c}$ is given by the set of $ \bm{\lambda} \in\tilde{{\mathcal{R}}}_c$ such that:
\[
\tilde{\mathcal{R}}_{c}=co\left(\bigcup\limits_{\Gamma \in \Omega_{\Gamma}^{K_0}}\bigcup\limits_{\alpha \in \mathbb{S}_{\alpha}^{K_0}}\lbrace \bm{\mu}\left(\bm{\alpha},\Gamma \right) \rbrace
\right)
\]
with 
\[
K_0=min\left\{K+U,\left\lceil 1+\frac{\log\left(\frac{\epsilon}{r_1p_s}\right)}{\log\left(\bar{p}_s\right)}\right\rceil\right\}
\]
where $\Omega_{\Gamma}^{K_0} $ is the set of the feasible priority policies of $K_0$ communications among all the $K+U$ communications, $\mathbb{S}_{\alpha}^{K_0}$ is the set of the fraction vector $\bm{\alpha}$ where each element $\alpha_i \in \lbrace 0,\alpha_i^* \rbrace $ and the elements of $\bm{\mu} \left( \Gamma \right)$ which are  $\mu_{i,s}\left( \Gamma\right)$ and $\mu_{j,u}\left( \Gamma\right)$ (for $1 \leq i\leq K$, $1 \leq j\leq U$) are respectively given by (\ref{eqn_mu1c_oneRate_MU}) and (\ref{eqn_mu2c_oneRate_MU}).
\newpage
\bibliography{article1_Bib}

\begin{thebibliography}{10}
\providecommand{\url}[1]{#1}
\csname url@rmstyle\endcsname
\providecommand{\newblock}{\relax}
\providecommand{\bibinfo}[2]{#2}
\providecommand\BIBentrySTDinterwordspacing{\spaceskip=0pt\relax}
\providecommand\BIBentryALTinterwordstretchfactor{4}
\providecommand\BIBentryALTinterwordspacing{\spaceskip=\fontdimen2\font plus
\BIBentryALTinterwordstretchfactor\fontdimen3\font minus
  \fontdimen4\font\relax}
\providecommand\BIBforeignlanguage[2]{{%
\expandafter\ifx\csname l@#1\endcsname\relax
\typeout{** WARNING: IEEEtran.bst: No hyphenation pattern has been}%
\typeout{** loaded for the language `#1'. Using the pattern for}%
\typeout{** the default language instead.}%
\else
\language=\csname l@#1\endcsname
\fi
#2}}

\bibitem{Rongstability}
B.~Rong and A.~Ephremides, ``Cooperation above the physical layer: The case of
  a simple network,'' pp. 1789--1793, June 2009.

\bibitem{SimeoneStableThroughput}
O.~Simeone, Y.~Bar-Ness, and U.~Spagnolini, ``Stable throughput of cognitive
  radios with and without relaying capability,'' \emph{IEEE Transactions on
  Communications}, vol.~55, no.~12, pp. 2351--2360, Dec 2007.

\bibitem{szpankowski1992stabilty}
W.~Szpankowski, ``Stabilty conditions for some multiqueue distributed systems:
  Buffered random access systems,'' 1992.

\bibitem{GeorgiadisStabilityCapacity}
L.~Georgiadis, G.~S. Paschos, L.~Tassiulas, and L.~Libman, ``Stability and
  capacity through evacuation times,'' Sept 2012, pp. 347--351.

\bibitem{LuoThroughputCapacity}
J.~Luo and A.~Ephremides, ``On the throughput, capacity, and stability regions
  of random multiple access,'' \emph{IEEE Transactions on Information Theory},
  vol.~52, no.~6, pp. 2593--2607, June 2006.

\bibitem{Sendonaris2003}
A.~Sendonaris, E.~Erkip, and B.~Aazhang, ``User cooperation diversity. part i.
  system description,'' \emph{IEEE Transactions on Communications}, vol.~51,
  no.~11, pp. 1927--1938, Nov 2003.

\bibitem{Mohajer2009}
S.~Mohajer, S.~N. Diggavi, and D.~N.~C. Tse, ``Approximate capacity of a class
  of gaussian relay-interference networks,'' in \emph{2009 IEEE International
  Symposium on Information Theory}, June 2009, pp. 31--35.

\bibitem{Sadek2007}
A.~K. Sadek, K.~J.~R. Liu, and A.~Ephremides, ``Cognitive multiple access via
  cooperation: Protocol design and performance analysis,'' \emph{IEEE
  Transactions on Information Theory}, vol.~53, no.~10, pp. 3677--3696, Oct
  2007.

\bibitem{Pappas2016}
N.~Pappas, M.~Kountouris, J.~Jeon, A.~Ephremides, and A.~Traganitis, ``Effect
  of energy harvesting on stable throughput in cooperative relay systems,''
  \emph{Journal of Communications and Networks}, vol.~18, no.~2, pp. 261--269,
  April 2016.

\bibitem{Kashef2016}
M.~Kashef and A.~Ephremides, ``Optimal partial relaying for energy-harvesting
  wireless networks,'' \emph{IEEE/ACM Transactions on Networking}, vol.~24,
  no.~1, pp. 113--122, Feb 2016.

\bibitem{hong2007stability}
Y.-W. Hong, C.-K. Lin, and S.-H. Wang, ``On the stability region of two-user
  slotted aloha with cooperative relays,'' pp. 356--360, 2007.

\bibitem{Rong2012Cooperative}
B.~Rong and A.~Ephremides, ``Cooperative access in wireless networks: Stable
  throughput and delay,'' \emph{IEEE Transactions on Information Theory},
  vol.~58, no.~9, pp. 5890--5907, Sept 2012.

\bibitem{SaadawiAnalysis}
T.~Saadawi and A.~Ephremides, ``Analysis, stability, and optimization of
  slotted aloha with a finite number of buffered users,'' \emph{IEEE
  Transactions on Automatic Control}, vol.~26, no.~3, pp. 680--689, June 1981.

\bibitem{Sidi1984}
M.~Sidi and A.~Segall, \emph{Priority queueing systems with applications to
  packet-radio networks}.\hskip 1em plus 0.5em minus 0.4em\relax Berlin,
  Heidelberg: Springer Berlin Heidelberg, 1984, pp. 89--104.

\bibitem{Bordenave2012}
C.~Bordenave, D.~McDonald, and A.~Proutiere, ``Asymptotic stability region of
  slotted aloha,'' \emph{IEEE Trans. Inf. Theor.}, vol.~58, no.~9, pp.
  5841--5855, Sept. 2012.

\bibitem{szpankowski1984ergodicity}
W.~Szpankowski, ``Ergodicity aspects of multidimensional markov chains with
  application to computer communication system analysis,'' \emph{Modelling and
  Performance Evaluation Methodology}, pp. 297--319, 1984.

\bibitem{Szpankowski1988_}
------, ``Stability conditions for multidimensional queueing systems with
  computer applications,'' \emph{Operations Research}, vol.~36, no.~6, pp.
  944--957, 1988.

\bibitem{Rao2006}
R.~R.~Rao and A.~Ephremides, ``On the stability of interacting queues in a
  multi-access system,'' \emph{Information Theory, IEEE Transactions on},
  vol.~34, pp. 918 -- 930, 10 1988.

\bibitem{Jeon2011}
J.~Jeon and A.~Ephremides, ``The stability region of random multiple access
  under stochastic energy harvesting,'' in \emph{2011 IEEE International
  Symposium on Information Theory Proceedings}, July 2011, pp. 1796--1800.

\bibitem{Kompella2011}
S.~Kompella, G.~D. Nguyen, J.~E. Wieselthier, and A.~Ephremides, ``Stable
  throughput tradeoffs in cognitive shared channels with cooperative
  relaying,'' \emph{2011 Proceedings IEEE INFOCOM}, pp. 1961--1969, April 2011.

\bibitem{Pappas2013}
N.~Pappas, M.~Kountouris, and A.~Ephremides, ``The stability region of the
  two-user interference channel,'' in \emph{2013 IEEE Information Theory
  Workshop (ITW)}, Sept 2013, pp. 1--5.

\bibitem{Pappas2012}
N.~Pappas, J.~Jeon, A.~Ephremides, and A.~Traganitis, ``Optimal utilization of
  a cognitive shared channel with a rechargeable primary source node,''
  \emph{Journal of Communications and Networks}, vol.~14, no.~2, pp. 162--168,
  April 2012.

\bibitem{Borst2008}
S.~Borst, M.~Jonckheere, and L.~Leskel{\"a}, ``Stability of parallel queueing
  systems with coupled service rates,'' \emph{Discrete Event Dynamic Systems},
  vol.~18, no.~4, pp. 447--472, 2008.

\bibitem{vitale2015Perf}
C.~Vitale, G.~Rizzo, B.~Rengarajan, and V.~Mancuso, ``An analytical approach to
  performance analysis of coupled processor systems,'' in \emph{Teletraffic
  Congress (ITC 27), 2015 27th International}.\hskip 1em plus 0.5em minus
  0.4em\relax IEEE, 2015, pp. 89--97.

\bibitem{Vitale2015Modeling}
C.~Vitale, V.~Mancuso, and G.~Rizzo, ``Modelling d2d communications in cellular
  access networks via coupled processors,'' in \emph{2015 7th International
  Conference on Communication Systems and Networks (COMSNETS)}, Jan 2015, pp.
  1--8.

\bibitem{CIARDO19933}
G.~Ciardo and K.~S. Trivedi, ``A decomposition approach for stochastic reward
  net models,'' \emph{Performance Evaluation}, vol.~18, no.~1, pp. 37 -- 59,
  1993, analysis of Petri Net Performance Models.

\bibitem{Mainkar1995}
V.~Mainkar and K.~S. Trivedi, ``Fixed point iteration using stochastic reward
  nets,'' in \emph{Proceedings 6th International Workshop on Petri Nets and
  Performance Models}, Oct 1995, pp. 21--30.

\bibitem{loynes_1962}
R.~M. Loynes, ``The stability of a queue with non-independent inter-arrival and
  service times,'' \emph{Mathematical Proceedings of the Cambridge
  Philosophical Society}, vol.~58, no.~3, p. 497–520, 1962.

\bibitem{3GPP_TR36814}
G.~Standardization, ``Evolved universal terrestrial radio access (e-utra);
  further advancements for e-utra physical layer aspects,'' Release 9- 2014.

\bibitem{3GPP_TR36942}
------, ``Terrestrial radio access (e-utra); further advancements for e-utra
  physical layer aspects,'' Release 12- 2014.

\end{thebibliography}
\end{document}